\newtheorem{theorem}{Theorem}[section]
{
\theoremstyle{definition}
\newtheorem{definition}{Definition}[section]
}
\newtheorem{lemma}{Lemma}[section]
\icmltitlerunning{Dimensionality Reduction for Sum-of-Distances Metric}
\date{}
\renewcommand{\P}{\mathbb{P}}
\begin{document}
\twocolumn[
\icmltitle{Dimensionality Reduction for Sum-of-Distances Metric}




\icmlsetsymbol{equal}{*}

\begin{icmlauthorlist}
\icmlauthor{Zhili Feng}{cmu}
\icmlauthor{Praneeth Kacham}{cmu}
\icmlauthor{David P. Woodruff}{cmu}
\end{icmlauthorlist}

\icmlaffiliation{cmu}{ Carnegie Mellon University, Pittsburgh, USA}

\icmlcorrespondingauthor{Praneeth Kacham}{pkacham@cs.cmu.edu}

\icmlkeywords{Machine Learning, ICML}

\vskip 0.3in
]
\printAffiliationsAndNotice{}
\begin{abstract}
We give a dimensionality reduction procedure to approximate the sum of distances of a given set of $n$ points in $\R^d$ to any ``shape'' that lies in a $k$-dimensional subspace. Here, by ``shape'' we mean any set of points in $\R^d$. Our algorithm takes an input in the form of an $n \times d$ matrix $A$, where each row of $A$ denotes a data point, and outputs a subspace $P$ of dimension $O(k^{3}/\epsilon^6)$ such that the projections of each of the $n$ points onto the subspace $P$ and the distances of each of the points to the subspace $P$ are sufficient to obtain an $\epsilon$-approximation to the sum of distances to any arbitrary shape that lies in a $k$-dimensional subspace of $\R^d$. These include important problems such as $k$-median, $k$-subspace approximation, and $(j,l)$ subspace clustering with $j \cdot l \le k$. Dimensionality reduction reduces the data storage requirement to $(n+d)k^{3}/\epsilon^6$ from $\nnz(A)$.  
Here $\nnz(A)$ could potentially be as large as $nd$. Our algorithm runs in time $\nnz(A)/\epsilon^2 + (n+d)\poly(k/\epsilon)$, up to logarithmic factors. For dense matrices, where $\nnz(A) \approx nd$, we give a faster algorithm, that runs in time $nd + (n+d)\poly(k/\epsilon)$ up to logarithmic factors. Our dimensionality reduction algorithm can also be used to obtain $\poly(k/\epsilon)$ size coresets for $k$-median and $(k,1)$-subspace approximation problems in polynomial time.
\end{abstract}

\section{Introduction}
Machine learning models often require millions of high-dimensional data samples in order to train. For example, an image with moderate resolution can easily have more than a million pixels. It is crucial that we can decrease the size of the data to save on computational power. One way to decrease the size of the data is dimensionality reduction, where we project our data samples onto a low-dimensional subspace and perform the task on the low-dimensional points. Given a set of $n$ points $A = \{a_1,\ldots,a_n\}$ in $\R^{d}$, the projections of $A$ onto a subspace $P$ of $k$ dimensions needs only $k$ parameters for each point in the dataset. Thus the size of the data is proportional to $(n+d)k$, which can be much smaller than $nd$. Therefore if there exists a subspace $P$ of dimension $k$, where $k$ is much smaller than $n$ and $d$, and for which the projections of $A$ onto the subspace $P$ alone are sufficient to perform a certain a task on the dataset $A$, then we can achieve a significant reduction in the size of the data.

One very common task that requires dimensionality reduction is the shape-fitting problem. A problem instance is defined by a quadruple $(A, \mathcal{S}, \dist, f)$, where $A = \set{a_1,\ldots,a_n} \subseteq \R^d$ is a set of points, $\dist:\R^d\times \R^d \to \R_{\ge 0}$ is a metric which we will also refer to as the distance function, $\mathcal{S}$ is a collection of subsets in $\R^d$ which we call \textsl{shapes}, and a function $f : \R_{\ge 0} \rightarrow \R_{\ge 0}$. The task is to find a shape $S \in \mathcal{S}$ that minimizes $\sum_i f(\dist(a_i, S))$. The most common function $f$ used is $f(x) = x^2$ as it has a natural Frobenius norm interpretation for many tasks and has closed-form solutions for natural sets $\mathcal{S}$ of shapes. Recently, the function $f(x) = x$ has been considered as it is more robust to outliers than the function $f(x) = x^2$, meaning that it does not square the distance to an erroneous point, allowing the objective to fit more of the remaining (non-outlier) data points. 

The most common dimensionality reduction techniques include Principal Component Analysis (PCA) and the  Johnson-Lindenstrauss Transform (JL). PCA projects the original dataset onto a low-dimensional subspace for which the data variance is the largest. On the other hand, the JL transform provides a data-oblivious dimensionality reduction that preserves pairwise distances between points in the dataset.

\citet{feldman2013turning} show that if $P$ is the subspace spanned by the top $O(k/\epsilon^2)$ singular vectors of the data matrix $A$, which is given by PCA, then for any shape $S$ that lies in a $k$-dimensional space, the quantity $\sum_i \min_{s\in S}\|a_i-s\|_2^2$ can be approximated by $\sum_i \min_{s \in S}\|\P_Pa_i - s\|_2^2 + \sum_i \|a_i - \P_P a_i\|_2^2$, where $\P_Pa_i$ denotes the Euclidean projection of $a_i$ onto the subspace $P$, thereby giving a dimensionality reduction technique for the shape-fitting problem instantiated with $f(x) = x^2$, Euclidean norm distance function $\dist(x,y) = \|x- y\|_2$, and with $\mathcal{S}$ being the collection of any $k$-dimensional shape.


In this work, we concentrate on shape fitting problems with $\dist(x,y) =  \|x-y\|_2$ and $f(x) = x$. Unfortunately, both PCA and the JL transform are not known to work in this case. We give fast algorithms to find a subspace $P$ of $\widetilde{O}(k^{3}/\epsilon^6)$\footnote{We use $\widetilde{O}(f(n))$ notation to denote $O(f(n)\mathsf{polylog}(f(n)))$.} dimensions that allows us to compute a $(1 \pm \epsilon)$-approximation to $\sum_{i}\dist(x_i,S)$ for any shape $S$ that lies in a $k$-dimensional subspace. Examples of such shapes include all $k$-dimensional subspaces themselves, which corresponds to the subspace approximation problem, as well as all sets of $k$ points, which corresponds to the $k$-median problem. Our results also apply to the $(j,l)$-projective clustering problem, with $j \cdot l \le k$, where we seek to find $j$ subspaces, each of dimension at most $l$, so as to minimize the sum of distances of each input point to its nearest subspace among the $j$ that we have chosen. We also show empirically that we need fewer dimensions than our theoretical analysis predicts to obtain good approximations.

A coreset is another type of data structure to reduce the size of a data set $A$. Namely, a coreset $P$ is a data structure consuming a much smaller amount of memory than $A$, which can be used as a substitute for $A$ for any query $Y$ on $A$. For example, in the $k$-median problem, the query $Y=\{y_1,\ldots, y_k\}$ can be a set of $k$ points, and we want to find a coreset $P$ to obtain a $(1+\epsilon)$-approximation to $\sum_{i=1}^n\|a_i - y_{a_i}\|_2$, where $y_{a_i}$ is the closest point to $a_i$ in $Y$. Often, we want to construct a \textsl{strong coreset}, meaning with high probability, $P$ can be used in place of $A$ simultaneously for all possible query sets $Y$. If this is the case, then we can throw away the original dataset $A$, which saves us not only on computational power, but also on storage. 

  There is a long line of work which focuses on constructing coresets for subspace approximation with sum of squared distances loss function, as well as for the $k$-means problem (see, e.g., \cite{deshpande2006matrix, deshpande2007sampling, feldman-langberg, feldman2010coresets, feldman2013turning, varadarajan2012sensitivity, shyamalkumar2007efficient, badoiu2002approximate, chen2009coresets, feldman2012data, frahling2005coresets, frahling2008fast, har2007smaller, har2004coresets, langberg2010universal}). \citet{feldman2013turning} give the first coresets of size independent of $d$. For subspace approximation, they give strong coresets of size $O(k/\epsilon)$, and $\widetilde{O}(k^3/\epsilon^4)$  for $k$-means. \citet{cohen2015dimensionality} improve the result and give an input sparsity time algorithm to construct the coreset. 
  
  Later, \citet{sohler2018strong} give a strong coreset of size $\poly(k/\epsilon)$ for the $k$-median problem, as well as the subspace approximation problem with the sum of distances loss function, obtaining the first strong coresets independent of $n$ and $d$ for this problem. Their algorithm runs in $\widetilde O(\nnz(A)+(n+d) \cdot \poly(k/\epsilon)+\exp(\poly(k/\epsilon)))$ time. Recent work by \citet{makarychev2019performance} provides an oblivious dimensionality reduction for $k$-median to an $O(\epsilon^{-2}\log(k/\epsilon))$-dimensional space while preserving the cost of every clustering. This dimension reduction result can also be used to construct a strong coreset of size $\poly(k/\epsilon)$. 
  
\citet{sohler2018strong} gave an algorithm to compute first polynomial size coresets for $k$-median using their dimensionality reduction, albeit, with a running time exponential in $k,1/\epsilon$ as discussed. In a similar way, we can obtain $\tilde{O}(k^{4}/\epsilon^8)$ size coreset for $k$-median in polynomial time using our dimensionality reduction algorithm. In concurrent and independent work, \citet{huang-vishnoi} gave a polynomial time algorithm to compute a coreset of size $\tilde{O}(k/\epsilon^4)$. We stress that we can run the second stage in the coreset construction algorithm of \citet{huang-vishnoi} on a coreset of size $\tilde{O}(k^{4}/\epsilon^8)$ to obtain a coreset of size $\tilde{O}(k/\epsilon^4)$ just as in \cite{huang-vishnoi}. Also, their techniques cannot be extended to give an efficient dimensionality reduction algorithm to approximate the sum-of-distances metric, as their coreset construction arguments are based on an existential dimensionality reduction result. As an aside, we observe that the coreset construction algorithm of \citet{huang-vishnoi} can be implemented to have a running time of $\tilde{O}(\nnz(A) + (n+d)\poly(k/\epsilon))$, improving their $\tilde{O}(\nnz(A) \cdot k)$ time algorithm. To our knowledge, this is the first input sparsity time algorithm to construct a coreset for the fundamental $k$-median problem. We give a proof of our observation in the supplementary material.
  
The size of the coresets constructed based on the sensitivity sampling framework of \citet{feldman-langberg} depend on the dimension of the data. An important consequence of our dimensionality reduction algorithm is that the coresets constructed on the data after first reducing its dimensionality can be much smaller for many important problems.
  
\subsection{Our Results}
Our main contribution is that we obtain the first polynomial time, in fact near-linear time, dimension reduction algorithm that returns a $\poly(k/\epsilon)$-dimensional subspace such that the projections of the input points to this subspace, as well as the distances of the points to this subspace, can be used to compute a $(1 \pm \epsilon)$-approximation to the sum of distances of the set $A$ to any $k$-dimensional shape $S$.

\begin{theorem}[Dimensionality Reduction]
Given $A \in \R^{n \times d}$ and $0 < \epsilon < 1$, there exists an algorithm that runs in time $\tilde{O}(\nnz{(A)}/\epsilon^2 + (n+d)\poly(k/\epsilon))$ and outputs a subspace $P$ of dimension $\tilde{O}(k^{3}/\epsilon^6)$ such that, with probability $\ge 2/3$, for \emph{any} shape $S \subseteq \R^d$ that lies in a $k$-dimensional subspace,
\begin{equation*}
			\sum_{i}\sqrt{\dist(\P_P a_i, S)^2 + \dist(a_i, P)^2} = (1 \pm \epsilon)\sum_i \dist(a_i, S).
\end{equation*}
\label{thm:sparse-main-result-informal}
\end{theorem}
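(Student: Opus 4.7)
The plan is to construct $P$ via an iterative, Sohler-Woodruff-style scheme sped up by $\ell_1$ oblivious subspace embeddings, and then verify the claimed Pythagorean-type approximation by a two-sided expansion of $\langle \P_P^\perp a_i, \P_P^\perp s\rangle$, which turns out to be the only obstruction to the identity $\sqrt{A_i^2 + B_i^2} = C_i$.

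For the construction, I would start with $P_0=\{0\}$ and, in round $t$, compute an approximate $k$-dimensional subspace approximation $V_t$ of dimension $\widetilde{O}(k^2/\epsilon^4)$ to the residual $A(I-\P_{P_{t-1}})$ under the sum-of-distances loss, using an input-sparsity-time $\ell_1$ oblivious sketch composed with a regression solve on the compressed sketch. Setting $P_t=P_{t-1}+V_t$, terminate as soon as $\sum_i\dist(a_i,P_t)$ fails to drop by at least a $(1-\epsilon)$ factor over the previous round. Since the residual $\ell_{2,1}$ mass shrinks multiplicatively at every successful round, termination occurs after $\widetilde{O}(k/\epsilon^2)$ rounds, yielding a final $P$ of dimension $\widetilde{O}(k^3/\epsilon^6)$.

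For the approximation guarantee, fix a $k$-dimensional shape $S$ inside a $k$-dimensional subspace $V_S$ and write $A_i=\dist(\P_P a_i,S)$, $B_i=\dist(a_i,P)$, $C_i=\dist(a_i,S)$. Let $s_i^\star\in S$ attain $C_i$ and $s_i\in S$ attain $A_i$. A direct orthogonal-decomposition calculation gives the identity
\[
\|\P_P a_i - s\|_2^2 + B_i^2 - \|a_i-s\|_2^2 \;=\; 2\langle \P_P^\perp a_i, \P_P^\perp s\rangle
\]
for every $s$. Plugging in $s=s_i^\star$ yields $A_i^2+B_i^2 \le C_i^2 + 2\|\P_P^\perp a_i\|_2\|\P_P^\perp s_i^\star\|_2$, and plugging in $s=s_i$ yields the matching lower bound $A_i^2+B_i^2 \ge C_i^2 - 2\|\P_P^\perp a_i\|_2\|\P_P^\perp s_i\|_2$; combined with $\sqrt{A_i^2+B_i^2}\ge\|\P_P^\perp a_i\|_2$, these two bounds give the per-point inequality $|\sqrt{A_i^2+B_i^2}-C_i|\le \|\P_P^\perp s_i^\star\|_2+\|\P_P^\perp s_i\|_2$. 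Summing, using $s\in V_S$, and noting that $\P_P^\perp V_S$ has dimension at most $k$, the total error can be bounded by $\sum_i\dist(a_i,P+V_S)$ via a standard $\ell_1$-subspace embedding on the rank-$k$ matrix $\P_P^\perp A\P_{V_S}$; the stopping rule then forces $\sum_i\dist(a_i,P+V_S)\le\epsilon\sum_i C_i$.

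The main obstacle is the last step, namely the uniform guarantee over \emph{all} $k$-dimensional shapes from a single data-dependent $P$. This reduces to showing that after the iteration halts, no extra $k$-dimensional subspace can shrink the sum-of-distances residual to $P$ by more than a $(1-\epsilon)$ factor; this is proved by contradiction against the stopping criterion, since any such subspace would have been approximately recovered by the subspace-approximation call in the next round. The input-sparsity running time follows because each iteration only touches $A$ through an $\ell_1$ sketch of $\poly(k/\epsilon)$ rows, so the dense $n\times d$ residual is never materialized.
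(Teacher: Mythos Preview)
Your proposal has two genuine gaps, both in the final paragraph.

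First, the termination bound does not follow from multiplicative shrinkage. After one round you do have $\sum_i\dist(a_i,P_1)\le(1+\epsilon)\,\textnormal{SubApx}_{k,1}(A)$, but nothing prevents subsequent rounds from continuing to shrink the residual by a $(1-\epsilon)$ factor indefinitely: the cost $\sum_i\dist(a_i,P_t)$ has no positive lower bound independent of $A$ (it can drop toward $\textnormal{SubApx}_{2k,1}(A)$, then $\textnormal{SubApx}_{3k,1}(A)$, and so on), so ``multiplicative shrinkage'' does not yield a $\poly(k/\epsilon)$ round bound. The paper sidesteps this by running a \emph{fixed} number $O(1/\epsilon)$ of rounds and using a telescoping/averaging argument: since $\sum_{t\ge 2}\bigl(\|A(I-\P_{P_{t-1}})\|_{1,2}-\|A(I-\P_{P_t})\|_{1,2}\bigr)\le(1+\epsilon)\,\textnormal{SubApx}_{k,1}(A)$, a random stopping index $i^*$ gives a small difference with constant probability. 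Running this with parameter $\epsilon^2$ gives the extension property \eqref{eqn:extension-property} and the $O(1/\epsilon^2)$ iteration count you need.

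Second, your aggregation step is incorrect on both sides. The sum $\sum_i\|\P_P^\perp s_i^\star\|_2$ is \emph{not} bounded by $\sum_i\dist(a_i,P+V_S)$ --- the $s_i^\star$ depend on $i$, and no $\ell_1$ subspace embedding turns per-point projections of the nearest shape point into residuals of $a_i$; take $S=\{s\}$ with $s$ far from $P$ to see the left side can be $n\|\P_P^\perp s\|_2$ while the right side is $\|A(I-\P_{P+V_S})\|_{1,2}$. More fatally, the stopping rule does \emph{not} force $\sum_i\dist(a_i,P+V_S)\le\epsilon\sum_i C_i$: the extension property only bounds the \emph{difference} $\sum_i\dist(a_i,P)-\sum_i\dist(a_i,P+V_S)$, and $\sum_i\dist(a_i,P+V_S)$ itself is of order $\textnormal{SubApx}_{k,1}(A)$, not $\epsilon\cdot\textnormal{SubApx}_{k,1}(A)$. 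Your per-point inequality is fine, but to finish you need the extension property with parameter $\epsilon^2$ (not $\epsilon$) together with the Sohler--Woodruff analysis (their Theorem~8, restated here as Theorem~\ref{thm:approximate-projections-to-approxiamte-distance}), which handles the cross term $\langle \P_P^\perp a_i,\P_P^\perp s\rangle$ by a small/large case split and an AM--GM trade-off rather than by the crude Cauchy--Schwarz bound you use.
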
	
When $A$ is dense, i.e., $\nnz(A)\approx nd$, the quantity $\nnz(A)/\epsilon^2 \approx nd/\epsilon^2$ may be prohibitive. In this case, we also provide a fast dimensionality reduction algorithm which runs in $\widetilde O(nd +(n+d) \cdot \poly(k/\epsilon))$ time.
\begin{theorem}
	For any $\epsilon\in(0, 1)$ and $k\geq 1$, there is an $\widetilde O(nd+(n+d) \cdot \poly(k/\epsilon))$ time algorithm that finds an $\tilde{O}(k^{3.5}/\epsilon^6)$-dimensional subspace $P$ such that, with probability $\ge 2/3$, for \emph{any} shape $S \subseteq \R^d$ that lies in a $k$-dimensional subspace,
\begin{equation*}
			\sum_{i}\sqrt{\dist(\P_P a_i, S)^2 + \dist(a_i, P)^2} = (1 \pm \epsilon)\sum_i \dist(a_i, S).	
\end{equation*}

\label{thm:dense-main-result-informal}
\end{theorem}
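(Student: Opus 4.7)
My plan is to derive Theorem~\ref{thm:dense-main-result-informal} as a fast-preprocessing reduction to Theorem~\ref{thm:sparse-main-result-informal}. Inspecting the sparse bound, the only place the algorithm of Theorem~\ref{thm:sparse-main-result-informal} incurs the $\tilde O(\nnz(A)/\epsilon^2)$ term is in an initial sketching step that compresses $A$ along its rows; the remainder of the pipeline then runs in $\tilde O((n+d)\poly(k/\epsilon))$ time on the compressed sketch. For dense $A$, this initial step is the bottleneck, and the natural fix is to replace it with a dense-but-fast transform whose application takes $\tilde O(nd)$ time, at the price of a slightly larger embedding dimension.

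Concretely, I would apply a Subsampled Randomized Hadamard Transform (or any fast Johnson--Lindenstrauss transform) $S \in \R^{m\times n}$ with $m = \tilde O(k^{1.5}/\epsilon^2)$, computing $SA \in \R^{m\times d}$ in $\tilde O(nd)$ time. The required structural property is that $S$ preserves the sum-of-distances cost of the rows of $A$ to every $k$-dimensional shape, analogously to the sparse embedding used inside Theorem~\ref{thm:sparse-main-result-informal}. SRHT-style sketches for $\ell_1$-type losses typically need a $\sqrt{k}$ factor more rows than the sparse counterpart; feeding $SA$ into the remainder of the Theorem~\ref{thm:sparse-main-result-informal} pipeline then propagates this $\sqrt{k}$ loss, yielding the final dimension $\tilde O(k^{3.5}/\epsilon^6)$. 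The total run time becomes $\tilde O(nd + (n+d)\poly(k/\epsilon))$: the $nd$ term comes from applying $S$ together with a single pass over $A$ to compute the row-wise quantities we still need (such as $\|a_i\|_2$ and, once $P$ is available, $\dist(a_i, P)$), while the remainder operates on the $m\times d$ matrix $SA$ and is governed by the sparse-case running time analysis.

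The output subspace lives directly in $\R^d$ since $SA$ has $d$ columns, so no lifting is required; the sum-of-distances guarantee for $A$ then follows by composing the sum-of-distances guarantee obtained for $SA$ with the embedding property of $S$, incurring only an additional $(1\pm\epsilon)$ distortion that can be absorbed by rescaling $\epsilon$. The main obstacle I expect is exactly the first step: establishing that an SRHT of dimension $\tilde O(k^{1.5}/\epsilon^2)$ simultaneously preserves sums of Euclidean distances from the rows of $A$ to \emph{every} $k$-dimensional shape. This needs an $\epsilon$-net argument over $k$-dimensional shape subspaces combined with a concentration/chaining bound on the SRHT, and it is the tightness of this oversampling analysis that determines whether the final dependence is $k^{3.5}$ rather than $k^{3}$.
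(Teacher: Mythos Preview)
Your proposal rests on two premises that do not match the algorithm behind Theorem~\ref{thm:sparse-main-result-informal}, and the sketching step you propose would not give the needed guarantee.

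\textbf{Where the $\nnz(A)/\epsilon^2$ term actually comes from.} There is no single ``initial sketching step that compresses $A$ along its rows.'' The sparse algorithm (Algorithm~\ref{alg:dimensionreduction}) runs $O(1/\epsilon^2)$ \emph{adaptive} iterations; in each iteration it must compute quantities like $A(I-BB^\mathsf{T})\bS^\mathsf{T}$ and $A(I-BB^\mathsf{T})(I-\hat X\hat X^\mathsf{T})\bG$ for the current $B$ in order to obtain sampling probabilities for all $n$ rows. Each such pass costs $\tilde O(\nnz(A))$, whence $\tilde O(\nnz(A)/\epsilon^2)$. So there is no single left-sketch to swap out; the bottleneck is per-iteration work that depends on the adaptively evolving subspace $B$.

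\textbf{An SRHT on the left does not preserve the relevant cost.} The quantity you must control is $\|A(I-\P_Q)\|_{1,2}=\sum_i\|A_{i*}(I-\P_Q)\|_2$ simultaneously for all projections $Q$ of rank up to $\poly(k/\epsilon)$, and more delicately the \emph{difference} $\|A(I-\P_P)\|_{1,2}-\|A(I-\P_{P+W})\|_{1,2}$ must be preserved to additive error $\epsilon^2\,\textnormal{SubApx}_{k,1}(A)$. An oblivious left-sketch $S$ linearly combines the rows of $A$; sums of row $\ell_2$ norms are not preserved under such mixing (simple cancellation examples already break it), and no net/chaining argument over $k$-dimensional shapes repairs this because the failure is pointwise, not a union-bound issue. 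Known reductions of the row set that preserve $\|\cdot\|_{1,2}$ for subspace approximation are \emph{sampling}-based (sensitivities, Lewis weights), not dense oblivious transforms, and they require probabilities that depend on $A$. Consequently, the composition ``embedding property of $S$ + guarantee for $SA$'' that you invoke does not exist here, and your explanation of the extra $\sqrt{k}$ is unrelated to the paper's source of $k^{3.5}$.

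\textbf{What the paper actually does.} The paper keeps the same adaptive loop but avoids recomputing all $n$ sampling probabilities each iteration. It partitions $[n]$ into $b$ blocks, precomputes once (in $\tilde O(nd)$ time via fast rectangular matrix multiplication) the products $\bC_1 A_{I_j}$ and $\bW A$ for small Cauchy and Wang--Woodruff matrices, and in each iteration uses these to estimate the \emph{block totals} of the sampling probabilities. A block is sampled proportionally to its estimated total, and exact probabilities are computed only for rows in the sampled blocks, costing $\tilde O((nd/b)\cdot\poly(k/\epsilon))$ per iteration. Choosing $b=\poly(k/\epsilon)$ balances to $\tilde O(nd+(n+d)\poly(k/\epsilon))$. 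The $k^{3.5}$ (vs.\ $k^3$ in the sparse case) arises because Lewis-weight sampling is replaced by $\ell_1$ leverage-score sampling through the Wang--Woodruff embedding (Theorem~\ref{thm:subspace-embedding-to-leverage-score-sampling}); Lewis weights require iterative computations on $A(I-BB^\mathsf{T})\bS^\mathsf{T}$ that are not amenable to the one-shot precomputation trick.
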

Given a subspace $P$ as in the above theorems, it is still expensive to compute the projections of the rows of $A$ onto the subspace $P$ as well as the distances to the subspace $P$. We also give an algorithm to compute approximate projections and approximate distances that still satisfy the guarantees of the above theorems, obtaining the following theorem.
\begin{theorem}[Size Reduction]
	Given a matrix $A \in \R^{n \times d}$ and a subspace $P$ of $r = \tilde{O}(k^{3}/\epsilon^{6})$ dimensions that satisfies the guarantees of Theorems~\ref{thm:sparse-main-result-informal} and \ref{thm:dense-main-result-informal}, there is an algorithm that runs in time $\tilde{O}(\nnz(A) + (n+d)\poly(k/\epsilon))$ and outputs vectors $a^{B}_i \in \R^{r}$ and values $v_i \in \R_{\ge 0}$ for all $i$ such that for any shape $S$ that lies in a $k$ dimensional subspace, 
	\begin{equation*}
		\sum_i \sqrt{\dist(Ba_i^B, S)^2 + v_i^2} = (1 \pm \epsilon)\sum_i \dist(a_i,S),
	\end{equation*}
	where $B$ is an orthonormal basis for the subspace $P$. Thus the storage requirement drops from $\nnz(A)$ to $(n+d)k^{3}/\epsilon^6$.
\end{theorem}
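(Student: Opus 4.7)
The plan is to replace the two exact quantities $B^{T}a_i$ and $\dist(a_i,P)$, whose naive computation costs $\Theta(\nnz(A)\cdot r)$, by cheap approximations $a_i^B\in\R^r$ and $v_i\in\R_{\ge 0}$ produced by two \emph{independent} randomized sketches: a sparse subspace embedding $T$ for a sketched least-squares step, and a Johnson--Lindenstrauss matrix $G$ for an approximate residual length. Correctness will reduce, via reverse-triangle arithmetic in $\R^2$, to the guarantee already provided by Theorems~\ref{thm:sparse-main-result-informal} and~\ref{thm:dense-main-result-informal}.

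Let $B\in\R^{d\times r}$ be the given orthonormal basis of $P$, $r=\tilde O(k^{3}/\epsilon^6)$. Draw a sparse subspace embedding $T\in\R^{m\times d}$ (e.g.\ OSNAP) with $m=\poly(k/\epsilon,\log n)$ rows, chosen so that with probability $\ge 9/10$ the map $T$ is a $(1\pm\epsilon)$-subspace embedding for $P$ \emph{and} satisfies approximate matrix product for $B$ against every residual $a_i-\P_P a_i$ simultaneously. Compute $TA$ in $\tilde O(\nnz(A))$ time and $TB$ in $\tilde O(dr)$ time, and set $a_i^B:=(TB)^{+}(Ta_i)$; forming all $n$ of them as the single matrix product $(TB)^{+}\cdot(TA)^{T}$ takes $(n+d)\poly(k/\epsilon)$ time. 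The standard sketched-regression inequality gives $\|Ba_i^B-a_i\|_2^2\le(1+\epsilon)\dist(a_i,P)^2$ for every $i$, and since $Ba_i^B-\P_P a_i\in P$ while $a_i-\P_P a_i\perp P$, Pythagoras converts this into
\begin{equation*}
\alpha_i\ :=\ \|Ba_i^B-\P_P a_i\|_2\ \le\ \sqrt{\epsilon}\cdot\dist(a_i,P)\qquad\text{for all }i.
\end{equation*}

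Independently draw a sparse JL matrix $G\in\R^{t\times d}$ with $t=\tilde O(\epsilon^{-2})$ rows and set $v_i:=\|Ga_i-(GB)\,a_i^B\|_2$; because $G$ is independent of $T$, the $n$ vectors $\{a_i-Ba_i^B\}$ form a fixed set conditional on $T$, and JL simultaneously preserves their Euclidean norms to $1\pm\epsilon$. Combining with $\|a_i-Ba_i^B\|^2=\dist(a_i,P)^2+\alpha_i^2$ (another Pythagoras) yields $|v_i-\dist(a_i,P)|\le O(\epsilon)\,\dist(a_i,P)$. To conclude, fix any shape $S$ lying in a $k$-dimensional subspace, write $d_i:=\dist(\P_P a_i,S)$, $e_i:=\dist(a_i,P)$, and note $|\dist(Ba_i^B,S)-d_i|\le\alpha_i$; the reverse triangle inequality applied to the two points $(\dist(Ba_i^B,S),v_i),(d_i,e_i)\in\R^2$ then gives
\begin{equation*}
\Bigl|\sqrt{\dist(Ba_i^B,S)^2+v_i^2}-\sqrt{d_i^2+e_i^2}\Bigr|\le\sqrt{\alpha_i^2+(v_i-e_i)^2}=O(\sqrt\epsilon)\,e_i.
\end{equation*}
Summing over $i$ and using $e_i\le\sqrt{d_i^2+e_i^2}$ together with Theorems~\ref{thm:sparse-main-result-informal}/\ref{thm:dense-main-result-informal} bounds the aggregate error by $O(\sqrt\epsilon)\sum_i\dist(a_i,S)$. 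Rerunning the construction with $\epsilon\leftarrow\epsilon^2$ absorbs the square-root loss while only inflating the sketch sizes polynomially, preserving the advertised runtime $\tilde O(\nnz(A)+(n+d)\poly(k/\epsilon))$.

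The main obstacle I expect is arranging the simultaneous approximate-matrix-product event across all $n$ residuals $a_i-\P_P a_i$ under a single draw of $T$: this is what forces $m$ to carry an extra $\log n$ factor and drives the dominant $\poly(k/\epsilon)$ cost. Once that union-bound step is handled, the Pythagoras identity, the independence of $G$ from $T$, and the reverse triangle inequality in $\R^2$ combine routinely to yield the stated approximation.
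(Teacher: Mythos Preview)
Your approach is correct in substance and follows a somewhat different route from the paper. For correctness, you use a reverse triangle inequality in $\R^2$ on the pairs $(\dist(Ba_i^B,S),v_i)$ versus $(\dist(\P_Pa_i,S),\dist(a_i,P))$; the paper instead proves the analogous bound (its Theorem~\ref{thm:approximate-projections-to-approxiamte-distance}) via a \emph{small/large} case split on whether $\dist(\P_Pa_i,S)$ is dominated by $\dist(\P_Pa_i,Ba_i^B)$. Both analyses end up needing $\Theta(\epsilon^2)$ sketch accuracy, which you reach by the final $\epsilon\leftarrow\epsilon^2$ substitution and which the paper builds in from the start. For the sketching, the paper draws $t=O(\log n)$ independent CountSketch matrices $\bS_1,\dots,\bS_t$ with $\poly(k/\epsilon)$ rows each and, for every $i$, runs a singular-value consistency check to locate one $\bS_j$ that is a $\Theta(\epsilon^2)$ subspace embedding for $[B\;a_i]$; it then reads off \emph{both} $a_i^B=(\bS_jB)^{+}\bS_ja_i$ and $v_i=\|(I-(\bS_jB)(\bS_jB)^{+})\bS_ja_i\|_2$ from that same sketch. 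Your single-OSNAP-plus-union-bound alternative for $T$ is also valid, provided you invoke Cohen's bound giving column sparsity $s=O(\log(rn))$ independent of $\epsilon$, so that $TA^{\mathsf T}$ costs $\tilde O(\nnz(A))$.

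The one place to tighten is your separate sparse JL matrix $G$ for the distances: it is both unnecessary and costs you the runtime. Any sparse JL achieving $(1\pm\epsilon)$ on $n$ fixed vectors needs column sparsity $\Omega(\epsilon^{-1}\log n)$, so computing $GA^{\mathsf T}$ is already $\tilde O(\nnz(A)/\epsilon)$, and after your $\epsilon\leftarrow\epsilon^2$ substitution this becomes $\tilde O(\nnz(A)/\epsilon^2)$, overshooting the claimed $\tilde O(\nnz(A))$. The fix is exactly what the paper does: since $a_i-Ba_i^B\in\mathrm{span}([B\;a_i])$ and your $T$ is already an $\epsilon$-subspace embedding for that space, set $v_i:=\|Ta_i-(TB)a_i^B\|_2$ (equivalently, the norm of the sketched least-squares residual). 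This preserves $\|a_i-Ba_i^B\|_2$ to $1\pm\epsilon$ without any fresh randomness and without touching $A$ again, restoring the stated running time.
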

%

\section{Preliminaries and Technical Overview}\label{sec:prelim}
We let $A\in\R^{n\times d}$ denote our input matrix. The rows of $A$ are interpreted as a set of $n$ points in $\R^d$. Throughout the paper, we use $A_{i*}$ and $a_i$ to denote the $i^{\text{th}}$ row of $A$, and $A_{*i}$ to denote the $i^{\text{th}}$ column. Similarly, for $J \subseteq [n]$, $A_{J*}$ denotes the matrix with rows of $A$ only indexed by $J$. For $n \in \mathbb{Z}^+$, $[n]$ denotes the set $\{1,2,3,\ldots,n\}$. For a matrix $A$, we use $A^+$ to denote its Moore-Penrose pseudoinverse. We write $x = (a,b)y$ to denote that $ay \le x \le by$. If $a = 1-\epsilon$ and $b = 1 + \epsilon$, we abbreviate the notation as $x = (1 \pm \epsilon)y$.

Given a subspace $B$, we use $\P_B$ to denote the projection matrix onto $B$, i.e., for any vector $u$, we have $\P_Bu = \argmin_{v \in B}\|u-v\|_2$. Let $B^\perp$ denote the orthogonal complement of the subspace $B$. We use bold capital letters such as $\bS, \bL$ to stress that these are random matrices that are explicitly sampled.

\begin{definition}[$(p, 2)$-norm]
	For a matrix $A\in \R^{n\times d}$, its $(p,2)$-norm is  
	$
		\|A\|_{p, 2}=(\sum_{i=1}^n \|A_{i*}\|_2^p)^{1/p}.
	$ We define $\|A\|_h$ to be $\|A^\mathsf{T}\|_{1,2}$ which is the sum of $\ell_2$ norms of columns of $A$.
\end{definition}

\begin{definition}[$(k,p)$-clustering]
	Given input matrix $A\in\R^{n\times d}$, let $\mathcal X$ be the collection of all sets containing $k$ points. The $(k,p)$-clustering problem denotes the optimization problem
	$
		\min_{X\in\mathcal X}\sum_{A_{i*}\in A} d(A_{i*}, X)^p.
	$
	
	If $p=2$, we have the  $k$-means problem, while if $p=1$, we have the  $k$-median problem. 
\end{definition}

\begin{definition}[$(k,p)$-subspace approximation]
	Given input matrix $A\in\R^{n\times d}$, let $\mathcal P$ be the set of all subspaces with dimension at most $k$. The $(k,p)$-subspace approximation problem denotes the optimization problem
$
		\min_{P\in \mathcal P} \sum_{i\in[n]} d(A_{i*}, P)^p.
$
	We let $\text{SubApx}_{k,p}(A)$ denote the optimum value of the $(k,p)$ subspace approximation to $A$.
\end{definition}

\begin{definition}[$\epsilon$-strong coreset]
	For the  $(k, p)$-clustering problem with input matrix $A\in\R^{n\times d}$, a weighted $\epsilon$-strong coreset is a tuple $(C,w)$ where $C \in \R^{m \times d}$ and $w:\text{rows}(C)\to\R^+ $ is such that simultaneously for all $X \subseteq \R^d$ with $|X| = k$,
	\[
		\sum_{i\in [m]} w(C_{i*})d(C_{i*}, X)^p = (1\pm \epsilon )\sum_{i \in [n]}d(A_{i*}, X)^p.
	\]
	The definition can be generalized to \emph{any} data structure that lets us compute a $(1 \pm \epsilon)$ approximation to $\sum_{A_{i*} \in A}d(A_{i*},X)^p$ for all sets $X$ of size $k$. A similar notion of strong coreset can be defined for the $(k,p)$-subspace approximation problem as well.
\end{definition}

\begin{definition}[$(\alpha,\beta)$-bicriteria approximation]
    Given an input matrix $A \in \R^{n \times d}$ for the $(k,1)$-subspace approximation problem, we say that a subspace $Q$ is an $(\alpha, \beta)$-bicriteria approximation if $\text{dim}(Q) \le \beta$ and 
    $
        \sum_{i = 1}^n d(A_{i*},Q) \le \alpha \cdot \text{SubApx}_{k,1}(A).
    $
\end{definition}

\begin{definition}[$\ell_1$ subspace embedding]
    Let $A\in\R^{n\times d}$, $\Pi\in\R^{s\times n}$. We call $\Pi$ an $(\alpha,\beta)$ $\ell_1$ subspace embedding if for all $x\in\R^d$, $\alpha\|Ax\|_1 \le \|\Pi Ax\|_1 \le  \beta \|Ax\|_1$. If $QR = \Pi A$ is the QR decomposition, then we let $\|A_{i*}R^{-1}\|_1$ be the $\ell_1$ leverage score of the $i^{\textnormal{th}}$ row. See \cite{cohen-peng, wang-woodruff} for several constructions of $\ell_1$ subspace embeddings.
\end{definition}
\subsection{Technical Overview}
Let $A \in \R^{n \times d}$ be the input matrix. \citet{sohler2018strong} show that if a subspace $S$ satisfies 
\begin{equation}
    \|A(I-\P_S)\|_{1,2} - \|A(I-\P_{S + W})\|_{1,2} \le \epsilon^2\textnormal{SubApx}_{k,1}(A)
    \label{eqn:extension-property}
\end{equation}
for all $k$-dimensional subspaces $W$, then we can reduce the dimension of the input points by projecting the points onto $S$, while being able to compute a $(1 \pm \epsilon)$-approximation to the sum of distances to any $k$-dimensional shape. They construct such a subspace $S$ by directly computing a $(1+\epsilon,\poly(k/\epsilon))$ bicriteria approximation for the $(i^*k,1)$ subspace approximation problem on $A$, where $i^*$ is a randomly chosen index in $[1/\epsilon^2]$. This introduces the $\exp(\poly(k/\epsilon))$ term in their running time. We show that we can compute $(1+\epsilon,\poly(k/\epsilon))$-bicriteria solutions for the $(k,1)$-subspace approximation problem on $A(I-P)$, for adaptively chosen projection matrices $P$, and that with constant probability, the union of the bicriteria solutions we compute has the desired property \eqref{eqn:extension-property}.

We solve the problem of finding a $(1+\epsilon, \poly(k/\epsilon))$-bicriteria solution for the $(k,1)$-subspace approximation problem on the input $A(I-P)$, where $P$ is an arbitrary projection matrix onto a subspace of dimension at most $\poly(k/\epsilon)$, based on techniques from \cite{clarkson2015input}. We simplify their arguments and obtain tighter parameters for their algorithms. We solve the problem in two stages. First we compute an $(O(1),\tilde{O}(k))$-approximation, i.e., we find a subspace $\hat{X}$ of dimension at most $\tilde{O}(k)$ such that
$
	\|A(I-P)(I-\P_{\hat X})\|_{1,2} \leq O(1)\cdot \textnormal{SubApx}_{k,1}(A(I-P)).
$

To achieve this guarantee, we make use of so-called lopsided embeddings. \citet{clarkson2015input} show that if a matrix $S$ is an $\epsilon$ lopsided embedding for $(V_k, (A(I-P))^\mathsf{T})$, where $V_k$ is an orthonormal basis for the $k$-dimensional subspace that attains the cost $\textnormal{SubApx}_{k,1}(A(I-P))$, then $\min_{\text{rank-}k\ X}\|A(I-P)S^\mathsf{T} X - A\|_{1,2} \le (1+\epsilon)\textnormal{SubApx}_{k,1}(A(I-P))$. We first show that a Gaussian matrix $\bS$ with $O(k)$ rows is an $O(1)$ lopsided embedding with probability $\ge 9/10$. Then we show that if a random matrix $\bL$ is an $O(1)$ $\ell_1$ subspace embedding for the matrix $A(I-P){\bS}^\mathsf{T}$ and satisfies $\E_{\bL}[\|\bL M\|_{1,2}] = \|M\|_{1,2}$ for any fixed matrix $M$, then the row space of $(\bL A(I-P))$ is an $O(1)$ approximation. We use the Lewis weight sampling algorithm of \citet{cohen-peng} to sample a matrix $\bL$ that satisfies these properties. As the matrix ${\bS}^\mathsf{T}$, which is a Gaussian matrix, has only $O(k)$ columns, the matrix $\bL$ has only $\tilde{O}(k)$ rows. We can also instead use the $\ell_1$ subspace embeddings of \citet{wang-woodruff} to construct an $\tilde{O}(k^{3.5})$-sized $\ell_1$ embedding by leverage score sampling \cite{dw-sketching}.

Next, based on the $(O(1),\tilde{O}(k))$ bicriteria solution, we perform non-adaptive residual sampling. This was shown to give a $(1+\epsilon, \tilde{O}(k^{3}/\epsilon^2))$ bicriteria solution in \cite{clarkson2015input} when an $O(1)$ approximate solution is used. Thus, we obtain a subspace $\hat S$ for which 
\[
	\|A(I-P)(I-\P_{\hat S})\|_{1,2} \leq (1+\epsilon)\textnormal{SubApx}_{k,1}(A(I-P)).
\]

Starting with $P = 0$, we obtain a $(1+\epsilon,\,k^{3}/\epsilon^2)$ bicriteria subspace $\hat{S}$. However, the dimensionality reduction requires a subspace that satisfies \eqref{eqn:extension-property}. To obtain such a guarantee, we crucially run this algorithm adaptively $\Theta(1/\epsilon)$ times. Let $\hat{S}_i$ be the subspace obtained in the $i^{\text{th}}$ iteration. In the $i^{\text{th}}$ iteration, we find a bicriteria solution for the $(k,1)$ subspace approximation problem on the matrix $A(I-\P_{\hat{S}_1 \cup \ldots \cup \hat{S}_{i-1}})$. We then show that the final subspace $\hat{S} = \cup_j \hat{S}_j$, with probability $\ge 9/10$, satisfies 
$
    \|A(I-\P_{\hat{S}})\|_{1,2} - \|A(I-\P_{\hat{S} + W})\|_{1,2} \le \epsilon \cdot \textnormal{SubApx}_{k,1}(A)
$
for all $k$-dimensional subspaces $W$. Thus, running the above procedure with parameter $\epsilon^2$ gives a subspace that satisfies \eqref{eqn:extension-property}. We show that each iteration of the algorithm takes $\tilde{O}(\nnz(A) + (n+d)\poly(k/\epsilon))$ time and as we run the algorithm adaptively for $1/\epsilon^2$ iterations, the total time complexity of the algorithm is $O(\nnz{(A)}/\epsilon^2 + (n+d)\poly(k/\epsilon))$.

For dense inputs $A$, the algorithm described above has a running time of $O(nd/\epsilon^2 + (n+d)\poly(k/\epsilon))$ which can be prohibitive when both $n$ and $d$ are large. We observe that in each of the $1/\epsilon^2$ iterations, the algorithm computes sampling probabilities $p_i$ for all the $n$ rows, whereas it samples only $\poly(k/\epsilon)$ rows independently with these probabilities in any particular iteration. We propose a novel alternate sampling scheme in which we partition rows of the matrix $A(I-P)$ into equal size blocks $I_1,\ldots,I_b \subseteq [n]$. We show that given several precomputed matrices, we can quickly obtain estimates ``$\apx_j$'' that approximate $\sum_{i \in I_j}p_i$ for all $j \in [b]$. Now, to sample a row $i \in [n]$ with probability approximately equal to $p_i$, we first sample a block $I_j$ with probability proportional to $\apx_j$, which is close to $\sum_{i \in I_j}p_j,$ and then compute the probabilities $p_i$ \emph{only} for the rows in the sampled blocks. If the number of samples is less than the number of blocks, we see that we compute the actual probabilities only for a few rows. In order to be able to estimate $\apx_j$, we make use of several standard properties of Cauchy and Gaussian random matrices. Finally, we show that each of the precomputed matrices required can be computed in time $\tilde{O}(nd)$ using the fast rectangular matrix multiplication algorithm by \citet{coppersmith}.

In addition to providing a tool for data size reduction, our dimensionality reduction also leads to small coreset constructions for various problems with sizes that depend only on the problem parameter $k$ instead of $n$ or $d$. As shown by \citet{sohler2018strong}, the points projected onto the subspace given by a  dimensionality reduction algorithm can be used to construct coresets of sizes $\poly(k/\epsilon)$ for $k$-median and $(k,1)$-subspace approximation problems. We note that the same constructions work with our dimensionality reduction algorithm. We include the details of such constructions in the supplementary material.

\section{Sum of Distances to a \texorpdfstring{$k$}{k}-dimensional shape}\label{sec:sum-of-distances}
Let $A = \{a_1,\ldots,a_n\}$ be a given set of points and $P$ be a $\poly(k/\epsilon)$ dimensional subspace that satisfies \eqref{eqn:extension-property}. Let $S \subseteq \R^d$ be an arbitrary shape such that $\text{span}(S)$ has dimension at most $k$. We want to obtain an $\epsilon$ approximation to $\sum_i \dist(a_i,S)$. 

\citet{sohler2018strong} show that for any such shape $S$,
$
\sum_i \sqrt{\dist(a_i, \P_Pa_i)^2 + \dist(\P_Pa_i,S)^2} = (1 \pm \epsilon)\sum_{i \in S}\dist(a_i, S).
$
The following lemma is a more general version that works with approximate projections onto the subspace $P$ and approximate distances to the subspace $P$. A similar lemma is stated as Lemma~14 in \cite{sohler2018strong}. We correct an error in Equation~2 of their proof.
\begin{theorem}
	Let $P$ be an $r$ dimensional subspace of $\R^d$ such that
	\begin{equation*}
	    \sum_i \dist(a_i,P) - \sum_i \dist(a_i,P + W) \le \frac{\epsilon^2}{80}\textnormal{SubApx}_{k,1}(A)
    \end{equation*}
	for all $k$-dimensional subspaces $W$. Let $B \in \R^{d \times r}$ be an orthonormal basis for the subspace $P$. For each $a_i$, let $a_i^B \in \R^r$ be such that
	$
		\dist(a_i, Ba_i^B) \le (1+\epsilon_c)\dist(a_i,P)
	$
	and let $(1-\epsilon_c)\dist(a_i,P) \le \apx_i \le (1+\epsilon_c)\dist(a_i,P)$ for $\epsilon_c = \epsilon^2/6$. Then for any $k$ dimensional shape $S$,
$
		\sum_i \sqrt{\dist(Ba_i^B,S)^2 + \apx_i^2} = (1 \pm 5\epsilon)\sum_i \dist(a_i,S).
$
	\label{thm:approximate-projections-to-approxiamte-distance}
\end{theorem}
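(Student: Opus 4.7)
The plan is to first establish the ``exact'' version
\[
\sum_i \sqrt{\dist(a_i,P)^2 + \dist(\P_P a_i,S)^2} \;=\; (1\pm \epsilon)\sum_i \dist(a_i,S),
\]
which is the analogue of Lemma~14 of \cite{sohler2018strong}, and then absorb the perturbation error from replacing $\P_P a_i$ by $Ba_i^B$ and $\dist(a_i,P)$ by $\apx_i$. For the perturbation step, since both $Ba_i^B$ and $\P_P a_i$ lie in $P$ while $a_i-\P_P a_i\perp P$, Pythagoras gives
\[
\|Ba_i^B - \P_P a_i\|^2 = \|a_i-Ba_i^B\|^2 - \dist(a_i,P)^2 \le \bigl((1+\epsilon_c)^2-1\bigr)\dist(a_i,P)^2 \le (\epsilon^2/2)\dist(a_i,P)^2.
\]
The triangle inequality then bounds $|\dist(Ba_i^B,S)-\dist(\P_P a_i,S)|\le\|Ba_i^B-\P_P a_i\|\le (\epsilon/\sqrt 2)\dist(a_i,P)$; combined with $|\apx_i-\dist(a_i,P)|\le\epsilon_c\dist(a_i,P)$ and the $1$-Lipschitzness of $(x,y)\mapsto\sqrt{x^2+y^2}$ in each coordinate, each summand differs from its exact analogue by at most $(\epsilon/\sqrt 2 + \epsilon_c)\dist(a_i,P)\le \epsilon\dist(a_i,P)$. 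Summing and using $\sum_i\dist(a_i,P)\le (1+\epsilon^2/80)\sum_i\dist(a_i,S)$ (derived next) contributes at most about $\epsilon\sum_i\dist(a_i,S)$ to the total error.

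For the exact version, let $W$ be any $k$-dimensional subspace containing $S$, set $V:=P+W$, and write $\bar a_i := \P_V a_i$. Applying the extension property with this $W$ and using $\textnormal{SubApx}_{k,1}(A)\le \sum_i\dist(a_i,W)\le\sum_i\dist(a_i,S)$ yields
\[
\sum_i \bigl(\dist(a_i,P)-\dist(a_i,V)\bigr)\;\le\;\tfrac{\epsilon^2}{80}\sum_i\dist(a_i,S) \quad \text{and}\quad \sum_i\dist(a_i,P)\le \bigl(1+\tfrac{\epsilon^2}{80}\bigr)\sum_i\dist(a_i,S).
\]
Pythagoras relative to $V$ gives $\dist(a_i,S)^2=\dist(a_i,V)^2+\dist(\bar a_i,S)^2$, $\dist(a_i,P)^2=\dist(a_i,V)^2+\dist(\bar a_i,P)^2$, and $\P_P a_i=\P_P\bar a_i$. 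Setting $u_i:=\sqrt{\dist(a_i,P)^2+\dist(\P_P a_i,S)^2}$ and $x_i:=\dist(a_i,S)$, one finds $u_i^2-x_i^2 = \dist(\bar a_i,P)^2 + \dist(\P_P\bar a_i,S)^2 - \dist(\bar a_i,S)^2$. Working inside $V$ with the orthogonal decomposition $V=P\oplus(V\cap P^\perp)$, for the nearest point $s_i^*\in S$ to $\bar a_i$ an expansion of $\|\bar a_i-s_i^*\|^2$ shows $u_i^2-x_i^2 \le 2\langle \bar a_i-\P_P\bar a_i,\,s_i^*-\P_P s_i^*\rangle \le 2\dist(\bar a_i,P)\dist(s_i^*,P)$. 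The triangle bound $\dist(s_i^*,P)\le \dist(\bar a_i,S)+\dist(\bar a_i,P)\le x_i+\dist(\bar a_i,P)$ together with the elementary inequality $\sqrt{(x+d)^2+d^2}\le x+2d$ then gives $u_i\le x_i+2\dist(\bar a_i,P)$. A symmetric derivation using the nearest $\tilde s_i\in S$ to $\P_P a_i$ and $\dist(\tilde s_i,P)\le \dist(\P_P a_i,S)\le u_i$ produces $u_i^2\ge x_i^2 - 2\dist(\bar a_i,P)\,u_i$, hence $u_i\ge x_i-\dist(\bar a_i,P)$.

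Summing these per-index bounds yields $|\sum_i u_i-\sum_i x_i|\le 2\sum_i \dist(\bar a_i,P)$. Factoring $\dist(\bar a_i,P)^2=(\dist(a_i,P)-\dist(a_i,V))(\dist(a_i,P)+\dist(a_i,V))$ and applying Cauchy--Schwarz exactly once,
\[
\sum_i \dist(\bar a_i,P) \;\le\; \sqrt{\textstyle\sum_i(\dist(a_i,P)-\dist(a_i,V))}\;\sqrt{\textstyle\sum_i(\dist(a_i,P)+\dist(a_i,V))} \;\le\; O(\epsilon)\sum_i\dist(a_i,S),
\]
which together with the perturbation step gives the stated $(1\pm 5\epsilon)$ bound with room to spare. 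The main obstacle --- and the likely site of the Equation~2 error in \cite{sohler2018strong} --- is arranging the per-index bound so that its deviation from $x_i$ is \emph{linear} (not square-root) in $\dist(\bar a_i,P)$. A naive use of $\sqrt{a+b}\le\sqrt a+\sqrt b$ on $u_i^2\le x_i^2+2\dist(\bar a_i,P)\dist(s_i^*,P)$ leaves a $\sqrt{\dist(\bar a_i,P)\dist(s_i^*,P)}$ term and, after Cauchy--Schwarz, gives only a $(1\pm\sqrt\epsilon)$ bound, wasting the $\epsilon^2$ slack of the extension property. The fix is to first bound $\dist(s_i^*,P)\le x_i+\dist(\bar a_i,P)$ inside each summand, use the sharper $\sqrt{(x+d)^2+d^2}\le x+2d$, and defer Cauchy--Schwarz to the single pairing of $\sum_i(\dist(a_i,P)-\dist(a_i,V))$ with $\sum_i(\dist(a_i,P)+\dist(a_i,V))$, at which point the $\epsilon^2$ scale pays for exactly one square root.
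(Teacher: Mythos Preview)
Your proof is correct and differs from the paper's in two places. For the perturbation from $(\P_Pa_i,\dist(a_i,P))$ to $(Ba_i^B,\apx_i)$, the paper splits $[n]$ into \emph{small} and \emph{large} indices according to whether $\dist(\P_Pa_i,S)\le\dist(\P_Pa_i,Ba_i^B)$, then uses AM--GM on the large set to obtain a \emph{multiplicative} term-by-term bound $\dist(Ba_i^B,S)^2+\apx_i^2=(1\pm O(\epsilon))\bigl(\dist(\P_Pa_i,S)^2+\dist(a_i,P)^2\bigr)$; you instead use the $1$-Lipschitzness of $(x,y)\mapsto\sqrt{x^2+y^2}$ to get an \emph{additive} per-term error of at most $\epsilon\,\dist(a_i,P)$, which you then sum and control by $\sum_i\dist(a_i,S)$ via the extension property. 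Your route avoids the case split entirely, at the price of invoking the extension property once more (this is free). For the ``exact'' version $\sum_i\sqrt{\dist(a_i,P)^2+\dist(\P_Pa_i,S)^2}=(1\pm\epsilon)\sum_i\dist(a_i,S)$, the paper simply cites Theorem~8 of \cite{sohler2018strong} as a black box; you give a self-contained argument: the per-index bounds $x_i-d_i\le u_i\le x_i+2d_i$ with $d_i=\dist(\P_Va_i,P)$, followed by a single Cauchy--Schwarz on $\sum_i d_i=\sum_i\sqrt{(\dist(a_i,P)-\dist(a_i,V))(\dist(a_i,P)+\dist(a_i,V))}$ so that the $\epsilon^2$ slack in the extension hypothesis pays for exactly one square root. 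Your diagnosis that a naive $\sqrt{a+b}\le\sqrt a+\sqrt b$ at the per-index level would lose a factor $\sqrt\epsilon$ is on point; bounding $\dist(s_i^*,P)\le x_i+d_i$ first and using $\sqrt{(x+d)^2+d^2}\le x+2d$ is exactly the fix that makes the Cauchy--Schwarz budget work. The final constants come out well under $5\epsilon$.
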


The above theorem shows that we have to only compute approximate projections onto the subspace, which can be done in input sparsity time by using high probability subspace embeddings obtained from CountSketch matrices (see Section 2.3 of \cite{dw-sketching} and \cite{high_probability_embeddings}).
\section{Dimensionality Reduction for Sparse Inputs}\label{sec:mainsparse}

\subsection{Constructing an \texorpdfstring{$(O(1),\tilde{O}(k))$}{(sqrt(k),k)}-bicriteria Subspace Approximation}\label{sec:polyksubspaceapprox}
We first show how to obtain an $(O(1),\tilde{O}(k))$-bicriteria solution for $(k,1)$-subspace approximation. A key tool we use is a lopsided embedding defined as follows:

	\begin{definition}[Lopsided embedding]
		A matrix $S$ is a lopsided $\epsilon$-embedding for matrices $A$ and $B$ with respect to a matrix norm $\|\cdot\|$ and constraint set $\mathcal{C}$, if (i) for all matrices $X$ of the appropriate dimensions, $\|S(AX-B)\| \ge (1-\epsilon)\|AX-B\|$, and
			(ii) 
			for $B^* = AX^* - B$, we have
			$
				\|SB^*\|\leq(1+\epsilon)\|B^*\|,
			$
			where $X^* = \argmin_{X \in \mathcal{C}}\|AX - B\|$.
	\end{definition} 

Let $U_k \in \R^{n \times k}$ and $V_k^\mathsf{T} \in \R^{k \times d}$ be rank $k$ matrices such that $\|U_kV_k^\mathsf{T} - A\|_{1,2} = \textnormal{SubApx}_{k,1}(A)$. \citet{clarkson2015input} show that if $S$ is a lopsided $\epsilon$-embedding for matrices $(V_k,A^\mathsf{T})$ with respect to the norm $\|\cdot\|_h$, then $\min_{\text{rank-}k\ X}\|AS^\mathsf{T}X - A\|_{1,2} \le (1 + O(\epsilon))\textnormal{SubApx}_{k,1}(A)$. We show that a suitably scaled Gaussian random matrix $\bS$ with $\tilde{O}(k)$ rows is a lopsided $(1/4)$-embedding for matrices $(V_k,A^\mathsf{T})$ with probability $\ge 9/10$. Thus, we have that with probability $\ge 9/10$, 
\begin{equation*}
	\min_{\text{rank-}k\ X}\|A\bS^\mathsf{T}X - A\|_{1,2} \le (3/2)\textnormal{SubApx}_{k,1}(A).
\end{equation*}

We next prove that a row-sampling based $\ell_1$ subspace embedding for the column space of the matrix $A\bS^\mathsf{T}$ can be used to obtain a bicriteria solution to the subspace approximation problem. 

The following lemma summarizes the results discussed above. The results of the lemma are a significant improvement over Lemma~44 of \cite{clarkson2015input} and have simpler proofs that do not involve $\epsilon$-nets.
\begin{lemma}
	(i) If $\bS^\mathsf{T}$ is a random Gaussian matrix with $O(k)$ columns, then $\bS$ is a $1/4$-lopsided embedding for $(V_k, A^\mathsf{T})$ with respect to the $\| \cdot \|_{h}$ norm with probability $\ge 9/10$. Therefore, with probability $\ge 9/10$
	\begin{equation*}
		\min_{\text{rank-}k\ X}\|A\bS^\mathsf{T}X - A\|_{1,2} \le (3/2)\textnormal{SubApx}_{k,1}(A).
	\end{equation*}	
	(ii) If $\bL$ is a random matrix drawn from a distribution such that with probability $\ge 9/10$,
	$
		\alpha\|A\bS^\mathsf{T}y\|_{1} \le \|\bL A\bS^\mathsf{T}y\|_{1} \le \beta\|A\bS^\mathsf{T}y\|_1
	$
	for all vectors $y$ and if
	$
		\E_\bL[\|\bL M\|_{1,2}] = \|M\|_{1,2}
	$
	for any matrix $M$, then with probability $\ge 3/5$, all matrices $X$ of appropriate dimensions such that $\|\bL A\bS^\mathsf{T}X - \bL A\|_{1,2} \le 10\cdot \textnormal{SubApx}_{k,1}(A)$ satisfy
	$
		\|A\bS^\mathsf{T} X - A\|_{1,2} \le O(2+40/\alpha) \cdot \textnormal{SubApx}_{k,1}(A).
	$
	\label{lma:sketch-to-original}
\end{lemma}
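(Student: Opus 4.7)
My plan is to prove (i) by showing the lopsided $1/4$-embedding property for a Gaussian $\bS \in \R^{m \times d}$ with $m = \Theta(k)$ and i.i.d.\ $N(0, 1/m)$ entries, then invoking the Clarkson--Woodruff implication that a lopsided $\epsilon$-embedding for $(V_k, A^\mathsf{T})$ w.r.t.\ $\|\cdot\|_h$ yields $\min_{\text{rank-}k\,X}\|A\bS^\mathsf{T}X - A\|_{1,2} \le (1 + O(\epsilon))\textnormal{SubApx}_{k,1}(A)$. The useful change of variables is $V_kX - A^\mathsf{T} = V_kY + B^*$, where $Y := X - V_k^\mathsf{T}A^\mathsf{T}$ and $B^* := V_kV_k^\mathsf{T}A^\mathsf{T} - A^\mathsf{T}$; the columns $b^*_j$ of $B^*$ are orthogonal to $V_k$ and $\|B^*\|_h = \textnormal{SubApx}_{k,1}(A)$.

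The easier half of the lopsided condition concerns the single fixed matrix $B^*$: Jensen's inequality gives $\E[\|\bS B^*\|_h] = c_m\|B^*\|_h \le \|B^*\|_h$ with $c_m = 1 - O(1/m)$, and $\textnormal{Var}(\|\bS B^*\|_h) = O(\|B^*\|_h^2/m)$ follows by Cauchy--Schwarz applied to column-pairwise covariances, so Chebyshev yields $\|\bS B^*\|_h \le (1 + 1/8)\|B^*\|_h$ with large constant probability. The lower bound $\|\bS(V_kX - A^\mathsf{T})\|_h \ge (3/4)\|V_kX - A^\mathsf{T}\|_h$ for \emph{all} $X$ is the main obstacle. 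Expanding column by column,
\[
\|\bS V_k y_j + \bS b^*_j\|_2^2 = \|\bS V_k y_j\|_2^2 + 2\langle \bS V_k y_j, \bS b^*_j\rangle + \|\bS b^*_j\|_2^2,
\]
and since $V_k \perp b^*_j$ the Gaussian matrices $\bS V_k$ and $\bS b^*_j$ are independent, so $V_k^\mathsf{T}\bS^\mathsf{T}\bS b^*_j$ has zero mean and per-coordinate variance $\|b^*_j\|_2^2/m$, making the cross-term at most $\sqrt{k/m}\|y_j\|_2\|b^*_j\|_2$ in magnitude. Combining this with the one-shot $(1 \pm 1/16)$-subspace embedding of $\bS$ on $\mathrm{range}(V_k)$ (probability $\ge 1 - e^{-\Omega(k)}$ for $m = \Theta(k)$) and AM--GM gives the per-column inequality $\|\bS(V_ky_j + b^*_j)\|_2 \ge (3/4)\sqrt{\|y_j\|_2^2 + \|b^*_j\|_2^2}$ provided $m/k$ is a large enough constant. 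The hard part will be upgrading the per-column estimate to one that holds simultaneously for all $n$ columns \emph{without} paying a $\log n$ factor in $m$; the plan is to avoid a naive column-wise union bound by working with the sum $\sum_j \|\bS(V_ky_j + b^*_j)\|_2$ directly, using a variance bound $O(\|V_kY + B^*\|_h^2/m)$ that holds uniformly in $Y$ and controlling deviations via Chebyshev.

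For part (ii), the plan is shorter and leverages (i). Let $X^*$ be a rank-$k$ matrix from (i) with $\|A\bS^\mathsf{T}X^* - A\|_{1,2} \le (3/2)\textnormal{SubApx}_{k,1}(A)$. The unbiased hypothesis and Markov give $\|\bL(A\bS^\mathsf{T}X^* - A)\|_{1,2} \le 10\,\textnormal{SubApx}_{k,1}(A)$ with probability $\ge 17/20$. The key intermediate claim is that for any matrix $Y$ whose columns lie in the column span of $A\bS^\mathsf{T}$, $\alpha\|Y\|_{1,2} \le \|\bL Y\|_{1,2}$. This I would prove via the Gaussian identity $\|Y\|_{1,2} = \sqrt{\pi/2}\,\E_{g \sim N(0, I_d)}[\|Yg\|_1]$, obtained from $\E|g^\mathsf{T}v| = \sqrt{2/\pi}\|v\|_2$ summed over rows: for each realization of $g$, $Yg$ still lies in the column span of $A\bS^\mathsf{T}$, so the $\ell_1$-subspace-embedding hypothesis gives $\alpha\|Yg\|_1 \le \|\bL Yg\|_1$, and expectation over $g$ finishes the claim. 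Two triangle inequalities then complete the argument: for any $X$ with $\|\bL(A\bS^\mathsf{T}X - A)\|_{1,2} \le 10\,\textnormal{SubApx}_{k,1}(A)$, triangle gives $\|\bL A\bS^\mathsf{T}(X - X^*)\|_{1,2} \le 20\,\textnormal{SubApx}_{k,1}(A)$, the key claim yields $\|A\bS^\mathsf{T}(X - X^*)\|_{1,2} \le (20/\alpha)\textnormal{SubApx}_{k,1}(A)$, and a final triangle gives $\|A\bS^\mathsf{T}X - A\|_{1,2} \le (20/\alpha + 3/2)\textnormal{SubApx}_{k,1}(A) = O(2 + 40/\alpha)\textnormal{SubApx}_{k,1}(A)$. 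A union bound on the Markov and $\ell_1$-embedding events yields success probability $\ge 17/20 + 9/10 - 1 = 3/4 \ge 3/5$.
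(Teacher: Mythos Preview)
Your argument for part (ii) is correct and follows the same triangle-inequality skeleton as the paper. Your proof of the key intermediate claim---that the $\ell_1$ subspace embedding on $\mathrm{colspan}(A\bS^\mathsf{T})$ lifts to a $(1,2)$-norm contraction---via the Gaussian identity $\|Y\|_{1,2} = \sqrt{\pi/2}\,\E_g\|Yg\|_1$ is in fact cleaner than the paper's route: the paper replaces the expectation by a finite Gaussian matrix $M$ with $\tilde O(t/\epsilon^2)$ columns (using $\|x^\mathsf{T}M\|_1 = (1 \pm \epsilon)\|x\|_2$ for all $x$), and thereby loses a factor of $2$, obtaining $\|\bL Y\|_{1,2} \ge (\alpha/2)\|Y\|_{1,2}$ where you get $\alpha$.

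Part (i), however, has a genuine gap. Your plan for the contraction half of the lopsided property---showing $\|\bS(V_kY + B^*)\|_h \ge (3/4)\|V_kY + B^*\|_h$ simultaneously for all $Y$---does not go through as stated. Chebyshev applied to a variance bound of the form $\mathrm{Var}\big(\|\bS(V_kY + B^*)\|_h\big) = O(\|V_kY + B^*\|_h^2/m)$ controls the failure probability only for each \emph{fixed} $Y$; it yields no uniform statement over $Y \in \R^{k \times n}$, and any net over this set would force $m$ to grow with $n$. The cross-term bound $|y_j^\mathsf{T}(V_k^\mathsf{T}\bS^\mathsf{T}\bS b^*_j)| \lesssim \sqrt{k/m}\,\|y_j\|_2\|b^*_j\|_2$ and the norm preservation $\|\bS b^*_j\|_2 \approx \|b^*_j\|_2$ are per-column random events, and with $m = O(k)$ there is no single high-probability event, defined independently of $Y$, on which your per-column inequality holds for every $j \in [n]$.

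The paper avoids this obstacle by never requiring a per-column contraction to hold for all columns simultaneously. Its sufficient-conditions lemma for lopsided embeddings asks only that: (1) $\bS$ is a subspace $\epsilon$-contraction for $V_k$; (2) for \emph{each} column $j$ separately, $\bS$ is a subspace $\epsilon^2$-contraction for the $(k{+}1)$-dimensional space $\mathrm{span}(V_k, b^*_j)$ with probability at least $1 - \delta\epsilon^2/3$; and (3) $\|\bS B^*\|_h \le (1+\epsilon^2)\|B^*\|_h$. The crucial point is that condition (2) is a per-column probability, not a simultaneous event. The proof calls a column $j$ \emph{bad} if (2) fails for it, observes $\E\big[\sum_{\text{bad }j}\|b^*_j\|_2\big] \le (\delta\epsilon^2/3)\|B^*\|_h$, applies Markov to bound the total bad $B^*$-mass by $\epsilon^2\|B^*\|_h$, and handles the bad columns via a further small/large split that uses only the subspace embedding for $V_k$ alone (condition (1)) and condition (3). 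This Markov-on-bad-mass trick is the idea your proposal is missing; with it, a Gaussian $\bS$ with $O(k/\epsilon^4 + 1/(\epsilon^4\delta^2))$ rows suffices, and taking $\epsilon,\delta$ to be constants gives $m = O(k)$.
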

Using the above lemma, we now have the following theorem which shows that Algorithm~\ref{alg:polykapprox} returns an $(O(1),\tilde{O}(k))$ approximation.

	\begin{algorithm}
		\caption{\textsc{PolyApprox}}
		\label{alg:polykapprox}
		\begin{algorithmic}
		    \STATE{{\bfseries Input:} $A\in\R^{n\times d}$, $B \in \R^{d \times c_1}$, $k \in \mathbb{Z}, \delta$}
			\STATE{{\bfseries Output:} $\hat{X} \in \R^{d \times c_2}$}
			\STATE $\text{cols} \gets O(k + 1/\delta^2)$
			\STATE $\bS^\mathsf{T} \gets \mathcal{N}(0,1)^{d \times \text{cols}}$
			\STATE $\bL \gets$ \textsc{LewisWeight}$(A(I-BB^\mathsf{T})\bS^\mathsf{T}, $1/2$)$ \cite{cohen-peng}
			\STATE $\hat{X} \gets $ Orthonormal Basis for rowspace($\bL A(I - BB^\mathsf{T})$)
			\STATE Repeat the above $O(\log(1/\delta))$ times and return the best $\hat{X}$ i.e., $\hat{X}$ minimizing $\|A(I-\hat{X}\hat{X}^\mathsf{T})\bG\|_{1,2}$ where $\bG$ is a Gaussian matrix with $O(\log(n))$ columns
			\end{algorithmic}
	\end{algorithm}
	\begin{theorem}
		Given any matrix $A \in \R^{n \times d}$ and a matrix $B \in \R^{d \times c_1}$ with $c_1 = \poly(k/\epsilon)$ orthonormal columns, Algorithm~\ref{alg:polykapprox} returns a matrix $\hat{X}$ with $\tilde{O}(k)$ orthonormal columns that with probability $1-\delta$ satisfies
		\begin{align*}
			&\|A(I-BB^\mathsf{T})(I - \hat{X}\hat{X}^\mathsf{T})\|_{1,2}\\
			&\le O(1) \cdot \textnormal{SubApx}_{k,1}(A(I-BB^\mathsf{T})),
		\end{align*}
		in time  $\tilde{O}((\nnz(A) + d\poly(k/\epsilon))\log(1/\delta))$.
		\label{thm:poly-k-approx}
	\end{theorem}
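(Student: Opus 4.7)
The plan is to reduce the problem to Lemma~\ref{lma:sketch-to-original} by treating $A' := A(I-BB^\mathsf{T})$ as the ``input matrix'' and exhibiting a specific rank-$k$ witness whose row space sits inside $\textnormal{rowspace}(\bL A')$. Since $B$ has orthonormal columns, $BB^\mathsf{T}$ is an orthogonal projection, so the claim reduces to showing that $\hat{X}$ satisfies $\|A'(I-\hat{X}\hat{X}^\mathsf{T})\|_{1,2} \le O(1)\cdot \textnormal{SubApx}_{k,1}(A')$ with probability $1-\delta$.

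First I would apply Lemma~\ref{lma:sketch-to-original}(i) to $A'$ with the Gaussian $\bS^\mathsf{T} \in \R^{d \times O(k)}$: with probability $\ge 9/10$ there is a rank-$k$ matrix $X^*$ with $\|A'\bS^\mathsf{T} X^* - A'\|_{1,2} \le (3/2)\,\textnormal{SubApx}_{k,1}(A')$. Next, because $A'\bS^\mathsf{T}$ has only $O(k)$ columns, the Cohen--Peng Lewis-weight sampler with distortion $1/2$ outputs, with probability $\ge 9/10$, a matrix $\bL$ with only $\tilde{O}(k)$ rows that is an $(\Omega(1),O(1))$-$\ell_1$ subspace embedding for $\textnormal{col}(A'\bS^\mathsf{T})$. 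Since $\bL$ is a reweighted sampling matrix, $\mathbb{E}_{\bL}[\|\bL M\|_{1,2}] = \|M\|_{1,2}$ for every fixed $M$, so both hypotheses of Lemma~\ref{lma:sketch-to-original}(ii) are in place.

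The crux of the proof is the following witness construction. By the expectation identity applied to $M = A'\bS^\mathsf{T}X^* - A'$ and Markov's inequality, $\|\bL A'\bS^\mathsf{T}X^* - \bL A'\|_{1,2} \le 10\,\textnormal{SubApx}_{k,1}(A')$ with probability $\ge 9/10$. I then define the projected witness $\tilde{X} := X^*\hat{X}\hat{X}^\mathsf{T}$. Its rowspace lies in $\textnormal{col}(\hat{X}) = \textnormal{rowspace}(\bL A')$. Since every row of $\bL A'$ already belongs to $\textnormal{rowspace}(\bL A')$, we have the algebraic identity $\bL A'\,\hat{X}\hat{X}^\mathsf{T} = \bL A'$, hence
\begin{equation*}
\bL A'\bS^\mathsf{T}\tilde{X} - \bL A' \;=\; \bigl(\bL A'\bS^\mathsf{T}X^* - \bL A'\bigr)\hat{X}\hat{X}^\mathsf{T}.
\end{equation*}
Because $\hat{X}\hat{X}^\mathsf{T}$ is an orthogonal projection, $\|v\hat{X}\hat{X}^\mathsf{T}\|_2 \le \|v\|_2$ for every row $v$, hence $\|\bL A'\bS^\mathsf{T}\tilde{X} - \bL A'\|_{1,2} \le 10\,\textnormal{SubApx}_{k,1}(A')$. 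Lemma~\ref{lma:sketch-to-original}(ii) then yields $\|A'\bS^\mathsf{T}\tilde{X} - A'\|_{1,2} \le O(1)\,\textnormal{SubApx}_{k,1}(A')$, and since rows of $A'\bS^\mathsf{T}\tilde{X}$ lie in $\textnormal{col}(\hat{X})$, row-wise $\ell_2$-optimality of orthogonal projection gives
\begin{equation*}
\|A'(I-\hat{X}\hat{X}^\mathsf{T})\|_{1,2}\;\le\;\|A' - A'\bS^\mathsf{T}\tilde{X}\|_{1,2}\;=\;O(1)\cdot\textnormal{SubApx}_{k,1}(A').
\end{equation*}

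For the runtime I would form $A'\bS^\mathsf{T}$ as $A\bS^\mathsf{T} - AB(B^\mathsf{T}\bS^\mathsf{T})$, which takes $\tilde{O}(\nnz(A) + (n+d)\poly(k/\epsilon))$ using a fast sparse-embedding composition for $\bS^\mathsf{T}$; the Lewis-weight computation and sampling take $\tilde{O}(\nnz(A) + (n+d)\poly(k/\epsilon))$; and orthonormalising the $\tilde{O}(k)$ rows of $\bL A'$ takes $\tilde{O}(d\poly(k))$. To amplify the $\Theta(1)$-probability guarantee above to $1-\delta$, I run the algorithm $O(\log(1/\delta))$ times independently and return the $\hat{X}$ minimising $\|A(I-\hat{X}\hat{X}^\mathsf{T})\bG\|_{1,2}$ for a fresh Gaussian $\bG$ with $O(\log n)$ columns; Johnson--Lindenstrauss concentration preserves each row's $\ell_2$-norm up to constants with probability $1-1/\textnormal{poly}(n)$, and a union bound over rows shows this estimator is a $\Theta(\sqrt{\log n})$ approximation to the true cost. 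The main obstacle I foresee is precisely the witness-construction step: the natural Frobenius-best choice $(\bL A'\bS^\mathsf{T})^+\bL A'$ would lose a factor of $\sqrt{\tilde{O}(k)}$ passing from Frobenius to $(1,2)$ norm, and $X^*$ alone has no a priori relation to $\textnormal{rowspace}(\bL A')$; the specific choice $X^*\hat{X}\hat{X}^\mathsf{T}$ threads both needles using the simple but essential identity $\bL A' = \bL A'\hat{X}\hat{X}^\mathsf{T}$, which I expect to be the nontrivial insight of the proof.
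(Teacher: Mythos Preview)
Your proof is correct and essentially identical to the paper's: the paper's witness $\tilde{X}(\bL A')^{+}(\bL A')$ is exactly your $X^*\hat{X}\hat{X}^\mathsf{T}$, since $\hat{X}\hat{X}^\mathsf{T}$ and $(\bL A')^{+}(\bL A')$ are both the orthogonal projector onto $\textnormal{rowspace}(\bL A')$, and the same identity $\bL A' = \bL A'\hat{X}\hat{X}^\mathsf{T}$ drives both arguments. One minor slip: with $O(\log n)$ Gaussian columns the estimator $\|A(I-BB^\mathsf{T})(I-\hat{X}\hat{X}^\mathsf{T})\bG\|_{1,2}$ is a \emph{constant}-factor (not $\Theta(\sqrt{\log n})$) approximation to the true cost after a union bound over rows, which is what you need for the amplification to yield an $O(1)$ guarantee.
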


\subsection{Constructing a \texorpdfstring{$(1+\epsilon, \tilde{O}(k^{3}/\epsilon^2))$}{(1+eps,poly(k/eps))}-bicriteria Subspace Approximation}\label{sec:epssubspaceapprox}
Using the $(O(1),\tilde{O}(k))$-bicriteria subspace approximation solution found, we design a finer sampling process based on Theorem~45 of \cite{clarkson2015input} to further pick a subspace of dimension $\tilde{O}(k^{3}/\epsilon^2)$ that contains a $(1+\epsilon)$-approximate solution for subspace approximation of the matrix $A(I-BB^\mathsf{T})$. 

The following lemma states that given a subspace of cost at most $K\cdot \textnormal{SubApx}_{k,1}(A)$, that a sample of $\tilde{O}(K \cdot k^3/\epsilon^2)$ rows with probabilities chosen proportional to the distances of the rows of the matrix $A$ to the subspace, can be used to construct a subspace that is a $1+\epsilon$ approximation. 

		\begin{algorithm}
	\caption{\textsc{EpsApprox}}
	\label{alg:epsapprox}
	\begin{algorithmic}
	    	\STATE {\bfseries Input: } $A, B, \hat{X}, k, K, \epsilon,\delta > 0$.
	        \STATE {\bfseries Output: }$U \in \mathbb{R}^{d \times c}$ such that $U^\mathsf{T}B = 0$.
	        \STATE $t \gets O(\log(n/\delta))$, $\bG \gets \mathcal{N}(0,1/t)^{d \times t}$
	        \STATE $M \gets A(I - BB^\mathsf{T})(I-\hat{X}\hat{X}^\mathsf{T})\bG$
	        \STATE $p_i \gets \|M_{i*}\|_2/\|M\|_{1,2}$ for all $i \in [n]$
	        \STATE $s \gets \tilde{O}(K \cdot k^3/\epsilon^2 \cdot \log(1/\delta))$
	        \STATE $\bS \gets$ Multiset of $s$ independent samples drawn from distribution $p$
	        \STATE $U \gets$ Orthonormal basis for column space of the matrix $((I-BB^\mathsf{T})[\hat{X}\, (A_{\bS})^\mathsf{T}])$
	        \STATE {\bfseries Return} $U$
	\end{algorithmic}
	\end{algorithm}
	
\begin{lemma}
	Given a matrix $A \in \R^{n \times d}$ and a matrix $\hat{X} \in \R^{d \times c}$ that satisfies
	$$
		\|A(I-\hat{X}\hat{X}^\mathsf{T})\|_{1,2} \le K\cdot \textnormal{SubApx}_{k,1}(A),
	$$
	suppose we generate a matrix $\bS$ of $s = \tilde{O}((K/\alpha) \cdot k^3/\epsilon^2 \cdot \log(1/\delta))$ rows, each chosen independently to be the $i^{\textnormal{th}}$ standard basis vector with probability $p_i$. Here, $\sum_{i \in [n]} p_i = 1$ and for all $i \in [n]$
	$
		p_i \ge \alpha\frac{q_i}{\sum_i q_i},
	$
	where $q_i = \|A_{i*}(I-\hat{X}\hat{X}^\mathsf{T})\|_2$. 
	Let $U$ be an orthonormal basis for the rowspace of $[\hat{X}^\mathsf{T}\ ;\ \bS A]$. Then with probability $\ge 1 - \delta$,
	$$
		\|A(I-UU^\mathsf{T})\|_{1,2} \le (1+\epsilon)\textnormal{SubApx}_{k,1}(A).
	$$
	\label{lma:residual-sampling-main-lemma}
	\end{lemma}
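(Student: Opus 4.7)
The plan follows the residual-sampling strategy of Theorem~45 in \cite{clarkson2015input}, adapted to carry along the pre-existing subspace spanned by $\hat X$.

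\emph{Setup and reduction.} Let $V^* \in \R^{d \times k}$ be an orthonormal basis for the optimum $k$-dimensional subspace for $A$ and $A^* = AV^*(V^*)^\mathsf{T}$, so $\|A - A^*\|_{1,2} = \textnormal{OPT} := \textnormal{SubApx}_{k,1}(A)$. Put $W = A(I - \hat X\hat X^\mathsf{T})$ and $W^* = A^*(I - \hat X\hat X^\mathsf{T})$; then $W^*$ has rank $\le k$ and $\|W - W^*\|_{1,2} \le \textnormal{OPT}$ since projection is row-wise $\ell_2$-nonexpansive. As $\bS A\hat X\hat X^\mathsf{T}$ has rows in $\textnormal{rowspan}(\hat X^\mathsf{T})$, we have $\textnormal{rowspan}([\hat X^\mathsf{T};\bS A]) = \textnormal{rowspan}([\hat X^\mathsf{T};\bS W])$, and $UU^\mathsf{T}$ captures $A\hat X\hat X^\mathsf{T}$ exactly, giving
\begin{equation*}
    \|A(I - UU^\mathsf{T})\|_{1,2} \le \inf_{\substack{R\subseteq H\\ \dim R\le k}} \|W - W\P_R\|_{1,2},
\end{equation*}
where $H := \textnormal{rowspan}(\bS W)$. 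Thus it suffices to produce a $k$-dimensional $R\subseteq H$ with $\|W - W\P_R\|_{1,2} \le (1+\epsilon)\textnormal{OPT}$.

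\emph{Ideal target and uniform cost preservation.} Row-by-row projection is optimal in $\ell_2$, so for the rank-$k$ target $R^* := \textnormal{rowspan}(W^*)$ we have $\|W - W\P_{R^*}\|_{1,2}\le \|W - W^*\|_{1,2} \le \textnormal{OPT}$; however, $R^*$ is generally not contained in $H$. The central technical step is to show that $\bS$ preserves the cost function $g(R) := \|W(I - \P_R)\|_{1,2}$ uniformly over all $k$-dimensional $R$, i.e., that with probability $\ge 1-\delta$ the importance-sampled estimator
\begin{equation*}
    \hat g(R) := \frac{1}{s}\sum_{i\in\bS}\frac{1}{p_i}\|W_{i*}(I-\P_R)\|_2
\end{equation*}
satisfies $\hat g(R) = (1\pm\epsilon)g(R) \pm \epsilon\cdot\textnormal{OPT}$ for every $k$-dim $R$. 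Given this, the minimizer $\tilde R$ of $\hat g$ may be taken inside $H$ (any $R$ can be replaced by $\P_H R$ without increasing distances to the sampled rows $W_{i*}\in H$), and uniform closeness yields $g(\tilde R)\le (1+O(\epsilon))\textnormal{OPT}$, which is the desired subspace.

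\emph{Main obstacle: uniform concentration.} The hard part is delivering this uniform bound within the stated sample budget. For a single fixed $R$, Bernstein's inequality suffices with sample count governed by the worst per-sample contribution $\|W_{i*}(I-\P_R)\|_2/p_i\le (1/\alpha)\|W\|_{1,2}\le (K/\alpha)\textnormal{OPT}$, using $p_i\ge\alpha\|W_{i*}\|_2/\|W\|_{1,2}$ and $\|W\|_{1,2}\le K\cdot\textnormal{OPT}$; this yields $\tilde O((K/\alpha)\epsilon^{-2}\log(1/\delta))$ samples per $R$. To cover all $k$-dim $R$ simultaneously, one first confines $R$ to a $\poly(k/\epsilon)$-dimensional ambient subspace that approximately contains $R^*$ (produced by a coarser bicriteria step, as in the paper's larger scheme), then places an $\epsilon$-net of size $(C/\epsilon)^{O(k^2)}$ on the relevant Grassmannian and uses a short Lipschitz argument to pass from the net to every $R$; the net contributes the extra $k^2$ factor, giving the total $s = \tilde O((K/\alpha) k^3/\epsilon^2 \log(1/\delta))$. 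Combining the three steps and rescaling $\epsilon$ by a universal constant delivers the lemma's $(1+\epsilon)$ bound.
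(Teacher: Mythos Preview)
Your reduction in the first two paragraphs is clean and correct: the identity $\textnormal{rowspan}([\hat X^\mathsf{T};\bS A]) = \textnormal{rowspan}([\hat X^\mathsf{T};\bS W])$ holds, the orthogonality $H\perp \textnormal{colspan}(\hat X)$ follows, and the bound $\|A(I-UU^\mathsf{T})\|_{1,2}\le \|W(I-\P_R)\|_{1,2}$ for any $R\subseteq H$ is valid. The observation that the $\hat g$-minimizer may be taken inside $H$ is also correct.

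The genuine gap is in the uniform-concentration step. You need $\hat g(R)\approx g(R)$ simultaneously for enough $R$ to cover both $R^*$ (fixed) and $\tilde R$ (the $\hat g$-minimizer inside $H$). The net you propose lives in ``a $\poly(k/\epsilon)$-dimensional ambient subspace that approximately contains $R^*$, produced by a coarser bicriteria step.'' Two problems. First, no such subspace is supplied by the lemma's hypotheses; appealing to ``the paper's larger scheme'' imports structure the statement does not grant (and in that scheme the bicriteria subspace is $\textnormal{colspan}(\hat X)$, which is \emph{orthogonal} to $R^*=\textnormal{rowspan}(W^*)$, not containing it). Second, and more seriously, even a fixed low-dimensional $Z\supseteq R^*$ does not help: there is no reason $\tilde R\subseteq Z$, since $\tilde R$ lives in the random subspace $H=\textnormal{rowspan}(\bS W)$ of dimension $s$. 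Netting over $k$-flats in $H$ gives $\log N=\Theta(ks)$, which is circular in $s$. (Separately, a net of size $(C/\epsilon)^{O(k^2)}$ presupposes an $O(k)$-dimensional ambient space, not $\poly(k/\epsilon)$, so the bookkeeping does not match your own claim.)

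The paper gives no independent proof here; it defers entirely to Theorem~45 of \cite{clarkson2015input}, noting only the trivial adjustment for the slack $\alpha$. That argument avoids the circularity above: rather than seeking uniform control of $\hat g$ over subspaces that depend on the sample, it works with nets indexed by \emph{fixed} data tied to the optimum $R^*$, and shows directly that the span of $\hat X$ together with the sampled rows contains a $k$-dimensional subspace of true cost at most $(1+\epsilon)\,\textnormal{OPT}$. Your sketch would need either to follow that route or to exhibit a non-random low-dimensional ambient space that provably contains a near-$\hat g$-minimizer.
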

	The proof of the above lemma is the same as that of the proof of Theorem~45 of \cite{clarkson2015input} with a minor change to account for the approximation error $\alpha$. Now the following theorem shows that Algorithm~\ref{alg:epsapprox} satisfies conditions of the previous lemma.
  	\begin{theorem}[Residual Sampling]\label{thm:consttoepsapprox}
	    Given matrix $A \in \mathbb{R}^{n \times d}$, matrices $B \in \mathbb{R}^{d\times c_1}$ and $\hat{X} \in \R^{d \times c_2}$ with orthonormal columns such that
$
	        \|A(I-BB^\mathsf{T})(I-\hat{X}\hat{X}^\mathsf{T})\|_{1,2} \le K\cdot \textnormal{SubApx}_{k,1}(A(I-BB^\mathsf{T})),
$
	    Algorithm~\ref{alg:epsapprox} returns a matrix $U$ having $c = \tilde{O}(c_2 + K \cdot k^3/\epsilon^2\cdot \log(1/\delta))$ orthonormal columns such that with probability $\ge 1 - \delta$,
	    \begin{align*}
	    	&\|A(I-BB^\mathsf{T})(I-UU^\mathsf{T})\|_{1,2} \\
	    	&\le (1+\epsilon)\textnormal{SubApx}_{k,1}(A(I-BB^\mathsf{T})).
	   \end{align*}
	       The algorithm runs
	    in time $\tilde{O}(\nnz(A) + d\poly(k/\epsilon))$. Moreover we also have that $U^\mathsf{T}B = 0$ i.e., the column spaces of $U$ and $B$ are orthogonal to each other.
	    \label{thm:algo-eps-approx-proof}
	\end{theorem}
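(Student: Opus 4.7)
The plan is to reduce the theorem to an application of Lemma~\ref{lma:residual-sampling-main-lemma} on the matrix $A' := A(I - BB^\mathsf{T})$, using the provided $\hat{X}$ as the constant-factor solution. By hypothesis, $\|A'(I - \hat{X}\hat{X}^\mathsf{T})\|_{1,2} \le K \cdot \textnormal{SubApx}_{k,1}(A')$, so the input requirement of that lemma holds with parameter $K$. What remains is to (a)~verify that the sampling probabilities $p_i$ defined from the Gaussian sketch $M$ satisfy $p_i \ge \alpha\, q_i/\sum_j q_j$ for a universal constant $\alpha$, where $q_i$ denotes the $\ell_2$ norm of the $i$-th row of $A'(I - \hat{X}\hat{X}^\mathsf{T})$; (b)~check that $U$ output by Algorithm~\ref{alg:epsapprox} spans the same subspace as the matrix produced by Lemma~\ref{lma:residual-sampling-main-lemma}; and (c)~bound the runtime.

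Step (a) is the main technical point. Writing $v_i^\mathsf{T}$ for the $i$-th row of $A'(I-\hat{X}\hat{X}^\mathsf{T})$, we have $M_{i*} = v_i^\mathsf{T}\bG$ with $\bG \in \R^{d \times t}$ having i.i.d.\ $\mathcal{N}(0,1/t)$ entries, so $\|M_{i*}\|_2^2$ is distributed as $(\|v_i\|_2^2/t)\,\chi^2_t$. With $t = \Theta(\log(n/\delta))$, a standard $\chi^2$ tail bound gives $\|M_{i*}\|_2 = (1 \pm 1/10)\,q_i$ with probability $1 - \delta/(10n)$; a union bound over $i \in [n]$ extends this to all rows simultaneously. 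Hence $\|M\|_{1,2} \in (1 \pm 1/10)\sum_j q_j$ and $p_i \ge (9/11) \cdot q_i/\sum_j q_j$, so $\alpha = 9/11$ is an absolute constant.

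For step (b), note that $\bS A' = \bS A(I - BB^\mathsf{T})$, so $(\bS A')^\mathsf{T} = (I - BB^\mathsf{T})A_{\bS}^\mathsf{T}$. In the intended use $\hat{X}$ is returned by Algorithm~\ref{alg:polykapprox} and its columns lie in the row space of $A(I-BB^\mathsf{T})$, which is orthogonal to $B$; hence $(I-BB^\mathsf{T})\hat{X} = \hat{X}$, and the column space of $(I-BB^\mathsf{T})[\hat{X}\ A_{\bS}^\mathsf{T}]$ is precisely the row space of $[\hat{X}^\mathsf{T};\,\bS A']$. This is the subspace whose cost is bounded by Lemma~\ref{lma:residual-sampling-main-lemma} applied to $A'$, so with probability $\ge 1-\delta$,
\[
    \|A'(I - UU^\mathsf{T})\|_{1,2} \le (1+\epsilon)\,\textnormal{SubApx}_{k,1}(A').
\]
The column count $c = \tilde{O}(c_2 + K k^3/\epsilon^2 \log(1/\delta))$ follows from the sample size $s$ in the lemma, and $U^\mathsf{T}B = 0$ holds by construction because every column of $U$ lies in the range of $I - BB^\mathsf{T}$.

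Step (c) is bookkeeping. The matrix $M$ is formed as $A\bigl[(I-BB^\mathsf{T})(I-\hat{X}\hat{X}^\mathsf{T})\bG\bigr]$: the bracketed $d \times t$ factor costs $d\cdot \poly(k/\epsilon)$ since $c_1,c_2,t$ are all $\tilde O(\poly(k/\epsilon))$, and the left multiplication by $A$ costs $\tilde{O}(\nnz(A))$. Computing the $p_i$'s and drawing $\bS$ take $O(nt + s\log n)$ time; reading $A_{\bS}$, projecting by $I - BB^\mathsf{T}$, and orthonormalizing $[\hat{X}\,(I-BB^\mathsf{T})A_{\bS}^\mathsf{T}]$ each cost $d\cdot\poly(k/\epsilon)$. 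Summing gives the claimed $\tilde{O}(\nnz(A) + d\cdot\poly(k/\epsilon))$ bound. The only real obstacle is step (a), where we must pick $t$ small enough to hide inside the $\tilde O(\cdot)$ but large enough for row-norm preservation to hold uniformly over all $n$ rows; everything else is a direct invocation of Lemma~\ref{lma:residual-sampling-main-lemma}.
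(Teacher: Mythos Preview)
Your proposal is correct and follows essentially the same approach as the paper: establish via Gaussian concentration that the sampling probabilities satisfy $p_i \ge (9/11)\, q_i/\sum_j q_j$, then invoke Lemma~\ref{lma:residual-sampling-main-lemma} on $A(I-BB^\mathsf{T})$ and account for the running time. Your step~(b) is actually more careful than the paper's proof, which glosses over the identification of Algorithm~\ref{alg:epsapprox}'s $U$ with the subspace in Lemma~\ref{lma:residual-sampling-main-lemma}; note that your ``intended use'' assumption $\hat X\perp B$ is not strictly needed, since applying $I-BB^\mathsf{T}$ to $\hat X$ can only decrease distances from the rows of $A(I-BB^\mathsf{T})$, which already lie in $B^\perp$.
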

	The proof of the theorem mainly involves showing that $\|M_{i*}\|_2$ is proportional to the residual $\|A_{i*}(I-BB^\mathsf{T})(I-\hat{X}\hat{X}^\mathsf{T})\|_2$. This is done by using the fact that if $\bG$ is a Gaussian matrix with $O(\log(1/\delta))$ rows, then $\|x^\mathsf{T}\bG\|_2 = (1/2,3/2)\|x^\mathsf{T}\|_2$ with probability $\ge 1 - \delta$. We then apply Lemma~\ref{lma:residual-sampling-main-lemma} to conclude that the solution computed by the algorithm is a bicriteria solution of cost at most $(1+\epsilon)\textnormal{SubApx}_{k,1}(A(I-BB^\mathsf{T}))$. Therefore, using the $(O(1), \tilde{O}(k))$ bicriteria solution obtained using Algorithm~\ref{alg:polykapprox}, we can obtain a $(1+\varepsilon, \tilde{O}(k^{3}/\varepsilon^2))$ bicriteria solution.
\section{Dimensionality Reduction}\label{sec:dimensionality-reduction}
  With an algorithm to construct a $(1+\epsilon, k^{3}/\epsilon^2)$ bicriteria solution from the previous section, we are now ready to construct a subspace that satisfies \eqref{eqn:extension-property}. Recall the crucial property for the subspace $S$ we need is that for all $k$-dimensional subspaces $W$,
  $
  	\|A(I-\P_S)\|_{1,2} - \|A(I-\P_{S + W})\|_{1,2} \le \epsilon^2\text{SubApx}_{k,1}(A).
  $
  To get such a subspace, we run Algorithms~\ref{alg:polykapprox} and \ref{alg:epsapprox} adaptively and then show that the union of all $1+\epsilon$ approximate bicriteria solutions satisfy the above property with parameter $O(\varepsilon)$. Thus, running the algorithm with parameter $\Theta(\epsilon^2)$ gives a subspace with the desired property.
  	\begin{algorithm}[t]
		\caption{\textsc{DimensionReduction}}
		\label{alg:dimensionreduction}
		\begin{algorithmic}
		    \STATE {\bfseries Input:} $A\in\R^{n\times d}$, $k, \epsilon>0$.
			\STATE {\bfseries Output:} $B \in \R^{d \times c}$ with orthonormal columns
			 \STATE $i^* \leftarrow $ uniformly random integer from $[10/\epsilon+1]$.
			 \STATE Initialize $B \gets []$
			\FOR {$i^*$ iterations}
			
			     \STATE $\hat{X} \leftarrow$ \textsc{PolyApprox}$(A,B,k,\epsilon/100)$.
				 \STATE $U \leftarrow$ \textsc{EpsApprox}$(A,B,\hat{X},k, \tilde{O}(\sqrt{k}), \epsilon,\epsilon/100)$.
				 \STATE $B \leftarrow [B\, |\, U ]$.
			\ENDFOR
			\STATE {\bfseries Return} $B$.
		\end{algorithmic}
	\end{algorithm} 
	\begin{algorithm}[t]
	\caption{\textsc{CompleteDimReduce}}
	 \label{alg:complete-dim-reduce}
	 \begin{algorithmic}
	    \STATE {\bfseries Input:} $A \in \mathbb{R}^{n \times d}$, $k \in \mathbb{Z}$, $\epsilon>0$.
	    \STATE {\bfseries Output:} $\text{Apx} \in \R^{n \times (c+1)}$
	    \STATE Let $B= $ \textsc{DimensionReduction}$(A, k, \Theta(\epsilon^2))$.\;
	    \STATE $t = O(\log(n))$\;
	    \STATE Compute $(\bS_j B, \bS_j A^\mathsf{T})$ for $j \in [t]$ where $\bS_j$ is an independent CountSketch matrix with $\poly(k/\epsilon)$ rows
	    \FOR{$i=1,\ldots,n$}
		\STATE Let $U_jD_j{V_{j}^\mathsf{T}} \gets \text{SVD}(\bS_j[B\, A_{i*}^\mathsf{T}])$ for all $j \in [t]$

		\FOR{$j \in [t]$}
			\STATE Check if for at least half $j' \ne j$, all singular values of $D_j{V_{j}^\mathsf{T}}{V_{j'}}(D_{j'}^\mathsf{T})^{-1}$ are in $[1-\Theta(\epsilon^2),1+\Theta(\epsilon^2)]$
			\STATE If the above check holds, set $x_i \gets (\bS_j B)^{\dagger}(\bS_j A_{i*}^\mathsf{T})$, $v_i \gets {\|(I - (\bS_j B)(\bS_j B)^\dagger)(\bS_jA_{i*}^\mathsf{T})\|_2}$ and go to next $i$
		\ENDFOR
		\ENDFOR
	\STATE {\bfseries Return} $B$ and $n \times (c+1)$ matrix $\textnormal{Apx}$ with $\textnormal{Apx}_{i*} = [x_i\, v_i]$
	 \end{algorithmic}
	\end{algorithm}

\begin{theorem}
    Given an $n \times d$ matrix $A$, $k \in \mathbb{Z}$, and an accuracy parameter $\epsilon > 0$, Algorithm~\ref{alg:complete-dim-reduce} returns a matrix $B$ with $\tilde{O}(k^{3}/\epsilon^6)$ orthonormal columns and a matrix $\text{Apx} = [X\, v]$ such that, with probability $\ge 9/10$, for any $k$ dimensional shape $S$, 
$
        \sum_i \sqrt{\dist(BX_{i*}^\mathsf{T}, S)^2 + v_i^2} = (1 \pm\varepsilon)\sum_i \dist(A_i, S).
$
    The algorithm runs in time $O(\nnz{(A)}/\epsilon^2 + (n+d)\poly(k/\epsilon))$.
\end{theorem}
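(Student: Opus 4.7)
The plan breaks into three ingredients. First, I would invoke \textsc{DimensionReduction} with parameter $\Theta(\epsilon^2)$ and combine Theorems~\ref{thm:poly-k-approx} and \ref{thm:algo-eps-approx-proof} with the adaptive residual analysis sketched in Section~\ref{sec:dimensionality-reduction} to conclude that, with probability $\ge 19/20$, the returned subspace $P$ with orthonormal basis $B$ has dimension $\tilde{O}(k^3/\epsilon^6)$ and satisfies
\[
\|A(I-\P_P)\|_{1,2}-\|A(I-\P_{P+W})\|_{1,2}\le (\epsilon^2/80)\,\textnormal{SubApx}_{k,1}(A)
\]
for every $k$-dimensional subspace $W$. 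This is precisely the hypothesis of Theorem~\ref{thm:approximate-projections-to-approxiamte-distance}, so the only remaining task is to verify that the $x_i$ and $v_i$ computed by \textsc{CompleteDimReduce} are $(1\pm\Theta(\epsilon^2))$-accurate approximate projection coefficients and distances for $P$.

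Second, I would analyse the CountSketch-based sketch-and-solve used by \textsc{CompleteDimReduce}. Fix row $i$ and let $M_i=[B\ A_{i*}^\mathsf{T}]\in\R^{d\times(c+1)}$, where $c=\dim P$. A single CountSketch $\bS_j$ with $\poly(k/\epsilon)$ rows is, with probability at least $9/10$, a $\Theta(\epsilon^2)$-subspace embedding for $M_i$; equivalently, writing $\bS_j M_i = U_j D_j V_j^\mathsf{T}$, the matrix $V_j D_j^2 V_j^\mathsf{T}$ is a $(1\pm\Theta(\epsilon^2))$-spectral approximation of $M_i^\mathsf{T} M_i$. Taking $t=O(\log n)$ independent sketches, a Chernoff bound combined with a union bound over $i\in[n]$ makes at least $2t/3$ of them simultaneously good for every $i$ with probability $\ge 1-1/n^9$. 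When $\bS_j$ and $\bS_{j'}$ are both good, the two approximations agree and so $D_j V_j^\mathsf{T} V_{j'}(D_{j'}^\mathsf{T})^{-1}$ has all singular values in $[1\pm\Theta(\epsilon^2)]$; when exactly one is good, the approximations disagree and the test fails. Hence the inner loop reliably selects a good $\bS_j$ per row, and the standard sketch-and-solve analysis yields $\|B x_i - A_{i*}^\mathsf{T}\|_2\le (1+\Theta(\epsilon^2))\dist(a_i,P)$ together with $v_i=(1\pm\Theta(\epsilon^2))\dist(a_i,P)$.

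Third, I would choose the hidden constants in the $\Theta(\epsilon^2)$ above so that $\Theta(\epsilon^2)\le(\epsilon/5)^2/6$, and apply Theorem~\ref{thm:approximate-projections-to-approxiamte-distance} with $\epsilon/5$ in place of $\epsilon$; its conclusion is exactly the displayed identity in the theorem statement. Total success probability is at least $19/20-1/n^9\ge 9/10$. The runtime decomposes as $\tilde{O}(\nnz(A)/\epsilon^2+(n+d)\poly(k/\epsilon))$ for \textsc{DimensionReduction} ($O(1/\epsilon^2)$ adaptive iterations of Theorems~\ref{thm:poly-k-approx} and \ref{thm:algo-eps-approx-proof}), $\tilde{O}(\nnz(A)+d\,\poly(k/\epsilon))$ for the $t$ CountSketch products $\bS_j B$ and $\bS_j A^\mathsf{T}$ (exploiting CountSketch sparsity), and $n\cdot\poly(k/\epsilon)$ for the per-row SVDs and $O(\log n)$ pairwise tests on $\poly(k/\epsilon)$-sized matrices.

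The step I expect to be the main obstacle is the majority-voting argument in paragraph two. The singular-value test in the algorithm is only a pairwise consistency check, so one must rule out the possibility that two \emph{bad} sketches accidentally satisfy it and form a majority; the argument avoids this by first ensuring that strictly fewer than $1/3$ of the sketches are bad, so that a good $\bS_j$ passes the test against a majority of other good $\bS_{j'}$'s while a bad $\bS_j$ fails against that same majority, turning the pairwise test into a reliable per-row certificate that $\bS_j$ is a genuine subspace embedding of $[B\ A_{i*}^\mathsf{T}]$.
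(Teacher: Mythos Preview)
Your proposal is correct and follows essentially the same three-step skeleton as the paper: (1) invoke \textsc{DimensionReduction} with parameter $\Theta(\epsilon^2)$ to obtain a subspace satisfying the extension property~\eqref{eqn:extension-property}, (2) certify that each $(x_i,v_i)$ is a $(1\pm\Theta(\epsilon^2))$-approximate projection/distance pair, and (3) plug into Theorem~\ref{thm:approximate-projections-to-approxiamte-distance}. The paper's own proof is in fact terser than yours: for step~(2) it simply appeals to the high-probability CountSketch subspace-embedding construction (Theorem~2.10 of \cite{dw-sketching}, due to \cite{high_probability_embeddings}) rather than re-deriving the majority-vote analysis, and for step~(1) it packages the adaptive argument into a separate lemma (Lemma~\ref{lma:algo-dimreduce}). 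Two small remarks: your probability bookkeeping in step~(1) (claiming $19/20$) is slightly optimistic relative to what the paper actually proves---the lemma as stated gives only $\ge 2/3$, and tightening it to match your bound requires shrinking the $\delta$ parameters passed to \textsc{PolyApprox} and \textsc{EpsApprox}, which you do not mention but which is routine; and in your paragraph two, the sentence ``when exactly one is good, the approximations disagree and the test fails'' is not literally true (a bad sketch can accidentally be a good embedding), but your final paragraph correctly identifies the fix---any $\bS_j$ that passes against a majority containing at least one genuine embedding is, by transitivity, itself an $O(\epsilon^2)$-embedding, which is all that is needed.
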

 Let $B_i$ be the value of the matrix $B$ after $i$ iterations in Algorithm~\ref{alg:dimensionreduction}. The proof of the above theorem first shows that Algorithm~\ref{alg:dimensionreduction} outputs a subspace $B$ satisfying \eqref{eqn:extension-property}. This is done by showing that for at least a constant fraction of $j \in [10/\epsilon+1]$, the terms $\|A(I-B_{j}B_{j}^\mathsf{T})\|_{1,2}$ and $\|A(I-B_{j+1}B_{j + 1}^\mathsf{T})\|_{1,2}$ are close. This further means that the rows of the matrix $A(I-B_{j}B_{j}^\mathsf{T})$ cannot be projected onto any $k$ dimensional subspace $W$ to make $\|A(I-B_{j}B_{j}^\mathsf{T})(I-WW^\mathsf{T})\|_{1,2}$ substantially smaller than $\|A(I-B_{j}B_{j}^\mathsf{T} )\|_{1,2}$. Thus, we can show that with constant probability, for $i^*$ chosen randomly by Algorithm~\ref{alg:dimensionreduction}, the subspace $\text{colspan}(B_{i^*})$ satisfies \eqref{eqn:extension-property}. 
 
 Then, the proof uses the fact that for every $i \in [n]$, the algorithm finds a matrix $\bS_j$ that is a $\Theta(\epsilon^2)$ subspace embedding for $[B\, A_{i*}^\mathsf{T} ]$. This is shown to be true in \cite{high_probability_embeddings}. Now, if $\bS_j$ is a subspace embedding, it can be shown that the vector $x_i$ and value $v_i$ satisfy the conditions of Theorem~\ref{thm:approximate-projections-to-approxiamte-distance}, thus proving the above theorem.

\section{Linear Time Algorithm for Dense Matrices}\label{sec:fastcomputation}
We see from Algorithm~\ref{alg:complete-dim-reduce} that, after computing a subspace that satisfies \eqref{eqn:extension-property}, we can compute approximate projections and approximate distances to the subspace in time $\tilde{O}(nd + (n+d)\poly(k/\epsilon))$. We now show that the subspace can also be found in $\tilde{O}(nd + (n+d)\poly(k/\epsilon))$ time, thereby giving a near linear time algorithm for dimensionality reduction for dense matrices.
\subsection{Computing an \texorpdfstring{$(O(1),\poly(k))$}{(poly(k),poly(k))} approximation}

Consider constructing an $\ell_1$ subspace embedding for the matrix $A(I-BB^\mathsf{T})\bS^\mathsf{T}$ in Algorithm~\ref{alg:polykapprox}. The algorithm uses Lewis weights to sample a matrix that is an $O(1)$ $\ell_1$ subspace embedding for $A(I-BB^\mathsf{T})\bS^\mathsf{T}$ with high probability. We instead use the following theorem to compute an $\ell_1$ subspace embedding which is more amenable for giving fast algorithms for dense matrices.
\begin{theorem}[Section~3.1 of \citet{dw-sketching}]
	Given a matrix $A \in \R^{n \times d}$, let $\bL \in \R^{r \times n}$  be a random matrix that is an $(\alpha,\beta)$ $\ell_1$ subspace embedding for the matrix $A$.
	Let $\bL A = QR$ be the QR decomposition of the matrix  $\bL A$. Let $\ell_i = \|A_{i*}R^{-1}\|_1$ for $i \in [n]$. 
	If we generate a matrix $\bL'$ with $N = O((d^2\sqrt{r}/\gamma\epsilon^2)(\beta/\alpha)\log(1/\delta\epsilon))$ rows, each chosen independently as the $i^\textnormal{th}$ standard basis vector, times $1/(Np_i)$ with probability $p_i$, where $p_i \ge \gamma(\ell_i/\sum_{i'}\ell_{i'})$ for all $i \in [n]$, then the matrix $\bL$ satisfies with probability $1-\delta$, for all vectors $x$, 
$
		(1-\epsilon)\|Ax\|_1 \le \|\bL'Ax\|_1 \le (1+\epsilon)\|Ax\|_1.
$
	\label{thm:subspace-embedding-to-leverage-score-sampling}
\end{theorem}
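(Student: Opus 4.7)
The plan is to prove this via the standard preconditioned $\ell_1$ sampling/net argument, with the preconditioner supplied for free by the $(\alpha,\beta)$ embedding $\bL$. I first collect the deterministic inequalities implied by $\bL A = QR$ with $Q$ having orthonormal columns. For any $v \in \R^r$ we have $\|v\|_2 \le \|v\|_1 \le \sqrt{r}\,\|v\|_2$, and $\|Qy\|_2 = \|y\|_2$; combined with the subspace embedding inequality $\alpha\|Ax\|_1 \le \|\bL Ax\|_1 \le \beta\|Ax\|_1$, this gives $\|Rx\|_2 \le \|QRx\|_1 = \|\bL Ax\|_1 \le \beta\|Ax\|_1$. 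Applying the upper bound to $R^{-1}e_j$ yields $\|AR^{-1}e_j\|_1 \le (1/\alpha)\|Q e_j\|_1 \le \sqrt{r}/\alpha$, so $\sum_i \ell_i = \sum_j \|AR^{-1}e_j\|_1 \le d\sqrt{r}/\alpha$. Finally, for any $x$ and any row $i$, $|(Ax)_i| = |A_{i*}R^{-1}(Rx)| \le \|A_{i*}R^{-1}\|_1\,\|Rx\|_\infty \le \ell_i\,\|Rx\|_2 \le \beta\,\ell_i\,\|Ax\|_1$. This per-coordinate domination bound is the reason $\ell_i$ behaves like an $\ell_1$ leverage score and is the crucial link between the sampling probabilities and the quantity we wish to estimate.

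Next, I fix $x$ and analyze the random estimator $\|\bL' Ax\|_1 = \sum_{j=1}^N X_j$ where $X_j = |A_{i_j *}x|/(N p_{i_j})$ and $i_j \sim p$. Each $X_j$ is non-negative with $\E[X_j] = \|Ax\|_1/N$, so the estimator is unbiased. Combining the per-coordinate bound with $p_i \ge \gamma \ell_i/\sum_{i'}\ell_{i'}$ and the total leverage bound gives the uniform almost-sure bound $X_j \le M$ with $M := (\beta d\sqrt{r}/(\alpha\gamma N))\|Ax\|_1$. Since $\E[X_j^2] \le M\,\E[X_j]$, the total variance is at most $M\,\|Ax\|_1$. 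Bernstein's inequality then yields
\begin{equation*}
\Pr\bigl[\,\bigl|\|\bL' Ax\|_1 - \|Ax\|_1\bigr| > \epsilon\,\|Ax\|_1\,\bigr] \le 2\exp\!\left(-\frac{c\,\epsilon^2\,\alpha\gamma N}{\beta\,d\sqrt{r}}\right)
\end{equation*}
for an absolute constant $c > 0$.

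To upgrade this single-point guarantee to a uniform guarantee over all $x$, I use an $\epsilon$-net argument. The set $\{Ax : x \in \R^d\}$ lies in a subspace of dimension at most $d$, so I can take an $\epsilon$-net $\mathcal{N}$ of its unit $\ell_1$-ball with $|\mathcal{N}| \le (O(1/\epsilon))^d$; a union bound over $\mathcal{N}$ combined with the standard geometric chaining argument (writing an arbitrary unit-$\ell_1$ vector as a convergent sum $\sum_{t\ge 0}\epsilon^t y^{(t)}$ with each $y^{(t)} \in \mathcal{N}$) promotes the pointwise $(1\pm\epsilon)$ approximation to one that holds simultaneously for all $x$, up to absorbing constant factors into $\epsilon$. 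Requiring the union-bound failure probability to be at most $\delta$ forces the Bernstein exponent to exceed $d\log(1/\epsilon) + \log(1/\delta) = O(d\log(1/(\delta\epsilon)))$, which translates to precisely $N = O((d^2\sqrt{r}/(\gamma\epsilon^2))(\beta/\alpha)\log(1/(\delta\epsilon)))$, matching the stated bound.

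The main technical obstacle is the chaining step: one must ensure that the net-to-all extension does not incur a $\poly(d)$ blow-up in either the approximation factor or the number of samples. This is handled by the geometric-series chaining, whose aggregate distortion is a universal constant and can be absorbed by rescaling $\epsilon$. Everything else is bookkeeping once the preconditioning identities relating $\|Rx\|_2$, $\|Ax\|_1$, and the row norms $\ell_i$ are in place; in particular, getting the exact powers of $r$, $d$, $\alpha$, $\beta$ in $M$ follows mechanically from the two-sided inequality $\|v\|_2 \le \|v\|_1 \le \sqrt{r}\|v\|_2$ applied to $Q$-images.
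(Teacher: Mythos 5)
The paper does not prove this theorem; it is imported verbatim as a citation to Section~3.1 of \citet{dw-sketching}, so there is no in-paper proof to compare against. Your argument is the standard one from that reference and is correct: the preconditioning identities $\|Rx\|_2 \le \|\bL Ax\|_1 \le \beta\|Ax\|_1$, $\sum_i \ell_i \le d\sqrt{r}/\alpha$, and $|(Ax)_i| \le \beta\,\ell_i\|Ax\|_1$ are all derived correctly (though note you label the use of the lower $\alpha$-side of the embedding inequality as ``the upper bound''); the unbiasedness, a.s.\ bound $X_j \le M$, variance bound, and Bernstein step are routine; and the $(O(1/\epsilon))^d$ net with geometric chaining gives uniformity with the stated $N$. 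In fact your accounting yields the marginally sharper $N = O((d\sqrt{r}/\gamma\epsilon^2)(\beta/\alpha)(d\log(1/\epsilon) + \log(1/\delta)))$, which is dominated by the stated $O((d^2\sqrt{r}/\gamma\epsilon^2)(\beta/\alpha)\log(1/\delta\epsilon))$.
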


Therefore, given an $(\alpha,\beta)$ $\ell_1$ subspace embedding with $r$ rows for the matrix $A(I-BB^\mathsf{T})\bS^\mathsf{T} \in \R^{n \times O(k)}$, we can compute a $(1 \pm \epsilon)$ $\ell_1$ embedding with $N = O((k^2\sqrt{r}/\gamma\epsilon^2)(\beta/\alpha)\log(1/\epsilon))$ rows. Using the $\ell_1$ subspace embedding of Theorem~1.3 of \citet{wang-woodruff}, we have $r,(\beta/\alpha) = O(k\log(k))$.

\begin{theorem}
	Given $A \in \R^{n \times d}$, $B \in \R^{d \times c_1}$, $k \in \mathbb{Z}$ and $\delta$, there exists an algorithm that returns $\hat{X}$ with $\tilde{O}(k^{3.5})$ orthonormal columns that with probability $1-\delta$ satisfies
		\begin{align*}
			&\|A(I-BB^\mathsf{T})(I - \hat{X}\hat{X}^\mathsf{T})\|_{1,2}\\
			&\le O(1) \cdot \textnormal{SubApx}_{k,1}(A(I-BB^\mathsf{T})).
		\end{align*}
Given that the matrices $\bC_1A_{I_j}$ for all $j \in [b]$ and $\bW A$, where $\bC_1$ is a Cauchy matrix with $O(\log(n\poly(k/\epsilon)))$ rows and $\bW$ is the subspace embedding of \citet{wang-woodruff} for $O(k)$ dimensional spaces, are precomputed for each of $O(\log(1/\delta))$ trials, the algorithm can be implemented in time $\tilde{O}(((nd/b)\cdot k^{3.5} + d \cdot \poly(k/\epsilon))\log(1/\delta))$.
\end{theorem}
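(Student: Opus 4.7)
The plan is to re-run the two-stage construction of Algorithm~\ref{alg:polykapprox} with the Lewis-weight $\ell_1$ subspace embedding replaced by leverage-score sampling applied to a Wang--Woodruff embedding, as prescribed by Theorem~\ref{thm:subspace-embedding-to-leverage-score-sampling}. Concretely, I first draw a Gaussian $\bS^\mathsf{T} \in \R^{d \times O(k)}$, so that by Lemma~\ref{lma:sketch-to-original}(i), $\bS$ is a $1/4$-lopsided embedding for the instance on $A(I-BB^\mathsf{T})$ with probability $\ge 9/10$. Let $M = A(I-BB^\mathsf{T})\bS^\mathsf{T} \in \R^{n \times O(k)}$ (kept implicit). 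Apply the embedding $\bW \in \R^{r \times n}$ of \citet{wang-woodruff} with $r = O(k \log k)$ rows, obtaining an $(\alpha,\beta)$ $\ell_1$ subspace embedding $\bW M$ for the column space of $M$ with $\beta/\alpha = O(\log k)$. Compute $QR = \bW M$ and invoke Theorem~\ref{thm:subspace-embedding-to-leverage-score-sampling} with ``$d$''$=O(k)$ to leverage-score sample $N = O(k^2\sqrt{r}(\beta/\alpha)) = \tilde{O}(k^{3.5})$ rows, yielding a sampling matrix $\bL'$ that is a constant-factor $\ell_1$ subspace embedding for $M$.

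Since any coordinate-sampling matrix satisfies $\E[\|\bL' X\|_{1,2}] = \|X\|_{1,2}$ for any fixed $X$, Lemma~\ref{lma:sketch-to-original}(ii) applies with $\bL = \bL'$ and yields, with probability $\ge 3/5$, an $X$ in the rowspace of $\bL' A(I-BB^\mathsf{T})$ with $\|A(I-BB^\mathsf{T})\bS^\mathsf{T} X - A(I-BB^\mathsf{T})\|_{1,2} \le O(1)\cdot \textnormal{SubApx}_{k,1}(A(I-BB^\mathsf{T}))$. Take $\hat X$ to be an orthonormal basis for this rowspace, which has $\tilde{O}(k^{3.5})$ columns. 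Repeating the construction $O(\log(1/\delta))$ times and returning the $\hat X$ that minimizes a Gaussian-sketched estimate of $\|A(I-BB^\mathsf{T})(I-\hat X\hat X^\mathsf{T})\|_{1,2}$, as in Algorithm~\ref{alg:polykapprox}, boosts the overall failure probability to $\delta$.

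For the running time, the precomputations make each stage cheap. Forming $\bW M = (\bW A)(I-BB^\mathsf{T})\bS^\mathsf{T}$ uses the precomputed $\bW A$ and the small-dimensional factors $B,\bS^\mathsf{T}$, and costs $d\poly(k/\epsilon)$; the QR and hence $R^{-1}$ cost $\poly(k)$. To leverage-score sample $N = \tilde{O}(k^{3.5})$ rows without reading $A$ in full, partition $[n]$ into $b$ equal blocks $I_1,\ldots,I_b$ and first estimate $\apx_j \approx \sum_{i \in I_j} \ell_i$, where $\ell_i = \|A_{i*}(I-BB^\mathsf{T})\bS^\mathsf{T} R^{-1}\|_1$. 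Writing the block sum as $\sum_{t=1}^{O(k)}\|(MR^{-1})_{I_j,t}\|_1$, I compute $\bC_1 A_{I_j}(I-BB^\mathsf{T})\bS^\mathsf{T} R^{-1}$ from the precomputed $\bC_1 A_{I_j}$ and the small factors in $\poly(k/\epsilon)\log n$ time per block, and take column-wise medians of absolute values to obtain $\apx_j$ up to a constant factor (Indyk's Cauchy estimator). Then sample $N$ blocks with probabilities proportional to $\apx_j$, read each selected block in time $O(nd/b)$, and subsample within it proportionally to the exactly computed $\ell_i$. Over $O(\log(1/\delta))$ trials this totals $\tilde{O}(((nd/b)k^{3.5} + d\poly(k/\epsilon))\log(1/\delta))$.

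The main obstacle is justifying the block estimator: I need $\apx_j = \Theta(1)\cdot \sum_{i \in I_j}\ell_i$ simultaneously across all $b$ blocks and $O(k)$ columns with failure probability $\le 1/\poly(k/\epsilon)$. Taking $\bC_1$ with $O(\log(n\poly(k/\epsilon)))$ Cauchy rows supplies the union bound over $b \cdot O(k)$ median estimates, which in turn ensures the two-stage sample satisfies the hypothesis $p_i \ge \gamma \ell_i/\sum_{i'}\ell_{i'}$ of Theorem~\ref{thm:subspace-embedding-to-leverage-score-sampling} with a constant $\gamma$. With this in place, $\bL'$ is the required $\ell_1$ subspace embedding for $M$, the quality argument of Theorem~\ref{thm:poly-k-approx} goes through verbatim, and the running-time bound holds per trial, proving the theorem.
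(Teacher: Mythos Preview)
Your approach is essentially the paper's own proof: Gaussian sketch $\bS$, Wang--Woodruff $\ell_1$ embedding $\bW$, QR to get $R^{-1}$, block-wise Cauchy estimates $\apx_j$ of the partial leverage-score sums, two-stage sampling, and the $O(\log(1/\delta))$ boost. The correctness argument via Lemma~\ref{lma:sketch-to-original} and Theorem~\ref{thm:subspace-embedding-to-leverage-score-sampling} is exactly what the paper does.

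Two points need fixing. First, the Wang--Woodruff distortion is $\beta/\alpha = O(k\log k)$, not $O(\log k)$; this is precisely why $N = O(k^2\sqrt{r}\cdot(\beta/\alpha)) = \tilde O(k^{3.5})$, so your intermediate value is wrong even though your final $N$ is right.

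Second, and more important for the stated running time: you propose to ``subsample within it proportionally to the exactly computed $\ell_i$'' after reading a block in time $O(nd/b)$. But computing $\ell_i = \|A_{i*}(I-BB^\mathsf{T})\bS^\mathsf{T} R^{-1}\|_1$ for all $i\in I_j$ requires multiplying $A_{I_j}\in\R^{(n/b)\times d}$ by a $d\times O(k)$ matrix, which costs $(nd/b)\cdot O(k)$ per block, not $O(nd/b)$. Over $\tilde O(k^{3.5})$ sampled blocks this yields $\tilde O((nd/b)k^{4.5})$, missing the theorem's $\tilde O((nd/b)k^{3.5})$ bound. The paper avoids this by introducing a \emph{second} Cauchy matrix $\bC$ with $O(\log n)$ columns: within a sampled block one computes $P^{(j)} = A_{I_j}(I-BB^\mathsf{T})\bS^\mathsf{T} R^{-1}\bC$, which costs only $\tilde O(nd/b)$ since the right factor is $d\times O(\log n)$, and then sets $p_i^{(j)} = \mathrm{median}(\mathrm{abs}(P^{(j)}_{i*}))$ as a constant-factor proxy for $\ell_i$. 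With this adjustment your two-stage sampling still satisfies $p_i \ge \gamma\,\ell_i/\sum_{i'}\ell_{i'}$ for constant $\gamma$, and the claimed running time holds.
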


We first partition rows of the matrix $A$ into $[b]$ sets denoted $I_1,\ldots, I_b$. To prove the above theorem, we use the following fact: if $\bC$ is a Cauchy matrix with $O(\log(n/\delta)/\epsilon^2)$ rows, then for any vector $x$ of $n$ dimensions, $\text{median}(\text{abs}(\bC x)) = (1 \pm \epsilon)\|x\|_1$ with probability $\ge 1 - \delta$. This fact lets us compute an approximation to the sum of leverage scores of the rows that lie in $I_j$ quickly without computing individual leverage scores. Using these approximations, we can sample $r$ rows by computing the leverage scores of all rows in just $r$ blocks instead of computing the leverage scores of all the rows of the matrix, which gives the cost saving as described in the theorem.

\subsection{Computing a \texorpdfstring{$(1+\epsilon,\poly(k/\epsilon))$}{(1+eps,poly(k/eps))} approximation}
\begin{theorem}
	Given a matrix $A \in \R^{n \times d}$, orthonormal matrices $B$ and $\hat{X}$ such that
	$
		\|A(I-BB^\mathsf{T})(I-\hat{X}\hat{X}^\mathsf{T})\|_{1,2} \le K \cdot \textnormal{SubApx}_{k,1}(A(I-BB^\mathsf{T})),
	$
	and parameters $k$, $\epsilon$, and $\delta$, there exists an algorithm that outputs a matrix $U$ with $\poly(K \cdot k/\epsilon)$ orthonormal columns such that with probability $\ge 1 - \delta$,
	$
		\|A(I-BB^\mathsf{T})(I-UU^\mathsf{T})\|_{1,2} \le (1+\epsilon)\textnormal{SubApx}_{k,1}(A(I-BB^\mathsf{T})).
	$
	Given that $\bC_1 A_{I_j}$ is precomputed for all $j \in [b]$, where $\bC_1$ is a Cauchy Matrix with $O(\log(n\poly(k/\epsilon)))$ rows, the algorithm runs in time $\tilde{O}((nd/b) \cdot (K \cdot k^{3}/\epsilon^2 \log(1/\delta)) + d\poly(k/\epsilon))$.
\end{theorem}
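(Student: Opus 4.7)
I would mimic Algorithm~\ref{alg:epsapprox} but avoid materializing the residual $M=A(I-BB^\mathsf{T})(I-\hat X\hat X^\mathsf{T})\bG$ for all rows. First, partition $[n]$ into $b$ equal-size blocks $I_1,\dots,I_b$ and form $Z=(I-BB^\mathsf{T})(I-\hat X\hat X^\mathsf{T})\bG\in\R^{d\times t}$ with $t=O(\log n)$, in $\tilde O(d\poly(k/\epsilon))$ time; then $M=AZ$. For every block $j$, I would use the precomputed $\bC_1 A_{I_j}$ to produce a $(1\pm 1/2)$-estimate $\widetilde w_j$ of $\|A_{I_j}Z\|_{1,2}=\sum_{i\in I_j}\|M_{i*}\|_2$ (see below). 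Finally, I would draw $s=\tilde O(Kk^3/\epsilon^2\log(1/\delta))$ samples by first picking a block $j$ with probability $\widetilde w_j/\sum_{j'}\widetilde w_{j'}$; for each sampled block $I_j$, form $M_{I_j}=A_{I_j}Z$ (time $O((n/b)\,d\,t)$ per block) and draw $i\in I_j$ with exact probability $\|M_{i*}\|_2/\sum_{i'\in I_j}\|M_{i'*}\|_2$. The output $U$ is an orthonormal basis of $(I-BB^\mathsf{T})[\hat X\,|\,A_{\bS}^\mathsf{T}]$.

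For \emph{correctness}, the composed probability of sampling row $i\in I_j$ is $p_i=(\widetilde w_j/\sum_{j'}\widetilde w_{j'})\cdot\|M_{i*}\|_2/\sum_{i'\in I_j}\|M_{i'*}\|_2=\Omega(\|M_{i*}\|_2/\|M\|_{1,2})$, using the $(1\pm 1/2)$ accuracy of $\widetilde w_j$. A union bound over the $n$ rows, exactly as in the proof of Theorem~\ref{thm:algo-eps-approx-proof}, gives $\|M_{i*}\|_2=\Theta(\|A_{i*}(I-BB^\mathsf{T})(I-\hat X\hat X^\mathsf{T})\|_2)$ simultaneously for all $i$, since $\bG$ has $O(\log n)$ scaled-Gaussian columns. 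So $p_i\ge\alpha\cdot q_i/\sum_{i'}q_{i'}$ with $\alpha=\Omega(1)$, and Lemma~\ref{lma:residual-sampling-main-lemma} applied to $A(I-BB^\mathsf{T})$ delivers the claimed $(1+\epsilon)\textnormal{SubApx}_{k,1}(A(I-BB^\mathsf{T}))$ bound. The projection $I-BB^\mathsf{T}$ applied before orthonormalization enforces $U^\mathsf{T}B=0$, and $U$ has $\poly(Kk/\epsilon)$ columns.

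The main new ingredient is the \emph{fast block estimation}. For a Gaussian $g\sim\mathcal N(0,I_t)$, each $|M_{i*}g|$ is half-normal with mean $\sqrt{2/\pi}\,\|M_{i*}\|_2$, so $\mathbb E\|(Mg)_{I_j}\|_1=\sqrt{2/\pi}\sum_{i\in I_j}\|M_{i*}\|_2$. I would draw $L=O(\log(nb/\delta))$ independent Gaussians $g_1,\dots,g_L\in\R^t$ and estimate $\|A_{I_j}Zg_\ell\|_1$ by $\mathrm{median}(|\bC_1 A_{I_j}(Zg_\ell)|)$, invoking the median-of-Cauchy property of $\bC_1$ (valid for all $bL$ pairs by a union bound since $\bC_1$ has $O(\log(nb/\delta))$ rows). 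Rescaling by $\sqrt{\pi/2}$ and averaging over $\ell$ yields $\widetilde w_j$; a Bernstein-type bound on the average of non-negative half-normal terms delivers $(1\pm 1/2)$-concentration per block with failure probability $1/\poly(nb)$, and a union bound covers all $b$ blocks. Because $\bC_1 A_{I_j}$ is precomputed, forming $(\bC_1 A_{I_j})\cdot Z$ costs $\tilde O(d)$ per block, so $\tilde O(bd)$ across all blocks.

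Adding up the \emph{running time}: $\tilde O(d\poly(k/\epsilon))$ for $Z$, $\tilde O(bd)$ for all $\widetilde w_j$, $\tilde O((n/b)\cdot d\cdot s)$ for the within-block exact work on sampled blocks, and $\tilde O(d\poly(Kk/\epsilon))$ for the final basis. Choosing $b$ so that $bd\le (nd/b)\cdot s$ (e.g.\ $b\le\sqrt{ns}$) gives the stated $\tilde O((nd/b)\cdot Kk^3/\epsilon^2\log(1/\delta)+d\poly(k/\epsilon))$. The main obstacle will be controlling the two layers of randomness inside $\widetilde w_j$ simultaneously over all blocks: the Gaussian projection turns per-row $\ell_2$ norms into half-normals whose \emph{expectation}, not a single realization, equals the target, while the Cauchy median only estimates the per-realization $\ell_1$ norm. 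Once $L$ and the height of $\bC_1$ are tuned and a variance/Bernstein argument handles the non-negative half-normals, $(1\pm 1/2)$ uniformity across blocks follows, after which correctness slots directly into the residual-sampling framework used to prove Theorem~\ref{thm:algo-eps-approx-proof}.
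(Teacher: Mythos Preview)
Your high-level strategy---partition into $b$ blocks, estimate block residual sums from the precomputed $\bC_1 A_{I_j}$, sample a block proportionally to its estimate, then compute exact row norms only inside sampled blocks, and finally invoke Lemma~\ref{lma:residual-sampling-main-lemma}---is exactly the paper's algorithm. The difference lies in how the block sums $\sum_{i\in I_j}\|A_{i*}(I-BB^\mathsf T)(I-\hat X\hat X^\mathsf T)\|_2$ are estimated.

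The paper's route is shorter. It observes that for each row $i$ the $t=O(\log n)$ Gaussian columns of $\bG$ already satisfy $(1/t)\|A_{i*}(\cdot)\bG\|_1\in(1\pm c)\sqrt{2/\pi}\,\|A_{i*}(\cdot)\|_2$ with high probability (the Plan--Vershynin $\ell_1$-of-Gaussian lemma), so once this per-row event holds, summing over $i\in I_j$ gives $\|A_{I_j}(\cdot)\bG\|_{1,1}\approx t\sqrt{2/\pi}\sum_{i\in I_j}\|A_{i*}(\cdot)\|_2$ deterministically. Then $\apx_j=\sum_{\text{col}}\mathrm{median}\bigl(\mathrm{abs}\bigl(\bC_1 A_{I_j}(\cdot)\bG_{*\text{col}}\bigr)\bigr)$ is a sum of $t$ Cauchy-median estimates, one per existing column of $\bG$. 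No extra Gaussians $g_\ell$ and no separate concentration step are needed; the columns of $\bG$ are themselves the ``half-normal samples.''

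Your approach also works, but one point needs tightening: the variable you average over $\ell$ is $\|M_{I_j}g_\ell\|_1$, the $\ell_1$ norm of a \emph{correlated} Gaussian vector, not a single half-normal, so a plain ``Bernstein on half-normals'' does not directly apply. What does apply is Gaussian Lipschitz concentration: $g\mapsto\|M_{I_j}g\|_1$ is $\|M_{I_j}\|_{2\to 1}$-Lipschitz and $\|M_{I_j}\|_{2\to 1}\le\sum_{i\in I_j}\|M_{i*}\|_2$, so each $X_\ell$ is subgaussian about its mean with parameter comparable to that mean, and averaging $L=O(\log(b/\delta))$ copies gives the $(1\pm 1/2)$ guarantee you claim. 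With this fix your argument is complete; the paper's version simply sidesteps the issue by using per-row concentration (Plan--Vershynin) instead of per-block averaging, which is why you will not find the extra Gaussian layer or a Bernstein step there.
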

Note that Algorithm~\ref{alg:epsapprox} samples $O(K \cdot \poly(k/\epsilon))$ rows using probabilities proportional to the residuals of the rows with respect to the $O(1)$ approximation. We again divide the rows of the matrix $A(I-BB^\mathsf{T})(I-\hat{X}\hat{X}^\mathsf{T})$  into $b$ parts denoted by $I_1,\ldots, I_b$. We use the following fact from \cite{plain-vershynin-one-bit-compressive-sensing}: If $\bG$ is a Gaussian matrix with $m$ columns, then with high probability $(1/m)\|x^\mathsf{T}\bG\|_1 \approx (\sqrt{2/\pi})\|x\|_2$. Thus $\|A_{I_j*}(I-BB^\mathsf{T})(I-\hat{X}\hat{X}^\mathsf{T})\|_{1,2}$ can be approximated by scaling $\|A_{I_j*}(I-BB^\mathsf{T})(I-\hat{X}\hat{X}^\mathsf{T})\bG\|_{1,1}$, i.e., the sum of absolute values of entries of the matrix $A_{I_j*}(I-BB^\mathsf{T})(I-\hat{X}\hat{X}^\mathsf{T})\bG$. We show that these approximations can be computed quickly given the precomputed matrices as required in the theorem statement. Given the approximations for the sum of residuals of rows in each $I_j$, we only have to compute the residuals of $(n/b)c$ rows to sample $c$ rows from the distribution of residuals.

Replacing Algorithms~\ref{alg:polykapprox}~and~\ref{alg:epsapprox} by algorithms given by the above theorems lets us compute a subspace satisfying \eqref{eqn:extension-property} in time $\tilde{O}((nd/b)k^{3.5}/\epsilon^6 + (n+d)\poly(k/\epsilon) + T)$, where $T$ is the time required to compute the precomputed matrices required by the algorithms. By choosing $b = k^{3.5}/\epsilon^6$, we obtain that a subspace can be computed in time $\tilde{O}(nd + (n+d)\poly(k/\epsilon) + T)$. Notice that the above theorems only require at most $\poly(k/\epsilon)$ products of the form $MA$ for matrices $M$ of at most $\poly(k/\epsilon)$ rows. These products can be obtained by computing $\mathcal{M}A$, where $\mathcal{M}$ is formed by stacking all the matrices $M$. Now $\mathcal{M}$ only has $\poly(k/\epsilon)$ rows and the product $\mathcal{M}A$ can be computed in time $\tilde{O}(nd)$ by using the fast rectangular matrix multiplication algorithm of \citet{coppersmith}. Thus, a subspace satisfying \eqref{eqn:extension-property} can be computed in time $\tilde{O}(nd + (n+d)\poly(k/\epsilon))$.
\section{Experiments}
We perform experiments to empirically verify that we can attain a non-trivial amount of data reduction while still being able to compute an approximate sum of distances to a $k$-dimensional shape. In our experiments, we set $n = 10000$ and $k = 5$. We use various subspaces to compute an approximation to the sum of distances to a $k$ center set.
\subsection{Synthetic Data}
We set $d = 10000$ and we choose a set $C$ of $5$ centers in $\R^d$ randomly. For each center, we add Cauchy noise to generate $2000$ independent samples and hence obtain a dataset $A$ of size $10000$. We run our algorithm with a target dimension of $100$ and store the subspaces at intermediate steps of the loop in Algorithm~\ref{alg:dimensionreduction}. We compare the approximation computed for $\sum_i \dist(A_{i*}, C)$ using the subspace computed by Algorithm~\ref{alg:dimensionreduction}, a random subspace, and the top singular value subspace of the same dimensions. See Figure~\ref{fig:synthetic-data} for a plot. We note that for all dimensions, the subspace output by Algorithm~\ref{alg:dimensionreduction} does better than a random subspace to approximate the sum of distances and particularly at lower dimensions, the subspace output by Algorithm~\ref{alg:dimensionreduction} does better than the top singular value subspace of the same dimension.
\begin{figure}
    \centering
    \includegraphics[width=\linewidth]{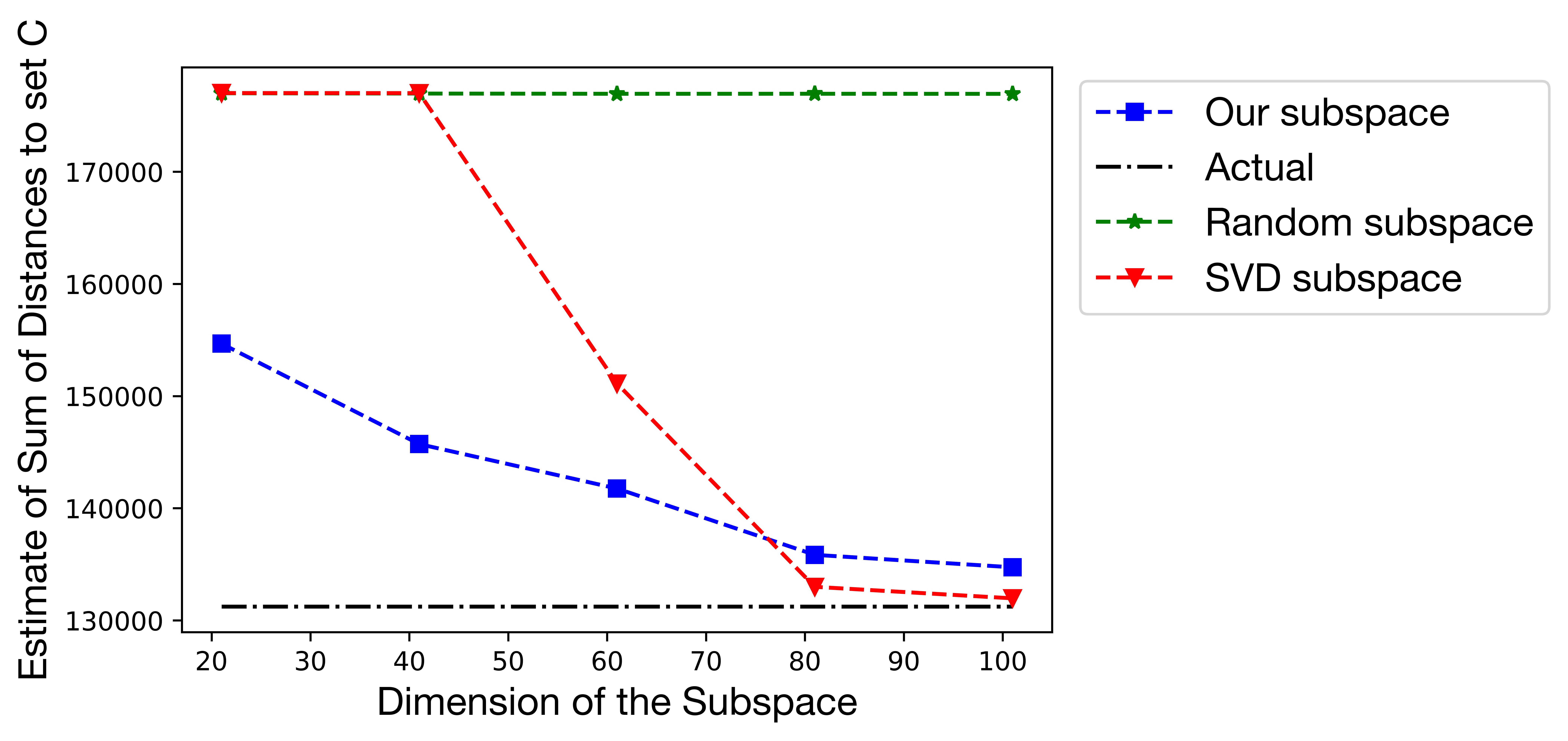}
    \caption{Comparison of subspaces output by Algorithm~\ref{alg:dimensionreduction} on \texttt{Synthetic} dataset with Random and Singular Value Subspaces}
    \label{fig:synthetic-data}
\end{figure}
\begin{figure}[h]
    \centering
    \includegraphics[width=\linewidth]{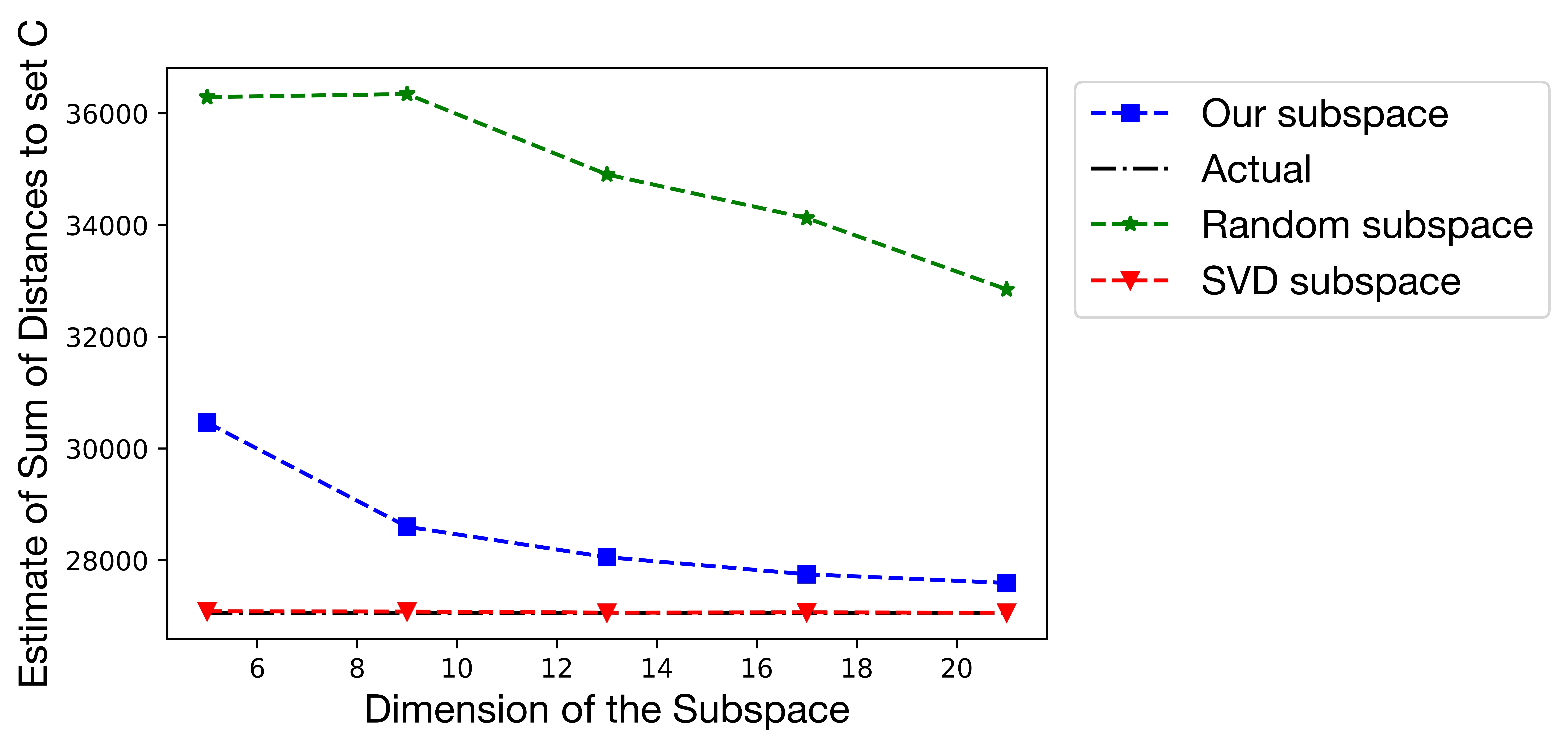}
    \caption{Comparison of subspaces output by Algorithm~\ref{alg:dimensionreduction} on \texttt{CoverType} dataset with random and singular value subspaces}
    \label{fig:covtype-data}
\end{figure}
\subsection{Real-World Data}
We run our dimensionality reduction algorithm on a randomly chosen subset $A$ of size $10000$ of the  \texttt{CoverType} dataset \cite{covtype}. We compute a $k$-means solution $C$ on the dataset and then evaluate the sum of distances to the center set $C$. Similar to the case of synthetic data, we compare the approximate sum of distances to $C$ computed using the subspace output by Algorithm~\ref{alg:dimensionreduction}, a random subspace, and the top singular value subspace. See Figure~\ref{fig:covtype-data} for a plot. We note that, again, the subspace output by our algorithm performs better than the random subspace at all dimensions. But note that the singular value subspace approximates the sum of distances to the set $C$  better than the subspace output by our algorithm. This occurs due to the fact that the data is inherently low dimensional and therefore if $P$ is the top singular value subspace, then $\P_Pa_i \approx a_i$ and $\dist(a_i, P) \approx 0$ and therefore, $\sqrt{\dist(a_i,P)^2 + \dist(\P_Pa_i, C)^2} \approx \dist(a_i, C)$. Thus, if the matrix $A$ can be approximated well with a low dimensional matrix, then we can instead use the top singular value subspace to reduce the dimension of the data and still be able to compute an approximate sum of distances to $k$ dimensional shapes, although we do not have any theoretical bounds for singular value subspaces.
\section*{Code}
An implementation of our Algorithm~\ref{alg:dimensionreduction} and code for our experiments is available \href{https://gitlab.com/praneeth10/dimensionality-reduction-for-sum-of-distances}{here}\footnote{\url{https://gitlab.com/praneeth10/dimensionality-reduction-for-sum-of-distances}}.
\section*{Acknowledgements}
The authors would like to thank support from the National Institute of Health (NIH) grant 5R01 HG 10798-2, Office of Naval Research (ONR) grant N00014-18-1-256, and a Simons Investigator Award.
\bibliographystyle{plainnat}
\bibliography{main.bib}
\onecolumn
\appendix
\section{Missing proofs from Section~\ref{sec:sum-of-distances}}
\begin{theorem}[{A version of Lemma~14 of \cite{sohler2018strong}}]
	Let $P$ be an $r$ dimensional subspace of $\R^d$ such that
	\begin{equation*}
		\sum_i \dist(a_i,P) - \sum_i \dist(a_i,\spn(P \cup H)) \le \frac{\epsilon^2}{80}\textnormal{SubApx}_{k,1}(A)
	\end{equation*}
	for all $k$-dimensional subspaces $H$. Let $B \in \R^{d \times r}$ be an orthonormal basis for the subspace $P$. For each $a_i$, let $a_i^B \in \R^r$ be such that
	$
		\dist(a_i, Ba_i^B) \le (1+\epsilon_c)\dist(a_i,P)
	$
	and let $(1-\epsilon_c)\dist(a_i,P) \le \apx_i \le (1+\epsilon_c)\dist(a_i,P)$ for $\epsilon_c = \epsilon^2/6$. Then for any $k$ dimensional shape $S$,
	\begin{equation*}
		\sum_i \sqrt{\dist(Ba_i^B,S)^2 + \apx_i^2} = (1 \pm 5\epsilon)\sum_i \dist(a_i,S)
	\end{equation*}
\end{theorem}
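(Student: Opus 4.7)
The plan is to reduce the approximate-projection statement to the exact version of Sohler--Woodruff's Lemma~14, which reads $\sum_i \sqrt{\dist(a_i,P)^2 + \dist(\P_P a_i,S)^2} = (1\pm\epsilon)\sum_i \dist(a_i,S)$ under the same extension property. Given this, the remaining task is to control the additive error incurred by (i) replacing $\P_P a_i$ by $Ba_i^B$ inside the first term and (ii) replacing $\dist(a_i,P)$ by $\apx_i$ in the second term.

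First I would bound the perturbation of the projection. Since $Ba_i^B$ lies in $P$ while $a_i - \P_P a_i \perp P$, the Pythagorean identity gives
\begin{equation*}
\|a_i - Ba_i^B\|_2^2 \;=\; \|a_i - \P_P a_i\|_2^2 \;+\; \|\P_P a_i - Ba_i^B\|_2^2.
\end{equation*}
Combined with $\|a_i - Ba_i^B\|_2 \le (1+\epsilon_c)\dist(a_i,P)$, this yields $\|\P_P a_i - Ba_i^B\|_2 \le \sqrt{(1+\epsilon_c)^2-1}\,\dist(a_i,P) \le \sqrt{3\epsilon_c}\,\dist(a_i,P)$ for small $\epsilon_c$. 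By the triangle inequality for distance to a set, $|\dist(Ba_i^B,S) - \dist(\P_P a_i,S)| \le \sqrt{3\epsilon_c}\,\dist(a_i,P)$. The assumption on $\apx_i$ immediately gives $|\apx_i - \dist(a_i,P)| \le \epsilon_c\,\dist(a_i,P)$. Next, since the map $(x,y)\mapsto\sqrt{x^2+y^2}$ is $1$-Lipschitz in the Euclidean norm on $\R^2$, I can combine the two per-coordinate bounds into
\begin{equation*}
\bigl|\sqrt{\dist(Ba_i^B,S)^2+\apx_i^2} - \sqrt{\dist(\P_P a_i,S)^2+\dist(a_i,P)^2}\bigr|
\le (\sqrt{3\epsilon_c}+\epsilon_c)\,\dist(a_i,P).
\end{equation*}

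Summing over $i$ and plugging in $\epsilon_c = \epsilon^2/6$ makes the prefactor at most $\epsilon/\sqrt{2}+\epsilon^2/6 \le \epsilon$ for $\epsilon$ small. What remains is to control $\sum_i \dist(a_i,P)$ by $\sum_i \dist(a_i,S)$. Taking $W = \textnormal{span}(S)$ in the extension property gives $\sum_i \dist(a_i,P) \le \sum_i \dist(a_i,P+\textnormal{span}(S)) + (\epsilon^2/80)\textnormal{SubApx}_{k,1}(A)$, and both terms on the right are bounded by $(1+\epsilon^2/80)\sum_i \dist(a_i,S)$ using $\textnormal{SubApx}_{k,1}(A) \le \sum_i \dist(a_i,\textnormal{span}(S)) \le \sum_i \dist(a_i,S)$. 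Thus the accumulated perturbation error is at most roughly $2\epsilon \sum_i \dist(a_i,S)$, and adding the $\pm\epsilon$ slack from the exact Sohler--Woodruff statement yields the claimed $(1\pm 5\epsilon)$ bound (the constant $5$ absorbs the lower-order factors in the multiplicative combination of the two $(1\pm\cdot)$ errors).

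The main subtlety, and presumably the reason the authors flag an error in the original Sohler--Woodruff Equation~2, lies in applying Pythagoras cleanly with an \emph{approximate} projection: one must use that $Ba_i^B \in P$ (so its deviation from $\P_P a_i$ is also in $P$ and therefore orthogonal to $a_i - \P_P a_i$), rather than arguing naively via the triangle inequality on $\|a_i - Ba_i^B\|$, which would lose a factor and not give the quadratic-in-$\epsilon_c$ improvement needed to make $\epsilon_c = \Theta(\epsilon^2)$ suffice. Once this step is set up correctly, the rest is bookkeeping with Lipschitz bounds and the extension property.
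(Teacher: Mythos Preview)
Your proof is correct, and it is a genuinely different (and cleaner) route than the paper's. After the identical Pythagorean step giving $\|\P_P a_i - B a_i^B\|_2 \le \sqrt{3\epsilon_c}\,\dist(a_i,P)$, the paper does \emph{not} pass through the $1$-Lipschitz property of $(x,y)\mapsto\sqrt{x^2+y^2}$. Instead it splits indices into ``small'' ($\dist(\P_P a_i,S) \le \dist(\P_P a_i, Ba_i^B)$) and ``large'', handles the small case by a direct bound, and for the large case uses the triangle inequality together with an AM--GM argument to obtain a \emph{per-term multiplicative} bound of the form $\dist(Ba_i^B,S)^2 + \apx_i^2 = (1\pm O(\epsilon))\bigl(\dist(\P_P a_i,S)^2 + \dist(a_i,P)^2\bigr)$; only then does it sum and invoke Theorem~8 of Sohler--Woodruff.

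Your approach instead produces a per-term \emph{additive} error of size $O(\sqrt{\epsilon_c})\,\dist(a_i,P)$, and then uses the extension property one extra time (with $W=\mathrm{span}(S)$) to bound $\sum_i \dist(a_i,P)$ by $(1+o(1))\sum_i \dist(a_i,S)$, converting the additive error to multiplicative globally. This is shorter and avoids the case split entirely. The trade-off is that the paper's argument yields the stronger pointwise statement $\sqrt{\dist(Ba_i^B,S)^2+\apx_i^2} = (1\pm O(\epsilon))\sqrt{\dist(\P_P a_i,S)^2+\dist(a_i,P)^2}$ for every $i$, which your Lipschitz argument does not give (your per-term error is controlled by $\dist(a_i,P)$, not by the term itself). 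For the theorem as stated, which only concerns the sum, your argument suffices and is more elementary.
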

\begin{proof}
We have by the Pythagorean theorem that $\dist(Ba_i^B,a_i)^2 = \dist(Ba_i^B,\P_Pa_i)^2 + \dist(a_i,P)^2 \le (1+3\epsilon_c)\dist(a_i,P)^2$ which implies that $\dist(Ba_i^B,\P_Pa_i)^2 \le (3\epsilon_{c})\dist(a_i,P)^2$.

Given a shape $S$, we partition $[n]$ into two sets \emph{small} and \emph{large}. We say $i \in [n]$ is \emph{small} if
$
    \dist(\P_Pa_i, S) \le \dist(\P_Pa_i, Ba_i^B).
$
In that case, $\dist(Ba_i^B,S)^2 \le 4\dist(\P_Pa_i, Ba_i^B)^2 \le 12\epsilon_c\dist(a_i,P)^2$ by the triangle inequality and $\sqrt{\dist(Ba_i^B,S)^2 + \apx_i^2} \le \sqrt{1 + 15\epsilon_c}\dist(a_i,P) \le \sqrt{1+15\epsilon_c}\sqrt{\dist(a_i,P)^2 + \dist(\P_Pa_i, S)^2}$. Similarly, $\sqrt{\dist(Ba_i^B,S)^2 + \apx_i^2} \ge \apx_i \ge (1-\epsilon_c)\dist(a_i,P) \ge (1-4\epsilon_c)\sqrt{\dist(\P_Pa_i,S)^2 + \dist(a_i, P)^2}$ by using the fact that $\dist(\P_Pa_i, S)^2 \le 3\epsilon_c\dist(a_i,P)^2$.

We say that any $i \in [n]$ that is not \emph{small}, is \emph{large}. By the triangle inequality, we obtain that
\begin{equation}
    \dist(\P_Pa_i, S) - \dist(\P_Pa_i, Ba_i^B) \le \dist(Ba_i^B,S) \le \dist(\P_Pa_i, S) + \dist(Ba_i^B, \P_Pa_i).
\end{equation}
As $i$ is \emph{large}, $\dist(\P_Pa_i, S) - \dist(\P_Pa_i, Ba_i^B) > 0$ and therefore by the AM-GM inequality, we obtain that
	\begin{equation*}
		\dist(Ba_i^B,S)^2 = (1 \pm \epsilon)\dist(\P_Pa_i,S)^2 + \left(1 \pm \frac{1}{\epsilon}\right)\dist(Ba_i^B,\P_Pa_i)^2.
	\end{equation*}
 Thus,
	$
		\dist(Ba_i^B,S)^2 \le (1+\epsilon)\dist(\P_Pa_i,S)^2 + ({2}/{\epsilon})(3\epsilon_c)\dist(a_i,P)^2
	$
	and
	$
		\dist(Ba_i^B,S)^2 \ge (1 - \epsilon)\dist(\P_Pa_i,S)^2 - ({1}/{\epsilon})(3\epsilon_c)\dist(a_i,P)^2.
	$
	Letting $\epsilon_c = \epsilon^2/6$, we finally have
	\begin{equation*}
		\dist(Ba_i^B,S)^2 + \apx_i^2 \le (1+\epsilon)\dist(\P_Pa_i,S)^2 + (1 + 2\epsilon)\dist(a_i,P)^2
	\end{equation*}
	and
	\begin{equation*}
		\dist(Ba_i^B,S)^2 + \apx_i^2 \ge (1-\epsilon)\dist(\P_Pa_i,S)^2  + (1 - 3\epsilon)\dist(a_i,P)^2.
	\end{equation*}
	Therefore, by combining both \emph{small} and \emph{large} indices,
	\begin{equation*}
		\sum_i \sqrt{\dist(Ba_i^B,S)^2 + \apx_i^2} \le \sqrt{1+O(\epsilon)}\sum_i \sqrt{\dist(\P_Pa_i,S)^2 +\dist(a_i,P)^2}
	\end{equation*}
	and
	\begin{equation*}
		\sum_i \sqrt{\dist(Ba_i^B,S)^2 + \apx_i^2} \ge \sqrt{1-O(\epsilon)}\sum_i \sqrt{\dist(\P_Pa_i,S)^2 +\dist(a_i,P)^2}.
	\end{equation*}
	The theorem now follows from Theorem~8 of \cite{sohler2018strong}.
\end{proof}
\section{Missing Proofs from Section~\ref{sec:mainsparse}}
\subsection{Lopsided Embeddings and Gaussian Matrices}
Recall $\|\cdot\|_h$ is defined  as $\|A\|_h = \sum_j \|A_{*j}\|_2$. Note that $\|A\|_h = \|A^\mathsf{T}\|_{1,2}$ for all matrices $A$. The following lemma shows that lopsided-$\epsilon$ embeddings for certain matrices w.r.t.  the norm $\|\cdot\|_h$ imply a dimension reduction for $\|\cdot\|_{1,2}$ subspace approximation.
\begin{lemma}
	Given a matrix $A \in \R^{n \times d}$ and a parameter $k \in \mathbb{Z}_{> 0}$, let $U_k \in \R^{n \times k}$ and $V_k^\mathsf{T} \in \R^{k \times d}$ be matrices such that
	\begin{equation*}
		\|U_kV_k^\mathsf{T} - A\|_{1,2} = \min_{\text{rank-}k\ X}\|A(I-X)\|_{1,2}.
	\end{equation*}
	If $S$ is a lopsided $\epsilon$-embedding for $(V_k, A^\mathsf{T})$ with respect to the norm  $\|\cdot\|_h$, then
	\begin{equation*}
		\min_{\text{rank-}k\ X}\|AS^\mathsf{T}X - A\|_{1,2} \le (1+O(\epsilon))\min_{\text{rank-k}\ X}\|A(I-X)\|_{1,2}.
	\end{equation*}
	\label{lma:lopsided-to-epsilon-approximation}
\end{lemma}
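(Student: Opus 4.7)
My plan is to run a standard ``sketch-and-solve'' argument: use the lopsided embedding to certify that the minimizer $\hat Y$ of the sketched column-decoupled least-squares problem $\min_Y \|SV_k Y - SA^T\|_h$ is also a near-optimal solution of the unsketched $\min_Y \|V_k Y - A^T\|_h$, and then exploit the closed form of $\hat Y$ to rewrite $(V_k\hat Y)^T$ as $AS^\mathsf{T} X^\star$ for an explicit rank-$k$ matrix $X^\star$.

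First I would reinterpret the right-hand side in a form the lopsided embedding can handle. Since $\|\cdot\|_h$ sums the $\ell_2$-norms of columns, the problem $\min_Y \|V_k Y - A^\mathsf{T}\|_h$ decouples into $n$ independent least-squares problems, and its minimum-norm minimizer is $Y^\star = V_k^+ A^\mathsf{T}$; its optimum equals $\sum_i \dist(A_{i*}, \text{span}(V_k)) = \|U_k V_k^\mathsf{T} - A\|_{1,2}$. The same column-decoupling argument shows that the sketched problem $\min_Y \|SV_k Y - SA^\mathsf{T}\|_h$ has the closed-form minimizer $\hat Y = (SV_k)^+ SA^\mathsf{T}$.

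Next I would chain the two inequalities that define a lopsided $\epsilon$-embedding. Property (i) at $Y = \hat Y$ gives $(1-\epsilon)\|V_k\hat Y - A^\mathsf{T}\|_h \le \|S(V_k\hat Y - A^\mathsf{T})\|_h$; the sketched optimality of $\hat Y$ gives $\|S(V_k\hat Y - A^\mathsf{T})\|_h \le \|S(V_k Y^\star - A^\mathsf{T})\|_h$; and property (ii) gives $\|S(V_k Y^\star - A^\mathsf{T})\|_h \le (1+\epsilon)\|V_k Y^\star - A^\mathsf{T}\|_h$. Rearranging, $\|V_k\hat Y - A^\mathsf{T}\|_h \le \tfrac{1+\epsilon}{1-\epsilon}\|U_k V_k^\mathsf{T} - A\|_{1,2} = (1+O(\epsilon))\|U_k V_k^\mathsf{T} - A\|_{1,2}$.

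The step carrying the real content, and the one I expect to be the main obstacle, is converting this bound on $\|V_k\hat Y - A^\mathsf{T}\|_h$ into a bound on the target quantity $\min_{\text{rank-}k\, X}\|AS^\mathsf{T} X - A\|_{1,2}$, since $V_k\hat Y$ a priori lives in the column span of $V_k$ rather than of $AS^\mathsf{T}$. The closed-form for $\hat Y$ is exactly what saves us: taking the transpose and regrouping, $(V_k\hat Y)^\mathsf{T} = (V_k(SV_k)^+ SA^\mathsf{T})^\mathsf{T} = AS^\mathsf{T} \cdot (SV_k)^{+\mathsf{T}} V_k^\mathsf{T}$. Set $X^\star := (SV_k)^{+\mathsf{T}} V_k^\mathsf{T}$, which is a product of an $r\times k$ and a $k\times d$ matrix and so has rank at most $k$; then $AS^\mathsf{T} X^\star = (V_k\hat Y)^\mathsf{T}$ by construction. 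Using $\|M^\mathsf{T}\|_{1,2} = \|M\|_h$ converts the earlier estimate into $\|AS^\mathsf{T} X^\star - A\|_{1,2} \le (1+O(\epsilon))\|U_k V_k^\mathsf{T} - A\|_{1,2}$, which upper-bounds the minimum over rank-$k$ $X$ and completes the proof.
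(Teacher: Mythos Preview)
Your proposal is correct and follows essentially the same route as the paper: both solve the sketched regression $\min_Y \|SV_k Y - SA^{\mathsf T}\|_h$ in closed form, chain property~(i), sketched optimality, and property~(ii) to get $\|V_k\hat Y - A^{\mathsf T}\|_h \le \frac{1+\epsilon}{1-\epsilon}\|U_kV_k^{\mathsf T}-A\|_{1,2}$, and then exhibit $X^\star = (SV_k)^{+\mathsf T}V_k^{\mathsf T} = (V_k^{\mathsf T}S^{\mathsf T})^{+}V_k^{\mathsf T}$ as the rank-$k$ witness. The only difference is cosmetic: the paper parametrizes by $Y^{\mathsf T}$ and works in the $\|\cdot\|_{1,2}$ picture, while you transpose first and stay in the $\|\cdot\|_h$ picture.
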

\begin{proof}
	Note that $\|V_kU_k^\mathsf{T} - A^\mathsf{T}\|_h = \min_{Y}\|V_kY^\mathsf{T} - A^\mathsf{T}\|_h$. By definition of a lopsided embedding, we have the following for any matrix $Y$:
	\begin{equation*}
		\|YV_k^\mathsf{T}S^\mathsf{T} - AS^\mathsf{T}\|_{1,2} = \|SV_kY^\mathsf{T} - SA^\mathsf{T}\|_h \ge (1-\epsilon)\|V_kY^\mathsf{T} - A^\mathsf{T}\|_h = (1-\epsilon)\|YV_k^\mathsf{T} - A\|_{1,2}
	\end{equation*}
	and also that
	\begin{equation*}
		\|U_kV_k^\mathsf{T}S^\mathsf{T} - AS^\mathsf{T}\|_{1,2} = \|SV_kU_k^\mathsf{T} - SA^\mathsf{T}\|_h \le (1+\epsilon)\|V_kU_k^\mathsf{T} - A^\mathsf{T}\|_h = (1+\epsilon)\|U_kV_k^\mathsf{T} - A\|_{1,2}.
	\end{equation*}
	Using these guarantees we now show that the column span of the matrix $AS^\mathsf{T}$ contains a good solution to the subspace approximation problem. First consider the minimization problem
	\begin{equation*}
		\min_{Y}\|YV_k^\mathsf{T} - A\|_{1,2}.
	\end{equation*}
	Clearly, $U_k$ is the optimal solution to the problem. Now consider the optimal solution $\tilde{Y}$ to the sketched version of the above problem
	\begin{align*}
		\tilde{Y} = \argmin_{Y}\|YV_k^\mathsf{T}S^\mathsf{T} - AS^\mathsf{T}\|_{1,2}.
	\end{align*}
	We can see that $\tilde{Y} = (AS^\mathsf{T})(V_k^\mathsf{T}S^\mathsf{T})^+$. Now
	\begin{equation*}
		\|\tilde{Y}V_k^\mathsf{T} - A\|_{1,2} \le \frac{1}{1-\epsilon}\|\tilde{Y}V_k^\mathsf{T}S^\mathsf{T} - AS^\mathsf{T}\|_{1,2} \le \frac{1}{1-\epsilon}\|U_KV_k^\mathsf{T}S^\mathsf{T} - AS^\mathsf{T}\| \le \frac{1+\epsilon}{1-\epsilon}\|U_kV_k^\mathsf{T} - A\|_{1,2}.
	\end{equation*}
	Therefore, 
	\begin{equation*}
		\min_{\text{rank-k}\ X}\|AS^\mathsf{T}X - A\|_{1,2} \le \|AS^\mathsf{T}(V_k^\mathsf{T}S^\mathsf{T})^+(V_k^\mathsf{T}) - A\|_{1,2} \le \frac{1+\epsilon}{1-\epsilon}\|U_kV_k^\mathsf{T} - A\|_{1,2} \le (1+3\epsilon)\min_{\text{rank-}k\ X}\|A(I-X)\|_{1,2}.
	\end{equation*}
	Thus, if the number of rows of $S$ is less than $d$, we obtain a dimension reduction for $\|\cdot\|_{1,2}$ subspace approximation.
\end{proof}
\citet{clarkson2015input} give the following sufficient conditions for a distribution of matrices to be an $\epsilon$-lopsided embedding for $(A,B)$. For the sake of completeness we reproduce their proof here.
\begin{lemma}[Sufficient Conditions]
	Given matrices $(A,B)$, let $\bS$ be a matrix drawn from a distribution such that
	\begin{enumerate}
		\item the matrix $\bS$ is a subspace $\epsilon$-contraction for $A$ with respect to $\|\cdot\|_2$, i.e., simultaneously for all vectors $x$
		\begin{equation*}
			\|\bS Ax\|_2 \ge (1-\epsilon)\|Ax\|_2
		\end{equation*}
		with probability $1-\delta/3$,
		\item for all $i \in [d']$, with probability at least $1-\delta\epsilon^2/3$ the matrix $\bS$ is a subspace $\epsilon^2$-contraction for $[A\ B_{*i}]$ with respect to $\|\cdot\|_2$, i.e., for all vectors $x$,
		\begin{equation*}
			\|\bS Ax - \bS B_{*i}\|_2 \ge (1-\epsilon^2)\|Ax-B_{*i}\|_2,
		\end{equation*}
		 and
		\item the matrix $\bS$ is an $\epsilon^2$-dilation for $B^*$ with respect to $\|\cdot\|_h$, i.e., $\|\bS B^*\|_h \le (1+\epsilon^2)\|B^*\|_h$ with probability $\ge 1 - \delta/3$.
	\end{enumerate}
	In the Condition 3 above, $B^* = AX^*-B$ where $X^* = \argmin_{X}\|AX-B\|_h$. With failure probability at most $\delta$, the matrix $\bS$ is an affine $6\epsilon$-contraction for $(A,B)$ with respect to $\|\cdot\|_h$, i.e., for all matrices $X$,
	\begin{equation*}
		\|\bS(AX-B)\|_h \ge (1- 6\epsilon)\|AX-B\|_h
	\end{equation*}
	and therefore a lopsided $6\epsilon$-embedding for $(A,B)$ with respect to $\|\cdot\|_h$.
	\label{lma:lopsided-construction}
\end{lemma}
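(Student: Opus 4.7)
My plan is to establish the affine $6\epsilon$-contraction by a columnwise split of the $d'$ columns of $B$ into a ``good'' set where condition~2 holds and a ``bad'' set where it fails; the lopsided embedding will then follow since the dilation half of the lopsided definition is exactly condition~3. Set $X^* = \argmin_X \|AX - B\|_h$ and $B^* = A X^* - B$. For any competing $X$, write column $j$ as $(AX-B)_{*j} = A y_j + e_j$ with $y_j = (X - X^*)_{*j}$ and $e_j = B^*_{*j}$, so that $\|AX - B\|_h = \sum_j \|A y_j + e_j\|_2$. Let $Z_j \in \{0,1\}$ indicate failure of condition~2 for column $j$, so that $\E[Z_j] \le \delta\epsilon^2/3$, and put $F = \{j : Z_j = 1\}$ and $G = [d'] \setminus F$.

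The first step is a Markov argument on $\sum_j Z_j \|e_j\|_2$: its expectation is at most $(\delta\epsilon^2/3)\|B^*\|_h$, so with probability at least $1 - \delta/3$ we have $\sum_{j \in F} \|e_j\|_2 \le \epsilon^2 \|B^*\|_h$. A union bound with conditions~1 and~3, each failing with probability at most $\delta/3$, lets us condition on all three good events at overall failure probability at most $\delta$.

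The crux of the proof, and the step I expect to require the most care, is also to upper bound $\sum_{j \in F} \|\bS e_j\|_2$. For each $j \in G$, instantiating condition~2 at $x = X^*_{*j}$ gives $\|\bS e_j\|_2 \ge (1 - \epsilon^2)\|e_j\|_2$; summing and applying the Markov bound gives $\sum_{j \in G} \|\bS e_j\|_2 \ge (1 - \epsilon^2)^2 \|B^*\|_h$. Subtracting this from the aggregate ceiling $\|\bS B^*\|_h \le (1 + \epsilon^2)\|B^*\|_h$ supplied by condition~3 yields $\sum_{j \in F} \|\bS e_j\|_2 \le 3 \epsilon^2 \|B^*\|_h$. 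This simultaneous use of condition~3 as an aggregate upper bound and of condition~2 on $G$ to peel off the $G$-portion of $\|\bS B^*\|_h$ is what lets us control the bad columns.

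With both $F$-sums bounded, the rest is bookkeeping. On $G$, condition~2 directly gives $\|\bS g_j\|_2 \ge (1 - \epsilon^2) \|g_j\|_2$, where $g_j = A y_j + e_j$. On $F$, the triangle inequality combined with condition~1 gives $\|\bS g_j\|_2 \ge (1 - \epsilon)\|A y_j\|_2 - \|\bS e_j\|_2$, and substituting the reverse triangle inequality $\|A y_j\|_2 \ge \|g_j\|_2 - \|e_j\|_2$ yields
\[
\|\bS g_j\|_2 \ge (1 - \epsilon)\|g_j\|_2 - \|e_j\|_2 - \|\bS e_j\|_2,
\]
which remains valid even when the right-hand side is negative, since $\|\bS g_j\|_2 \ge 0$. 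Summing over $j$, plugging in the two bounds $\sum_{j \in F}\|e_j\|_2 \le \epsilon^2\|B^*\|_h$ and $\sum_{j \in F}\|\bS e_j\|_2 \le 3\epsilon^2\|B^*\|_h$, and using the minimality bound $\|B^*\|_h \le \|AX - B\|_h$ delivers $\|\bS(AX-B)\|_h \ge (1 - \epsilon - 4\epsilon^2)\|AX-B\|_h \ge (1 - 6\epsilon)\|AX-B\|_h$. This is precisely property~(i) of the lopsided $6\epsilon$-embedding; property~(ii) is condition~3 restated.
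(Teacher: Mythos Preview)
Your proof is correct and shares the paper's overall architecture: the good/bad split of columns via the indicator $Z_j$, the Markov bound giving $\sum_{j\in F}\|e_j\|_2\le\epsilon^2\|B^*\|_h$, and the key ``peeling'' step that combines condition~3 with the $(1-\epsilon^2)$-contraction on $G$ to deduce $\sum_{j\in F}\|\bS e_j\|_2\le 3\epsilon^2\|B^*\|_h$. Where you diverge is in handling the bad columns. The paper (following \cite{clarkson2015input}) further partitions $F$ into \emph{large} columns, where $\|(AX-B)_{*i}\|_2\ge(1/\epsilon)(\|B^*_{*i}\|_2+\|\bS B^*_{*i}\|_2)$, and \emph{small} columns; it then lower bounds $\|\bS(AX-B)_{*i}\|_2$ by $(1-2\epsilon)\|(AX-B)_{*i}\|_2$ only on the large part and throws away the small part, whose total mass is controlled by $4\epsilon\Delta^*$. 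You instead sum the per-column inequality $\|\bS g_j\|_2\ge(1-\epsilon)\|g_j\|_2-\|e_j\|_2-\|\bS e_j\|_2$ directly over all of $F$, which is legitimate since the inequality holds termwise regardless of sign. This avoids the large/small dichotomy entirely, shortens the argument, and even yields the sharper constant $(1-\epsilon-4\epsilon^2)$ in place of the paper's $(1-6\epsilon)$.
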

Importantly, note that Condition 2 in the lemma is about the probability of $\bS$ being a subspace contraction for $[A\, B_{*i}]$ separately for each $i$ and \emph{not} the probability of $\bS$ being \emph{simultaneously} a subspace contraction for $[A\, B_{*i}]$ for all $i \in [d']$.
\begin{proof}
	Condition on the event that 1 and 3 hold. For $i \in [d']$, let $\bZ_i$ be an indicator random variable where $\bZ_i = 0$ if the matrix $\bS$ is a subspace $\epsilon^2$-contraction for $[A\ B_{*i}]$ and $\bZ_i = 1$ otherwise. From the properties of $\bS$, we have that $\Pr[\bZ_i = 1] \le \delta\epsilon^2/3$ for all $i$. If $\bZ_i = 1$, we call $i$ \emph{bad} and if $\bZ_i = 0$, we call $i$ \emph{good}.
	
	Consider an arbitrary matrix $X$. Say a \emph{bad} $i$ is \emph{large} if $\|(AX-B)_{*i}\|_2 \ge (1/\epsilon)(\|B_{*i}\|_2 + \|\bS B_{*i}\|_2)$, otherwise a \emph{bad} $i$ is \emph{small}. We have
	\begin{equation}
		\sum_{\text{\emph{small}}\ i} \|(AX-B)_{*i}\|_2 \le (1/\epsilon)\sum_{\text{\emph{small}}\ i}\|B_{*i}\|_2 + \|\bS B_{*i}\|_2 \le (1/\epsilon)\sum_{\text{\emph{bad}}\ i}\|B_{*i}\|_2 + \|\bS B_{*i}\|_2.
		\label{eqn:small-i-upperbound}
	\end{equation}
	Using condition $2$, we obtain that $\E[\sum_{\text{\emph{bad}}\ i}\|B^*_{*i}\|_2] \le (\delta\epsilon^2/3)\sum_{i}\|B_{*i}^*\|_2 \le (\delta\epsilon^2/3)\Delta^*$. By a Markov bound, we have that with probability $\ge 1 - \delta/3$, $\sum_{\text{\emph{bad}}\ i}\|B_{*i}^*\| \le \epsilon^2\Delta^*$. Assume that this event holds. Similarly,
	\begin{align*}
		\sum_{\text{\emph{bad}}\ i}\|\bS B_{*i}^*\|_2 &= \|\bS B^*\|_h - \sum_{\text{\emph{good}}\ i}\|\bS B^*_{*i}\|_2\\
		&\le (1+\epsilon^2)\Delta^* - (1-\epsilon^2)\sum_{\text{\emph{good}}\ i}\|B_{*i}^*\|_2\\
		&\le (1+\epsilon^2)\Delta^* - (1-\epsilon^2)(\Delta^* - \epsilon^2\Delta^*)\\
		&\le 3\epsilon^2\Delta^*.
	\end{align*}
	Thus, we can bound the RHS of \eqref{eqn:small-i-upperbound} and obtain
	\begin{equation*}
		\sum_{\text{\emph{small}}\ i} \|(AX-B)_{*i}\|_2 \le (1/\epsilon)(\epsilon^2\Delta^* + 3\epsilon^2\Delta^*) \le 4\epsilon\Delta^*.
	\end{equation*}
	Now we lower bound $\sum_{\text{\emph{bad}}\ i} \|\bS (AX-B)_{*i}\|_2$.
	\begin{align*}
		\sum_{\text{\emph{bad}}\ i}\|\bS (AX-B)_{*i}\|_2 &\ge \sum_{\text{\emph{large}}\ i}\|\bS(AX-B)_{*i}\|_2\\
		&\ge \sum_{\text{\emph{large}}\ i}\|\bS(AX-AX^*)_{*i}\|_2 - \|\bS B^*_{*i}\|_2\\
		&\ge \sum_{\text{\emph{large}}\ i}(1-\epsilon)\|(AX-AX^*)_{*i}\|_2- \|\bS B^*_{*i}\|_2\\
		&\ge \sum_{\text{\emph{large}}\ i}(1-\epsilon)\|(AX-B)_{*i}\|_2- (1-\epsilon)\|B_{*i}^*\|_2 - \|\bS B^*_{*i}\|_2\\
		&\ge \sum_{\text{\emph{large}}\ i}(1-\epsilon)\|(AX-B)_{*i}\|_2 - \epsilon\|(AX-B)_{*i}\|_2\\
		&\ge (1-2\epsilon)\sum_{\text{\emph{large}}\ i}\|(AX-B)_{*i}\|_2.
	\end{align*}
	In the above, we repeatedly used the triangle inequality for the $\|\cdot \|_2$ norm, and that $\bS$ is a subspace $\epsilon$-embedding for matrix $A$ and for large $i$, we upper bound $(1-\epsilon)\|B^*_{*i}\|_2 + \|\bS B^*_{*i}\|_2$ by  $\epsilon\|(AX-B)_{*i}\|_2$. We can finally lower bound $\|\bS(AX-B)\|_h$.
	\begin{align*}
		\|\bS(AX-B)\|_h &= \sum_{\text{\emph{good}}\ i}\|\bS(AX-B)_{*i}\|_2 + \sum_{\text{\emph{bad}}\ i}\|\bS(AX-B)_{*i}\|_2\\
		&\ge (1-\epsilon^2)\sum_{\text{\emph{good}}\ i}\|(AX-B)_{*i}\|_2 + (1-2\epsilon)\sum_{\text{\emph{large}}\ i}\|(AX-B)_{*i}\|_2\\
		&\ge (1-\epsilon^2)\sum_{\text{\emph{good}}\, i}\|(AX-B)_{*i}\|_2 + (1-2\epsilon)\sum_{\text{\emph{bad}}\, i}\|(AX-B)_{*i}\|_2\\
		&\quad - (1-2\epsilon)\sum_{\text{\emph{small}}\, i}\|(AX-B)_{*i}\|_2\\
		&\ge (1-2\epsilon)\|AX-B\|_h - (1-2\epsilon)4\epsilon\Delta^*\\
		&\ge (1-6\epsilon)\|AX-B\|_h.
	\end{align*}
	Thus, by a union bound, with failure probability $\le \delta$, $\bS$ is an affine $6\epsilon$-contraction for $(A,B)$ with respect to  $\|\cdot\|_h$.
\end{proof}
\begin{lemma}[Gaussian Matrices are Lopsided Embeddings]
	Given arbitrary matrices $A$ of rank $k$ and $B$ of \emph{any} rank, a Gaussian matrix $\bS$ with $\tilde{O}(k/\epsilon^4 + 1/\epsilon^4\delta^2)$ rows is an $\epsilon$-lopsided embedding for $(A,B)$ with probability $\ge 1 - \delta$.
	\label{lma:gaussian-lopsided-embedding}
\end{lemma}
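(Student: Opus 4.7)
The plan is to verify the three sufficient conditions of Lemma~\ref{lma:lopsided-construction} for a Gaussian matrix $\bS \in \R^{m \times n}$ with entries $\mathcal{N}(0, 1/m)$, chosen so that $\E[\|\bS v\|_2^2] = \|v\|_2^2$ for every fixed $v$, and then invoke that lemma to conclude $\bS$ is a lopsided $O(\epsilon)$-embedding (rescaling $\epsilon$ by a constant at the end).

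First, for Condition~1 I would cite the standard fact that a Gaussian matrix with $m_1 = O((k + \log(1/\delta))/\epsilon^2)$ rows is a $(1 \pm \epsilon)$ subspace embedding (in particular, an $\epsilon$-contraction) for a fixed $k$-dimensional subspace with probability at least $1 - \delta/3$. Since $A$ has rank $k$, this immediately gives Condition~1. For Condition~2, I would observe that for each \emph{fixed} $i$ the column span of $[A\ B_{*i}]$ has dimension at most $k+1$, so the same subspace-embedding bound applied with error parameter $\epsilon^2$ and failure probability $\delta\epsilon^2/3$ requires $m_2 = O((k + \log(1/(\delta\epsilon)))/\epsilon^4)$ rows. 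Crucially, Condition~2 only asks for a \emph{per-}$i$ guarantee, so there is no union bound over the (possibly many) columns of $B$ here.

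Condition~3, namely $\|\bS B^*\|_h \le (1+\epsilon^2)\|B^*\|_h$ with probability $\ge 1 - \delta/3$, is the main obstacle, because $B^*$ can have arbitrarily many columns, ruling out a direct union bound over columns, and Markov's inequality applied to $\|\bS B^*\|_h$ alone yields only constant failure probability. Define $Y_i := \|\bS B^*_{*i}\|_2 - \|B^*_{*i}\|_2$, so that $\|\bS B^*\|_h - \|B^*\|_h = \sum_i Y_i \le \sum_i |Y_i|$. For a unit vector $v$, the random variable $\|\bS v\|_2$ is distributed as $\chi_m/\sqrt{m}$, with mean $1 - O(1/m)$ and standard deviation $O(1/\sqrt{m})$; hence by Jensen and the triangle inequality,
\begin{equation*}
\E[|Y_i|] \le \bigl(\Var[\|\bS B^*_{*i}\|_2]\bigr)^{1/2} + \bigl|\E[\|\bS B^*_{*i}\|_2] - \|B^*_{*i}\|_2\bigr| = O\!\left(\|B^*_{*i}\|_2/\sqrt{m}\right).
\end{equation*}
Summing over $i$ and applying Markov's inequality to the nonnegative random variable $\sum_i |Y_i|$ gives
\begin{equation*}
\Pr\!\left[\sum_i |Y_i| \ge \epsilon^2 \|B^*\|_h\right] \le \frac{C}{\epsilon^2 \sqrt{m}},
\end{equation*}
and forcing the right-hand side below $\delta/3$ requires $m_3 = \Omega(1/(\epsilon^4 \delta^2))$, which is precisely where the $1/(\epsilon^4 \delta^2)$ term in the row count comes from.

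Taking $m = \max(m_1, m_2, m_3) = \tilde{O}(k/\epsilon^4 + 1/(\epsilon^4 \delta^2))$ and union bounding over the three events, all three conditions of Lemma~\ref{lma:lopsided-construction} hold simultaneously with probability at least $1 - \delta$, so $\bS$ is a lopsided $O(\epsilon)$-embedding for $(A, B)$ with respect to $\|\cdot\|_h$; a constant rescaling of $\epsilon$ then yields the claimed $\epsilon$-lopsided embedding. The only delicate step is the $L^1$-variance bound in Condition~3; Conditions~1 and~2 are routine Gaussian subspace-embedding facts.
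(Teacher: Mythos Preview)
Your proposal is correct and follows essentially the same route as the paper: verify the three sufficient conditions of Lemma~\ref{lma:lopsided-construction}, with Conditions~1 and~2 dispatched by standard Gaussian subspace-embedding bounds and Condition~3 handled by bounding the expected per-column deviation and applying Markov. The only technical difference is that for Condition~3 the paper bounds $\E[(\|\bS y\|_2 - \|y\|_2)_+]$ via a fourth-moment calculation followed by H\"older, whereas you bound $\E\bigl[\,|\|\bS y\|_2 - \|y\|_2|\,\bigr]$ directly from the mean and standard deviation of $\chi_m/\sqrt{m}$; both yield the same $O(1/\sqrt{m})$ bound and hence the same $m = O(1/(\epsilon^4\delta^2))$ requirement.
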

\begin{proof}
	We now show that a Gaussian matrix, with small dimension equal to $\tilde{O}(k/\epsilon^4 + 1/\epsilon^4\delta^2)$, satisfies all of the sufficient conditions of Lemma~\ref{lma:lopsided-construction}. Clearly, a Gaussian matrix with $O((k + \log(1/\delta))/\epsilon^2)$ rows satisfies condition 1 and a Gaussian matrix with $O((k + \log(1/\delta\epsilon))/\epsilon^4)$ rows satisfies condition 2 \cite{dw-sketching}.

We now show that a Gaussian matrix with at least $O(1/\epsilon^4)$ rows satisfies
\begin{equation*}
	\E[(\|\bS y\|_2^2 - 1)^2] \le \epsilon^4
\end{equation*}
for any given unit vector $y$. If $\bS$ is a Gaussian matrix of $t$ rows with each entry drawn i.i.d. from $N(0,1/t)$, then the entries of $Sy$ are each drawn i.i.d. from $N(0,\|y\|_2^2/t) = N(0,1/t)$. Therefore, $\|\bS y\|_2^2 = \bY_1^2 + \ldots + \bY_{t}^2$, where $\bY_i \sim N(0,1/t)$, which gives
\begin{align*}
	\E[(\|\bS y\|_2^2 - 1)^2] &= \E[(\bY_1^2 + \ldots + \bY_t^2 - 1)^2]\\
	&= t\E[\bY_1^4] + 1 + 2\binom{t}{2}\E[\bY_1^2\bY_2^2] - 2t\E[\bY_1^2] = t\frac{3}{t^2} + 1 + 2\binom{t}{2}\frac{1}{t^2} - 2t\frac{1}{t}\\
	&= 2/t.
\end{align*}
Thus, with $t \ge 1/\epsilon^4$, we have that $\E[(\|\bS y\|_2^2 - 1)^2] \le \epsilon^4$. By Lemma 28 of \cite{clarkson2015input}, we obtain that $\E[\max(\|\bS y\|_2^4,1)] \le (1+\epsilon^2)^2 \le 1+3\epsilon^2$. Now, by Holder's inequality,
\begin{equation*}
	\E[\max(\|\bS y\|_2,1)] \le \E[\max(\|\bS y\|_2,1)^4]^{1/4} \le (1+3\epsilon^2)^{1/4} \le 1 + (3/4)\epsilon^2.
\end{equation*}
As $(\|\bS y\|_2 - 1)_+ = \max(\|\bS y\|_2,1)-1$, we obtain that $\E[(\|\bS y\|_2 - 1)_+] \le (3/4)\epsilon^2$, which implies by scaling that for an arbitrary vector $y$,
\begin{equation*}
	\E[(\|\bS y\|_2 - \|y\|_2)_+] \le (3/4)\epsilon^2\|y\|_2
\end{equation*}
which gives
\begin{align*}
	\E[(\|\bS B^*\|_h - \|B^*\|_h)_+] &\le (3/4)\epsilon^2\|B^*\|_h.
\end{align*}
By Markov's inequality, with probability $\ge 1 - \delta/3$, $(\|\bS B^*\|_h - \|B^*\|_h)_+ \le (9/4)(\epsilon^2/\delta)\|B^*\|_h$ and hence, 
with probability $\ge 1 - \delta/3$, $\|\bS B^*\|_h \le (1+(9/4)(\epsilon^2/\delta))\|B^*\|_h$. Thus, a Gaussian matrix with $m = O(1/\epsilon^4\delta^2)$ rows satisfies that with probability $\ge 1 - \delta/3$ that
\begin{equation*}
	\|\bS B^*\|_h \le (1+\epsilon^2)\|B^*\|_h.\qedhere
\end{equation*}
\end{proof}
\subsection{Utilizing Sampling based \texorpdfstring{$\ell_1$}{l-1} embeddings}
Let $A$ be a matrix that has $r$ columns. Suppose $\bL$ is a random matrix such that with probability $\ge 9/10$, simultaneously for all vectors $y$,
\begin{equation*}
	\alpha\|Ay\|_1 \le \|\bL Ay\|_1 \le \beta\|Ay\|_1.
\end{equation*}
Assume the above event holds. Let $X$ be an arbitrary matrix with $t$ columns. We have that for a suitably scaled Gaussian matrix $\bG$ with $\tilde{O}(t/\epsilon^2)$ columns, with probability $\ge 9/10$, simultaneously for all vectors $x \in \R^t$, $\|x^{\mathsf{T}}\bG\|_1 = (1 \pm \epsilon)\|x\|_2$ \cite{matouvsek2013lecture}. Thus there exists a matrix $M$ with $\tilde{O}(t/\epsilon^2)$ columns such that for all vectors $x \in \R^t$,
\begin{equation*}
	\|x^\mathsf{T}M\|_1 = (1 \pm \epsilon)\|x\|_2.
\end{equation*}
Therefore,
\begin{equation*}
	\frac{1}{1+\epsilon}\|AXM\|_{1,1} \le \|AX\|_{1,2} = \frac{1}{1-\epsilon}\|AXM\|_{1,1}
\end{equation*}
and
\begin{equation*}
	\frac{1}{1+\epsilon}\|\bL AXM\|_{1,1} \le \|\bL AX\|_{1,2} \le \frac{1}{1-\epsilon}\|\bL AXM\|_{1,1}
\end{equation*}
Now we upper bound $\|\bL AX\|_{1,2}$.
\begin{align*}
	\|\bL AX\|_{1,2} &\le \frac{1}{1-\epsilon}\|\bL AXM\|_{1,1} \le \frac{1}{1-\epsilon}\sum_{j}\|\bL A(XM)_{*j}\|_1\\
	&\le \frac{\beta}{1-\epsilon}\sum_j\|A(XM)_{*j}\|_1 = \frac{\beta}{1-\epsilon}\|AXM\|_{1,1} \le \beta \frac{1+\epsilon}{1-\epsilon}\|AX\|_{1,2}.
\end{align*}
We now lower bound $\|\bL AX\|_{1,2}$ similarly.
\begin{align*}
	\|\bL AX\|_{1,2} &\ge \frac{1}{1+\epsilon}\|\bL AXM\|_{1,1} = \frac{1}{1+\epsilon}\sum_j \|\bL A(XM)_{*j}\|_1\\
	&\ge \frac{\alpha}{1+\epsilon}\sum_j \|A(XM)_{*j}\|_1 = \frac{\alpha}{1+\epsilon}\|AXM\|_{1,1} \ge \alpha\frac{1-\epsilon}{1+\epsilon}\|AX\|_{1,2}.
\end{align*}
By picking appropriate $\epsilon$, we conclude that for any matrix $X$,
\begin{equation}
	\frac{\alpha}{2}\|AX\|_{1,2} \le \|\bL AX\|_{1,2} \le 2\beta \|AX\|_{1,2}.
	\label{eqn:l1-to-1-2}
\end{equation}

\begin{lemma}
	If $\bS^\mathsf{T}$ is a random Gaussian matrix with $O(k)$ columns such that with probability $\ge 9/10$,
	\begin{equation*}
		\min_{\text{rank-}k\ X}\|A\bS^\mathsf{T}X - A\|_{1,2} \le (3/2)\min_{\text{rank-}k\ X}\|AX - A\|_{1,2},
	\end{equation*}
	and if $\bL$ is a random matrix drawn from a distribution such that with probability $\ge 9/10$ over the draw of matrix $\bL$, 
	\begin{equation*}
		\alpha\|A\bS^\mathsf{T}y\|_{1} \le \|\bL A\bS^\mathsf{T}y\|_{1} \le \beta\|A\bS^\mathsf{T}y\|_1
	\end{equation*}
	for all vectors $y$ and
	\begin{equation*}
		\E_\bL[\|\bL M\|_{1,2}] = \|M\|_{1,2}
	\end{equation*}
	for any matrix $M$, then with probability $\ge 3/5$, all matrices $X$ such that $\|\bL A\bS^\mathsf{T}X - \bL A\|_{1,2} \le 10\cdot \textnormal{SubApx}_{k,1}(A)$ satisfy
	\begin{equation*}
		\|A\bS^\mathsf{T} X - A\|_{1,2} \le \left(2 + 40/\alpha\right)\textnormal{SubApx}_{k,1}(A).
	\end{equation*}
	\label{lma:sketch-to-actual}
\end{lemma}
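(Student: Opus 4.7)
The plan is to proceed by triangle inequality around the near-optimal sketched solution whose existence is guaranteed by the first hypothesis. Let $\mathrm{OPT} = \textnormal{SubApx}_{k,1}(A)$. By the first hypothesis, with probability $\ge 9/10$ over $\bS$ there exists a rank-$k$ matrix $X^*$ such that
\[
\|A\bS^\mathsf{T}X^* - A\|_{1,2} \le (3/2)\,\mathrm{OPT}.
\]
Condition on this event. The strategy is to pivot any candidate $X$ through $X^*$: controlling $\|\bL A\bS^\mathsf{T}(X-X^*)\|_{1,2}$ via the hypothesis and then transferring this back to $\|A\bS^\mathsf{T}(X-X^*)\|_{1,2}$ via the $\ell_1$ lower bound inherited from the subspace embedding assumption on $\bL$.

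First I would use the expectation property to bound the ``error on the pivot'': since $\E_{\bL}[\|\bL(A\bS^\mathsf{T}X^* - A)\|_{1,2}] = \|A\bS^\mathsf{T}X^* - A\|_{1,2} \le (3/2)\mathrm{OPT}$, Markov's inequality yields
\[
\|\bL(A\bS^\mathsf{T}X^* - A)\|_{1,2} \le 10\,\mathrm{OPT}
\]
with probability at least $1 - 3/20$. Independently, the subspace embedding event for $\bL$ holds with probability $\ge 9/10$. A union bound shows that all three good events hold simultaneously with probability at least $(9/10)(1 - 1/10 - 3/20) = (9/10)(3/4) = 27/40 \ge 3/5$, giving the claimed failure probability.

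Now assume all three good events hold and let $X$ be any matrix with $\|\bL A\bS^\mathsf{T}X - \bL A\|_{1,2} \le 10\,\mathrm{OPT}$. Writing $\bL A\bS^\mathsf{T}(X-X^*) = (\bL A\bS^\mathsf{T}X - \bL A) - (\bL A\bS^\mathsf{T}X^* - \bL A)$ and applying the triangle inequality gives
\[
\|\bL A\bS^\mathsf{T}(X-X^*)\|_{1,2} \le 10\,\mathrm{OPT} + 10\,\mathrm{OPT} = 20\,\mathrm{OPT}.
\]
Since every column of $A\bS^\mathsf{T}(X-X^*)$ lies in the column space of $A\bS^\mathsf{T}$, equation~\eqref{eqn:l1-to-1-2} applied with $A\bS^\mathsf{T}$ in place of $A$ yields the matrix-level lower bound $\|\bL A\bS^\mathsf{T}(X-X^*)\|_{1,2} \ge (\alpha/2)\|A\bS^\mathsf{T}(X-X^*)\|_{1,2}$, so $\|A\bS^\mathsf{T}(X-X^*)\|_{1,2} \le (40/\alpha)\,\mathrm{OPT}$.

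Finally, a last triangle inequality closes the argument:
\[
\|A\bS^\mathsf{T}X - A\|_{1,2} \le \|A\bS^\mathsf{T}(X-X^*)\|_{1,2} + \|A\bS^\mathsf{T}X^* - A\|_{1,2} \le \bigl(40/\alpha + 3/2\bigr)\mathrm{OPT} \le (2 + 40/\alpha)\mathrm{OPT}.
\]
I do not expect any step here to be a real obstacle; the only subtle point is recognizing that the matrix-level lower bound from \eqref{eqn:l1-to-1-2} requires the argument to have columns in the column space of $A\bS^\mathsf{T}$, which is exactly why we pivot through $X^*$ rather than try to bound $\|\bL A\bS^\mathsf{T}X - \bL A\|_{1,2}$ directly using the subspace embedding (note that $A$ itself need not lie in the column space of $A\bS^\mathsf{T}$).
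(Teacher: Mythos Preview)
Your proof is correct and follows essentially the same approach as the paper's: both pivot through the near-optimal $X^*$ (the paper calls it $X_1$), use Markov with the expectation hypothesis to bound $\|\bL(A\bS^\mathsf{T}X^*-A)\|_{1,2}$, then apply the matrix-level bound \eqref{eqn:l1-to-1-2} to $A\bS^\mathsf{T}(X-X^*)$ and finish with the triangle inequality. The paper phrases the final step as a contrapositive chain of lower bounds on $\|\bL A\bS^\mathsf{T}X-\bL A\|_{1,2}$ rather than your direct upper bound on $\|A\bS^\mathsf{T}X-A\|_{1,2}$, but the arithmetic is identical (both land on $40/\alpha+3/2\le 2+40/\alpha$); your probability accounting is in fact slightly more explicit than the paper's.
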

\begin{proof}
Let $X_1 = \argmin_{\text{rank-}k\ X}\|A\bS^\mathsf{T}X - A\|_{1,2}$. With probability $\ge 9/10$, we have that $\|A\bS^\mathsf{T} X_1 - A\|_{1,2} \le (3/2)\text{SubApx}_{k,1}(A)$. By a Markov bound, we obtain that with probability $\ge 4/5$, $\|\bL A\bS^\mathsf{T} X_1 - \bL A\|_{1,2} \le 10\textnormal{SubApx}_{k,1}(A)$. Assume this event holds.
For any matrix $X$,
\begin{equation*}
	\|\bL A\bS^\mathsf{T}X - \bL A\|_{1,2} \ge \|\bL A\bS^\mathsf{T}X - \bL A\bS^\mathsf{T}X_1\|_{1,2} - \|\bL A\bS^\mathsf{T}X_1 - \bL A\|_{1,2}.
\end{equation*}
We have
\begin{equation*}
	\|\bL A\bS^\mathsf{T}X - \bL A\|_{1,2} \ge \|\bL A\bS^\mathsf{T}X - \bL A\bS^\mathsf{T}X_1\|_{1,2} - 10\cdot \textnormal{SubApx}_{k,1}(A).
\end{equation*}
From \eqref{eqn:l1-to-1-2}, we have
\begin{align*}
	\|\bL A\bS^\mathsf{T}X - \bL A\|_{1,2} &\ge \frac{\alpha}{2}\|A\bS^\mathsf{T}X - A\bS^\mathsf{T}X_1\|_{1,2} - 10\cdot \textnormal{SubApx}_{k,1}(A)\\
	&\ge \frac{\alpha}{2}\|A\bS^\mathsf{T}X - A\|_{1,2}-\frac{\alpha}{2}\|A\bS^\mathsf{T}X_1 - A\|_{1,2} - 10 \cdot \textnormal{SubApx}_{k,1}(A)\\
	&\ge \frac{\alpha}{2}\|A\bS^\mathsf{T}X - A\|_{1,2} - (3\alpha/4 + 10)\cdot \textnormal{SubApx}_{k,1}(A).
\end{align*}
Thus, for any matrix $X$ of rank $r$, if $\|A\bS^\mathsf{T} X - A\|_{1,2} > (2/\alpha)(20 + 3\alpha/4)\cdot \textnormal{SubApx}_{k,1}(A)$, then $\|\bL A\bS^\mathsf{T}X - \bL A\|_{1,2} > 10\cdot \textnormal{SubApx}_{k,1}(A)$. 
\end{proof}
\subsection{Main Theorem for constructing an \texorpdfstring{$(O(1),\tilde{O}(k))$}{O(sqrt(k)),O(k)}-bicriteria solution}
	\begin{theorem}
		Given any matrix $A \in \R^{n \times d}$ and a matrix $B \in \R^{d \times c_1}$ with orthonormal columns, Algorithm~\ref{alg:polykapprox} returns a matrix $\hat{X}$ with $\tilde{O}(k)$ orthonormal columns that with probability $1-\delta$ satisfies
		\begin{equation*}
			\|A(I-BB^\mathsf{T})(I - \hat{X}\hat{X}^\mathsf{T})\|_{1,2} \le O(1) \cdot \textnormal{SubApx}_{k,1}(A(I-BB^\mathsf{T})),
		\end{equation*}
		in time  $\tilde{O}((\nnz(A) + d\poly(k/\epsilon))\log(1/\delta))$.
	\end{theorem}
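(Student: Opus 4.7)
The plan is to apply Lemma~\ref{lma:sketch-to-original} with the ``virtual'' matrix $A' := A(I-BB^\mathsf{T})$ in the role of $A$, and then argue that the subspace $\hat{X}$ that the algorithm extracts from rowspan$(\bL A')$ automatically contains the rowspace of a near-optimal sketched regression solution, so $\hat{X}$ must be an $O(1)$-approximate bicriteria solution.

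First, I would instantiate Lemma~\ref{lma:sketch-to-original}(i) on $A'$: with probability $\ge 9/10$, the scaled Gaussian $\bS^\mathsf{T}$ with $O(k)$ columns is a $1/4$-lopsided embedding and hence
\begin{equation*}
\min_{\text{rank-}k\ X}\|A'\bS^\mathsf{T}X - A'\|_{1,2} \le (3/2)\textnormal{SubApx}_{k,1}(A').
\end{equation*}
Second, I would invoke \citet{cohen-peng} to argue that the Lewis-weight sampling matrix $\bL$ produced on input $A'\bS^\mathsf{T}$ has $\tilde{O}(k)$ rows (since $A'\bS^\mathsf{T}$ has only $O(k)$ columns) and, with probability $\ge 9/10$, is an $O(1)$ $\ell_1$ subspace embedding for the column space of $A'\bS^\mathsf{T}$; because $\bL$ is a rescaled importance sampler, it is also unbiased in the sense $\E[\|\bL M\|_{1,2}] = \|M\|_{1,2}$ for every fixed $M$. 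This exactly matches the hypotheses of Lemma~\ref{lma:sketch-to-original}(ii), so a union bound yields that with constant probability every $X$ satisfying the sketched bound $\|\bL A'\bS^\mathsf{T}X - \bL A'\|_{1,2} \le 10\cdot \textnormal{SubApx}_{k,1}(A')$ also satisfies $\|A'\bS^\mathsf{T}X - A'\|_{1,2} \le O(1)\cdot \textnormal{SubApx}_{k,1}(A')$. Such an $X$ exists: applying Markov to the unbiased identity on the rank-$k$ optimum of step one shows the sketched cost of that fixed $X$ is at most $10\cdot \textnormal{SubApx}_{k,1}(A')$ with constant probability, and the sketched minimizer is then no worse.

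Third, I would show that we may take this $X$ to have rowspace contained in rowspan$(\bL A') = $ colspan$(\hat{X})$. Given any $X$ meeting the precondition, set $X' := X\hat{X}\hat{X}^\mathsf{T}$. Since every row of $\bL A'$ already lies in rowspan$(\bL A')$, we have $(\bL A')_{i*}\hat{X}\hat{X}^\mathsf{T} = (\bL A')_{i*}$, so the $i$-th row of $\bL A'\bS^\mathsf{T}X' - \bL A'$ is precisely the projection onto rowspan$(\bL A')$ of the $i$-th row of $\bL A'\bS^\mathsf{T}X - \bL A'$. Orthogonal projection does not increase $\ell_2$ norm, so $X'$ is at least as good and rowspan$(X') \subseteq $ colspan$(\hat{X})$. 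For such an $X$, rowspan$(A'\bS^\mathsf{T}X) \subseteq $ rowspan$(X) \subseteq $ colspan$(\hat{X})$, which gives $A'\bS^\mathsf{T}X = A'\bS^\mathsf{T}X\,\hat{X}\hat{X}^\mathsf{T}$, and because $A'\hat{X}\hat{X}^\mathsf{T}$ is the row-wise Euclidean projection of $A'$ onto colspan$(\hat{X})$,
\begin{equation*}
\|A'(I-\hat{X}\hat{X}^\mathsf{T})\|_{1,2} \le \|A' - A'\bS^\mathsf{T}X\|_{1,2} \le O(1)\cdot\textnormal{SubApx}_{k,1}(A').
\end{equation*}

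Finally, the constant-probability guarantee is boosted to $1-\delta$ by repeating the construction $O(\log(1/\delta))$ times and keeping the candidate that minimizes $\|A(I-\hat{X}\hat{X}^\mathsf{T})\bG\|_{1,2}$; here standard Gaussian concentration shows that a Gaussian $\bG$ with $O(\log n)$ columns simultaneously approximates $\|A(I-\hat{X}\hat{X}^\mathsf{T})\|_{1,2}$ up to constants across all candidates. For the runtime, each trial computes $A'\bS^\mathsf{T} = A\bS^\mathsf{T} - (AB)(B^\mathsf{T}\bS^\mathsf{T})$ using sparse-embedding-accelerated multiplication, runs Cohen--Peng on the $n\times O(k)$ matrix $A'\bS^\mathsf{T}$, and extracts an orthonormal basis for the rowspan of a $\tilde{O}(k)\times d$ matrix, all fitting into $\tilde{O}(\nnz(A) + d\poly(k/\epsilon))$; the $O(\log(1/\delta))$ trials give the stated bound. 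The main obstacle is the row-wise projection argument in the third step, establishing that the optimum of the sketched regression can be taken to live in the row span of $\bL A'$; once this is in hand, the rest of the proof is assembly of the two parts of Lemma~\ref{lma:sketch-to-original} together with a standard probability-boosting wrapper.
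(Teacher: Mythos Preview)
Your proposal is correct and follows essentially the same route as the paper's proof. Both arguments (i) use the Gaussian lopsided embedding to get a rank-$k$ $X$ with $\|A'\bS^\mathsf{T}X-A'\|_{1,2}\le(3/2)\,\textnormal{SubApx}_{k,1}(A')$, (ii) use Markov on the unbiasedness of Lewis-weight sampling to pass to the sketched problem, (iii) project that $X$ onto $\mathrm{rowspan}(\bL A')$ to obtain an $X'$ whose sketched cost can only decrease (the paper writes this projection as $X\mapsto X(\bL A')^{+}(\bL A')$, which equals your $X\hat X\hat X^\mathsf{T}$), and (iv) invoke Lemma~\ref{lma:sketch-to-original}(ii) followed by the row-wise optimality of projection onto $\mathrm{colspan}(\hat X)$ to bound $\|A'(I-\hat X\hat X^\mathsf{T})\|_{1,2}$. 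One minor notational slip: in your third step the final two displayed claims should be about $X'$, not $X$ (it is $X'$ that has $\mathrm{rowspan}(X')\subseteq\mathrm{colspan}(\hat X)$); and the selection criterion in the boosting step should be $\|A(I-BB^\mathsf{T})(I-\hat X\hat X^\mathsf{T})\bG\|_{1,2}$ rather than $\|A(I-\hat X\hat X^\mathsf{T})\bG\|_{1,2}$, matching what the paper actually proves in its supplementary ``Finding Best Solution'' subsection.
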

	\begin{proof}
	It is shown in Lemma~\ref{lma:gaussian-lopsided-embedding} that a Gaussian matrix with $O(k)$ rows is a $1/6$-lopsided embedding for $(V_k,A^\mathsf{T})$ with probability $\ge 9/10$. Thus by Lemma~\ref{lma:lopsided-to-epsilon-approximation}, we obtain that
		\begin{equation*}
			\min_{\text{rank-}k\ X}\|A(I-BB^\mathsf{T})\bS^\mathsf{T}X - A(I-BB^\mathsf{T})\|_{1,2} \le (3/2)\text{SubApx}_{k,1}(A(I-BB^\mathsf{T}))
		\end{equation*}
		with probability $\ge 9/10$. \cite{cohen-peng} show that a sampling matrix $\bL$ obtained using Lewis weights has $\tilde{O}(k)$ rows and is a $(1/2,3/2)$ $\ell_1$ subspace embedding for the matrix $A(I-BB^\mathsf{T})\bS^\mathsf{T}$. Thus, the matrices $\bS^\mathsf{T}$ and $\bL$ constructed in Algorithm~\ref{alg:polykapprox} satisfy the conditions of Lemma~\ref{lma:sketch-to-original}. Therefore from Lemma~\ref{lma:sketch-to-actual}, with probability $\ge 3/5$, if a matrix $X$ satisfies
	$
		\|\bL A(I-BB^\mathsf{T})\bS^\mathsf{T}X - \bL A(I-BB^\mathsf{T})\|_{1,2} \le 10\cdot \text{SubApx}_{k,1}(A(I-BB^\mathsf{T})),
	$ then $\|A(I-BB^\mathsf{T})\bS^\mathsf{T}X - A(I-BB^\mathsf{T})\|_{1,2} \le 82 \cdot \text{SubApx}_{k,1}(A(I-BB^\mathsf{T}))$.
	
	Let $\tilde{X} = \argmin_{\text{rank-}k\ X}\|A(I-BB^\mathsf{T})\bS^\mathsf{T} X - A(I-BB^\mathsf{T})\|_{1,2}$. We have $\|A(I-BB^\mathsf{T})\bS^\mathsf{T}\tilde{X} - A(I-BB^\mathsf{T})\|_{1,2} \le (3/2)\text{SubApx}_{k,1}(A(I-BB^\mathsf{T}))$. By Markov's bound, with probability $\ge 3/4$, $\|\bL A(I-BB^\mathsf{T})\bS^\mathsf{T}\tilde{X} - \bL A(I-BB^\mathsf{T})\|_{1,2} \le 10 \cdot \text{SubApx}_{k,1}(A(I-BB^\mathsf{T}))$. We now have the following:
	\begin{equation*}
		\|\bL A(I-BB^\mathsf{T})\bS^\mathsf{T}\tilde{X}(\bL A(I-BB^\mathsf{T}))^+\bL A(I-BB^\mathsf{T}) - \bL A(I-BB^\mathsf{T})\|_{1,2} \le 10 \cdot \text{SubApx}_{k,1}(A(I-BB^\mathsf{T})).
	\end{equation*}
	Thus
	$
		\|A(I-BB^\mathsf{T})\bS^\mathsf{T}\tilde{X}(\bL A(I-BB^\mathsf{T}))^+\bL A(I-BB^\mathsf{T}) - A(I-BB^\mathsf{T})\|_{1,2} \le 82 \cdot \text{SubApx}_{k,1}(A(I-BB^\mathsf{T})).
	$
	Finally,
	\begin{align*}
		&\|A(I-BB^\mathsf{T})(\bL A(I-BB^\mathsf{T}))^+(\bL A(I-BB^\mathsf{T})) - A(I-BB^\mathsf{T})\|_{1,2}\\
		&\le \|A(I-BB^\mathsf{T})\bS^\mathsf{T}\tilde{X}(\bL A(I-BB^\mathsf{T}))^+\bL A(I-BB^\mathsf{T}) - A(I-BB^\mathsf{T})\|_{1,2}\\
		&\le 82\cdot \text{SubApx}_{k,1}(A(I-BB^\mathsf{T})).
	\end{align*}
	The first inequality follows from the fact that for all $x$ and $y$, $\|x^\mathsf{T}(\bL A)^{+}(\bL A) - x^\mathsf{T}\|_2 \le \|y^\mathsf{T}(\bL A)^+(\bL A) - x^\mathsf{T}\|_2$.
	
	By a union bound, with probability $\ge 1/2$, the matrix $\hat{X}$ computed by Algorithm~\ref{alg:polykapprox}, which is an orthonormal basis for the rowspace of $\bL A(I-BB^\mathsf{T})$, satisfies
	\begin{equation*}
		\|A(I-BB^\mathsf{T})(I - \hat{X}\hat{X}^\mathsf{T})\|_{1,2} \le 82 \cdot \text{SubApx}_{k,1}(A(I-BB^\mathsf{T})).
	\end{equation*}
    Thus the matrix $\hat{X}$ which has the minimum value over $\tilde{O}(\log(1/\delta))$ trials satisfies with probability $\ge 1 - \delta$ that
    \begin{equation*}
        \|A(I-BB^\mathsf{T})(I - \hat{X}\hat{X}^\mathsf{T})\|_{1,2} \le O(1) \cdot \text{SubApx}_{k,1}(A(I-BB^\mathsf{T})).
    \end{equation*}
    The running time of Lewis weight sampling can be seen to be $O((\nnz(A)+k^2d(c_1 + k))\log(\log(n)))$ from \cite{cohen-peng}. Thus, the total running time is $\tilde{O}((\nnz(A) + k^2d(c_1 + k))\log(1/\delta))$.
	\end{proof}
\subsection{Finding Best Solution Among Candidate Solutions}
Algorithm~\ref{alg:polykapprox} finds candidate solutions $\hat{X}^{(1)},\ldots,\hat{X}^{(t)}$ for $t = O(\log(1/\delta))$ and returns the best candidate solution that minimizes the cost
\begin{equation}
    \|A(I-BB^\mathsf{T})(I-\hat{X}\hat{X}^\mathsf{T})\|_{1,2}.
\end{equation}
The proof of Theorem~\ref{thm:poly-k-approx} shows that, for all $i=1,\ldots,t$, with probability $\ge 3/5$, $\|A(I-BB^\mathsf{T})(I-\hat{X}^{(i)}(\hat{X}^{(i)})^\mathsf{T})\|_{1,2} \le O(1) \cdot \text{SubApx}_{k,1}(A(I-BB^\mathsf{T}))$. Therefore with probability $\ge 1 - \delta/2$
\begin{equation}
    \min_{i}\|A(I-BB^\mathsf{T})(I-\hat{X}^{(i)}(\hat{X}^{(i)})^\mathsf{T})\|_{1,2} \le O(1) \cdot \text{SubApx}_{k,1}(A(I-BB^\mathsf{T}))
\end{equation}
i.e., with probability $\ge 1 - \delta$, there is a solution $\hat{X}^{(i)}$ among the $t$ potential solutions that has a cost at most $O(1) \cdot \text{SubApx}_{k,1}(A(I-BB^\mathsf{T}))$. We first compute 
\begin{equation*}
    \text{apx}_i = \|A(I-BB^\mathsf{T})(I-\hat{X}^{(i)}(\hat{X}^{(i)})^\mathsf{T})\bG\|_{1,2}
\end{equation*}
where $\bG$ is a scaled Gaussian matrix with $O(\log(n/\delta))$ columns. Values of $\text{apx}_j$ for all $j \in [t]$ can be computed in time $\tilde{O}((\nnz(A) + (n+d)\poly(k/\epsilon))\cdot \log(1/\delta))$. 
We have using the union bound that, with probability $\ge 1 - \delta/2$, for all $j \in [n]$ and $i \in [t]$ that
\begin{equation}
    \|A_{j*}(I-BB^\mathsf{T})(I-\hat{X}^{(i)}(\hat{X}^{(i)})^\mathsf{T})\bG\|_2 = (1/2,3/2) \|A_{j*}(I-BB^\mathsf{T})(I-\hat{X}^{(i)}(\hat{X}^{(i)})^\mathsf{T})\|_2.
\end{equation}
Therefore with probability $\ge 1 - \delta/2$, for all $i \in [t]$,
\begin{equation}
    \apx_i \in (1/2,3/2)\|A(I-BB^\mathsf{T})(I-\hat{X}^{(i)}(\hat{X}^{(i)})^\mathsf{T})\|_{1,2}.
\end{equation}
Let $\tilde{i} = \argmin_{i \in [t]}\apx_i$ and $i^* = \argmin_{i \in [t]}\|A(I-BB^\mathsf{T})(I-\hat{X}^{(i)}(\hat{X}^{(i)})^\mathsf{T})\|_{1,2}$. By a union bound, with probability $\ge 1 - \delta$
\begin{align*}
    \|A(I-BB^\mathsf{T})(I-\hat{X}^{(\tilde{i})}(\hat{X}^{(\tilde{i})})^\mathsf{T})\|_{1,2} &\le 2\apx_{\tilde{i}}\\
    &\le 2\apx_{i^*}\\
    &\le 4\|A(I-BB^\mathsf{T})(I-\hat{X}^{(i^*)}(\hat{X}^{(i^*)})^\mathsf{T})\|_{1,2}\\
    &\le O(1) \cdot \text{SubApx}_{k,1}(A(I-BB^\mathsf{T})).
\end{align*}
Thus, Algorithm~\ref{alg:polykapprox}, with probability $\ge 1 - \delta$, returns a subspace that has cost at most $O(\sqrt{k}) \cdot \text{SubApx}_{k,1}(A(I-BB^\mathsf{T}))$ and has a running time of $\tilde{O}((\nnz(A) + (n+d)\poly(k/\epsilon)) \cdot \log(1/\delta))$.
\subsection{Main Theorem for Constructing a \texorpdfstring{$(1+\varepsilon, k^{3.5}/\varepsilon^2)$}{1+eps,poly(k/eps)} Bicriteria Solution}

  	\begin{theorem}[Residual Sampling]
	    Given matrix $A \in \mathbb{R}^{n \times d}$, matrices $B \in \mathbb{R}^{d\times c_1}$ and $\hat{X} \in \R^{d \times c_2}$ with orthonormal columns such that
$
	        \|A(I-BB^\mathsf{T})(I-\hat{X}\hat{X}^\mathsf{T})\|_{1,2} \le K\cdot \textnormal{SubApx}_{1,k}(A(I-BB^\mathsf{T})),
$
	    Algorithm~\ref{alg:epsapprox} returns a matrix  $U$ having $c = \tilde{O}(c_2 + K \cdot k^3/\epsilon^2\cdot \log(1/\delta))$ orthonormal columns such that with probability $\ge 1 - \delta$
	    \begin{equation}
	        \|A(I-BB^\mathsf{T})(I-UU^\mathsf{T})\|_{1,2} \le (1+\epsilon)\textnormal{SubApx}_{1,k}(A(I-BB^\mathsf{T}))	    \end{equation}
	    in time $\tilde{O}(\nnz(A) + d \cdot \poly(k/\epsilon))$. Moreover we also have that $U^\mathsf{T}B = 0$, i.e., the column spaces of $U$ and $B$ are orthogonal to each other.
	\end{theorem}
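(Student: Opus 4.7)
The strategy is to reduce the claim to Lemma~\ref{lma:residual-sampling-main-lemma} applied to $A' := A(I-BB^\mathsf{T})$ together with the given $\hat{X}$. By hypothesis, $\|A'(I-\hat{X}\hat{X}^\mathsf{T})\|_{1,2} \le K \cdot \textnormal{SubApx}_{k,1}(A')$, which is exactly the precondition of the lemma. It therefore suffices to (i) check that the sampling distribution $\{p_i\}$ produced by Algorithm~\ref{alg:epsapprox} overestimates $q_i/\sum_j q_j$ up to a constant, where $q_i = \|A'_{i*}(I-\hat{X}\hat{X}^\mathsf{T})\|_2$, and then (ii) check that the subspace actually returned is at least as good as the one promised by the lemma.

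For (i), note that $M = A'(I-\hat{X}\hat{X}^\mathsf{T})\bG$, where $\bG$ is a $d \times t$ Gaussian with $t = O(\log(n/\delta))$ columns of variance $1/t$. A standard Johnson--Lindenstrauss bound gives $\|x^\mathsf{T}\bG\|_2 \in (1/2, 3/2)\|x\|_2$ with probability $\ge 1 - \delta/n$ for any fixed $x$. Applying this to each row of $A'(I-\hat{X}\hat{X}^\mathsf{T})$ and union bounding yields, with probability $\ge 1 - \delta/2$, $\|M_{i*}\|_2 \in (1/2, 3/2) q_i$ simultaneously for all $i \in [n]$. Consequently $p_i = \|M_{i*}\|_2/\|M\|_{1,2} \ge (1/3)\,q_i / \sum_j q_j$, so Lemma~\ref{lma:residual-sampling-main-lemma} applies with $\alpha = 1/3$; with probability $\ge 1 - \delta/2$ over $\bS$, the subspace $V$ spanned by the rows of $[\hat{X}^\mathsf{T};\ \bS A']$ satisfies $\|A'(I-VV^\mathsf{T})\|_{1,2} \le (1+\epsilon)\textnormal{SubApx}_{k,1}(A')$.

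For (ii), the algorithm's output $U$ spans the image $(I-BB^\mathsf{T})V$, since applying $I-BB^\mathsf{T}$ to the columns of $[\hat{X}\ (A_\bS)^\mathsf{T}]$ produces exactly the images of a spanning set of $V$, and $(I-BB^\mathsf{T})^2 = (I-BB^\mathsf{T})$. The identity $U^\mathsf{T}B = 0$ is then automatic. Moreover, every row $r$ of $A(I-BB^\mathsf{T})$ lies in the orthogonal complement of the column space of $B$, so for any $w = w_B + w_{B^\perp} \in V$ we have $\|r - w_{B^\perp}\|_2^2 = \|r-w\|_2^2 - \|w_B\|_2^2 \le \|r - w\|_2^2$, and $w_{B^\perp} = (I-BB^\mathsf{T})w \in U$. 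Hence $\|A(I-BB^\mathsf{T})(I-UU^\mathsf{T})\|_{1,2} \le \|A(I-BB^\mathsf{T})(I-VV^\mathsf{T})\|_{1,2} \le (1+\epsilon)\textnormal{SubApx}_{k,1}(A(I-BB^\mathsf{T}))$, and $\dim U \le c_2 + s = \tilde{O}(c_2 + K k^3/\epsilon^2 \log(1/\delta))$.

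For the running time, $M$ is formed right-to-left in $\tilde{O}(\nnz(A) + d(c_1+c_2))$ time, sampling $\bS$ is near-linear in $n$, and forming $U$ (extracting $A_\bS$, applying $I-BB^\mathsf{T}$, and orthonormalizing) costs $O(d\,\poly(k/\epsilon))$, for a total of $\tilde{O}(\nnz(A) + d\,\poly(k/\epsilon))$. I expect step (i) to be the main obstacle: converting a sketch $M$ with only $O(\log n)$ columns into simultaneous multiplicative estimates of all $n$ residual norms. Once that concentration is in hand, (ii) and the running-time accounting are routine.
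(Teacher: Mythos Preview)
Your proposal is correct and follows essentially the same approach as the paper: use Gaussian JL to show that the $p_i$ in Algorithm~\ref{alg:epsapprox} are within a constant of $q_i/\sum_j q_j$, then invoke Lemma~\ref{lma:residual-sampling-main-lemma} on $A' = A(I-BB^\mathsf{T})$, and account for running time in the obvious way. Your step~(ii) is in fact more careful than the paper's version, which simply asserts that the lemma's conclusion holds for the algorithm's $U$ without explicitly arguing why replacing the lemma's subspace $V = \mathrm{rowspan}[\hat X^\mathsf{T};\ \bS A']$ by its projection $U = (I-BB^\mathsf{T})V$ cannot increase the cost.
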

	\begin{proof}
	As the matrix $G$ is a Gaussian matrix with $t = O(\log(n/\delta))$ columns, we have that with probability $\ge 1 - (\delta/2)$, for all $i \in [n]$, 
	\begin{equation*}
		\|M_{i*}\|_2 = \|A_{i*}(I-BB^\mathsf{T})(I-\hat{X}\hat{X}^\mathsf{T})G\|_2 = (1 \pm 1/10)\|A_{i*}(I-BB^\mathsf{T})(I-\hat{X}\hat{X}^\mathsf{T})G\|_2. 
	\end{equation*}
	Therefore, the probabilities $p_i$ computed by Algorithm~\ref{alg:epsapprox} are such that
	\begin{equation*}
		p_i = \frac{\|M_{i*}\|_2}{\|M\|_{1,2}} \ge \frac{(9/10)\|A_{i*}(I-BB^\mathsf{T})(I-\hat{X}\hat{X}^\mathsf{T})\|_{2}}{(11/10)\|A(I-BB^\mathsf{T})(I-\hat{X}\hat{X}^\mathsf{T})\|_{1,2}} \ge \frac{9}{11}\frac{\|A_{i*}(I-BB^\mathsf{T})(I-\hat{X}\hat{X}^\mathsf{T})\|_{2}}{\|A(I-BB^\mathsf{T})(I-\hat{X}\hat{X}^\mathsf{T})\|_{1,2}}. 
	\end{equation*} 
	Hence, by applying Lemma~\ref{lma:residual-sampling-main-lemma} to  the matrix $A(I-BB^\mathsf{T})$, we obtain that with probability $\ge 1-\delta$, the matrix $U$ returned by Algorithm~\ref{alg:epsapprox} satisfies
	\begin{equation*}
		\|A(I-BB^\mathsf{T})(I-UU^\mathsf{T})\|_{1,2} \le (1+\epsilon)\text{SubApx}_{1,k}(A(I-BB^\mathsf{T})).
	\end{equation*}
	The matrix $M$ can be computed in time $O(\nnz(A)\log(n/\delta) + (c_1 + c_2)d\log(n/\delta))$. And $s = \tilde{O}(K \cdot k^3/\epsilon^2 \cdot \log(1/\delta))$ independent samples can be drawn from the distribution $p$ in time $O(n + s)$. Finally, the orthonormal basis $U$ can be computed in time $O(d(c+c_1)^2) = O(d\poly(k/\epsilon))$.
	\end{proof}
	
\section{Missing Proofs from Section~\ref{sec:dimensionality-reduction}}
 	\begin{lemma}
		With probability $\ge 2/3$, Algorithm~\ref{alg:dimensionreduction} finds an $\tilde{O}(k^{3}/\epsilon^3)$-dimensional subspace $S$ such that for all $k$-dimensional subspaces $W$,
		\begin{equation*}
			\|A(I-\P_S)\|_{1,2}-\|A(I-\P_{S+W})\|_{1,2}\leq 4\epsilon\cdot\textnormal{SubApx}_{k,1}(A).
		\end{equation*}
		\label{lma:algo-dimreduce}
	\end{lemma}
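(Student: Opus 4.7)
The plan is to track the monotonically non-increasing sequence of residual costs $c_i := \|A(I-B_iB_i^\mathsf{T})\|_{1,2}$, where $B_i$ denotes the value of $B$ after $i$ iterations of the inner loop. Because the block $U$ produced by \textsc{EpsApprox} is orthogonal to $B_i$ (the last sentence of Theorem~\ref{thm:algo-eps-approx-proof}), one has $B_{i+1}B_{i+1}^\mathsf{T} = B_iB_i^\mathsf{T} + UU^\mathsf{T}$ and $(I-B_iB_i^\mathsf{T})(I-UU^\mathsf{T}) = I - B_{i+1}B_{i+1}^\mathsf{T}$, so $c_i$ is non-increasing. Theorem~\ref{thm:algo-eps-approx-proof} further gives $c_{i+1} \le (1+\epsilon)\,\textnormal{SubApx}_{k,1}(A(I-B_iB_i^\mathsf{T}))$ at every step; instantiated at $i=0$ with $B_0=[\,]$ this yields the global envelope $c_1 \le 2\,\textnormal{SubApx}_{k,1}(A)$.

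I would next rewrite the target extension property in terms of $\textnormal{SubApx}_{k,1}(A(I-B_{i^*}B_{i^*}^\mathsf{T}))$. For any $k$-dimensional $W$, let $W' := (I-B_{i^*}B_{i^*}^\mathsf{T})W$, so $\dim W' \le k$, $W' \subseteq B_{i^*}^\perp$, $S+W = S \oplus W'$ with $S = \text{col}(B_{i^*})$, and $\P_{S+W} = B_{i^*}B_{i^*}^\mathsf{T} + \P_{W'}$, giving
\begin{equation*}
\|A(I-\P_{S+W})\|_{1,2} = \|A(I-B_{i^*}B_{i^*}^\mathsf{T})(I-\P_{W'})\|_{1,2}.
\end{equation*}
Conversely, the rows of $A(I-B_{i^*}B_{i^*}^\mathsf{T})$ lie in $B_{i^*}^\perp$, so replacing any candidate $k$-dimensional subspace by its projection onto $B_{i^*}^\perp$ can only decrease the sum-of-distances objective; hence the restricted and unrestricted optima coincide with $\textnormal{SubApx}_{k,1}(A(I-B_{i^*}B_{i^*}^\mathsf{T}))$. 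Minimising over $W$, the lemma reduces to bounding $c_{i^*} - \textnormal{SubApx}_{k,1}(A(I-B_{i^*}B_{i^*}^\mathsf{T}))$.

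The heart of the argument, and the step I expect to be the main obstacle, is a pigeonhole over $i^*$. Let $N := 10/\epsilon+1$. The telescoping sum $\sum_{i=1}^{N}(c_i - c_{i+1}) = c_1 - c_{N+1} \le 2\,\textnormal{SubApx}_{k,1}(A)$ forces at most $1/\epsilon$ indices $i \in [N]$ to satisfy $c_i - c_{i+1} > 2\epsilon\,\textnormal{SubApx}_{k,1}(A)$, so with probability at least $1 - (1/\epsilon)/N \ge 9/10$ over the choice of $i^*$ we have $c_{i^*} - c_{i^*+1} \le 2\epsilon\,\textnormal{SubApx}_{k,1}(A)$. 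Combining with $\textnormal{SubApx}_{k,1}(A(I-B_{i^*}B_{i^*}^\mathsf{T})) \ge c_{i^*+1}/(1+\epsilon) \ge (1-\epsilon)c_{i^*+1}$ and $c_{i^*+1} \le 2\,\textnormal{SubApx}_{k,1}(A)$ yields
\begin{equation*}
c_{i^*} - \textnormal{SubApx}_{k,1}(A(I-B_{i^*}B_{i^*}^\mathsf{T})) \le (c_{i^*} - c_{i^*+1}) + \epsilon\, c_{i^*+1} \le 4\epsilon\cdot\textnormal{SubApx}_{k,1}(A),
\end{equation*}
as required. The only subtlety is that the analysis references $c_{i^*+1}$, the cost of a \emph{hypothetical} one-more-iteration; the algorithm itself terminates at $B_{i^*}$, but the per-step guarantee of Theorem~\ref{thm:algo-eps-approx-proof} still applies to this extra step since its randomness can be taken fresh. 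A union bound over the $O(1/\epsilon)$ invocations of \textsc{PolyApprox} and \textsc{EpsApprox} (each failing with probability $\le \epsilon/100$) plus the pigeonhole failure keeps the overall success probability above $2/3$, and summing the per-iteration dimension from Theorem~\ref{thm:algo-eps-approx-proof} over $N = O(1/\epsilon)$ iterations gives the claimed $\tilde{O}(k^{3}/\epsilon^{3})$ bound.
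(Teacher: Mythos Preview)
Your proof is correct and follows essentially the same approach as the paper: define the residual costs $c_i$, use the orthogonality of $U$ to $B$ to get monotonicity, telescope to bound the total drop by $O(1)\cdot\textnormal{SubApx}_{k,1}(A)$, pigeonhole over $i^*$ to find a step with small drop, and then combine the small-drop inequality with the $(1+\epsilon)$-approximation guarantee of \textsc{EpsApprox} to bound $c_{i^*} - \|A(I-\P_{S+W})\|_{1,2}$. The only cosmetic difference is that you first reduce explicitly to bounding $c_{i^*} - \textnormal{SubApx}_{k,1}(A(I-B_{i^*}B_{i^*}^\mathsf{T}))$ via the observation that projecting $W$ onto $B_{i^*}^\perp$ can only help, whereas the paper leaves $W$ generic and plugs in $\textnormal{SubApx}_{k,1}(A(I-\P_{B_{i^*}})) \le \|A(I-\P_{B_{i^*}+W})\|_{1,2}$ at the end; these are the same step in a different order.
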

\begin{proof}
Suppose that the loop in Algorithm~\ref{alg:dimensionreduction} is run for all $t = 10/\epsilon + 1$ iterations instead of stopping after $i^*$ iterations. Let $\hat{X}_i,U_i,B_i$ be the values of the matrices in the algorithm at the end of $i$ iterations. Let $B_0 = []$ be the empty matrix. Condition on the event that all the calls to Algorithm~\ref{alg:polykapprox} in the algorithm succeed. By a union bound over the failure event of each call to Algorithm~\ref{alg:polykapprox}, this event holds with probability $\ge 9/10$. Therefore, by Theorem~\ref{thm:poly-k-approx}, we obtain that
	\begin{align*}
	&	\|A(I-\P_{B_{i-1}})(I - \P_{\hat{X}_i})\|_{1,2}\\
	& \le \tilde{O}(\sqrt{k}) \cdot \text{SubApx}_{k,1}(A(I-\P_{B_{i-1}}))
	\end{align*}
	for all $i \in [10/\epsilon+1]$ and also that $\hat{X}_i$ has $\tilde{O}(k)$ columns. Now we condition on the event that all the calls to Algorithm~\ref{alg:epsapprox} succeed. By a union bound, this holds with probability $\ge 9/10$. Thus we have
	\begin{align*}
		&\|A(I-\P_{B_i})\|_{1,2} = \|A(I-\P_{B_{i-1}})(I - \P_{U_i})\|_{1,2}\\
		&\le (1+\epsilon) \cdot \text{SubApx}_{k,1}(A(I-\P_{B_{i-1}}))
	\end{align*}
	for all iterations $i \in [10/\epsilon+1]$ and also that $U_i$ has $\tilde{O}(k^{3}/\epsilon^2)$ columns which implies that $B_i$ has $\tilde{O}(ik^{3}/\epsilon^2)$ columns. In particular, we have that $\|A(I-\P_{B_1})\|_{1,2} \le (1+\epsilon)\text{SubApx}_{k,1}(A)$. Therefore
\begin{align*}
	&(1+\epsilon)\text{SubApx}_{k,1}(A) - \|A(I-\P_{B_t})\|_{1,2}\\
	&\ge \|A(I-\P_{B_1})\|_{1,2} - \|A(I-\P_{B_t})\|_{1,2}\\
	&=\sum_{i=2}^\mathsf{T} \|A(I-\P_{B_{i-1}})\|_{1,2} - \|A(I-\P_{B_i})\|_{1,2} \ge 0.
\end{align*}
The last inequality follows from the fact that $\text{colspace}(B_i) \supseteq \text{colspace}(B_{i-1})$. The summation in the above equation has $10/\epsilon$ non-negative summands that all sum to at most $(1+\epsilon)\text{SubApx}_{k,1}(A)$. Therefore, at least $9/\epsilon$ summands have value $\le \epsilon(1+\epsilon)\text{SubApx}_{k,1}(A)$. In particular, with probability $\ge 9/10$, 
\begin{equation*}
	\|A(I-\P_{B_{i^*}})\|_{1,2} - \|A(I-\P_{B_{i^*+1}})\|_{1,2} \le \epsilon(1+\epsilon)\text{SubApx}_{k,1}(A).
\end{equation*}
But we also have that
\begin{align*}
	\|A(I-\P_{B_{i^*+1}})\|_{1,2} &= \|A(I - \P_{B_{i^*}})(I-\P_{U_{i^*}})\|_{1,2}\\
	&\le (1+\epsilon)\text{SubApx}_{k,1}(A(I-\P_{B_{i^*}}))\\
	&\le (1+\epsilon)\|A(I - \P_{B_{i^*}})(I-\P_{W})\|_{1,2}\\
	&=(1+\epsilon)\|A(I - \P_{B_{i^*} + W})\|_{1,2}
\end{align*}
where $W$ is any rank $k$ matrix. The second inequality follows from the fact that $\text{SubApx}_{k,1}(A(I-\P_{B_{i^*}})) = \min_{\text{rank-}k\, W}\|A(I-\P_{B_{i^*}})(I-\P_W)\|_{1,2}$. Therefore, for any rank-$k$ matrix $W$, we obtain that
\begin{align*}
	&\|A(I-\P_{B_{i^*}})\|_{1,2} - \|A(I-\P_{B_{i^*} \cup W})\|_{1,2}\\
	&\le \|A(I-\P_{B_{i^*}})\|_{1,2} - \frac{1}{1+\epsilon}\|A(I-\P_{B_{i^*+1}})\|_{1,2}\\
	&\le \|A(I-\P_{B_{i^*}})\|_{1,2} - (1-\epsilon)\|A(I-\P_{B_{i^*+1}})\|_{1,2}\\
	&\le (\|A(I-\P_{B_{i^*}})\|_{1,2} - \|A(I-\P_{B_{i^*+1}})\|_{1,2}) + \epsilon\|A(I-\P_{B_{i^*+1}})\|_{1,2}\\
	&\le 4\epsilon\cdot \text{SubApx}_{k,1}(A).
	\qedhere
\end{align*}
\end{proof}
\begin{theorem}
    Given a matrix $A \in \R^{n \times d}$, $k \in \mathbb{Z}$ and an accuracy parameter $\epsilon > 0$, Algorithm~\ref{alg:complete-dim-reduce} returns a matrix $B$ with $\tilde{O}(k^{3}/\epsilon^6)$ orthonormal columns and a matrix $\text{Apx} = [X\, v]$ such that for any $k$ dimensional shape $S$, 
$
        \sum_i \sqrt{\dist(BX_{i*}^\mathsf{T}, S)^2 + v_i^2} = (1 \pm\varepsilon)\sum_i \dist(A_i, S).
$
    The algorithm runs in time $O(\nnz{(A)}/\epsilon^2 + (n+d)\poly(k/\epsilon))$.
\end{theorem}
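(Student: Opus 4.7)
The plan is to combine two previously established ingredients: Lemma~\ref{lma:algo-dimreduce} (which analyzes the subspace produced by \textsc{DimensionReduction}) and Theorem~\ref{thm:approximate-projections-to-approxiamte-distance} (which turns approximate projections and distances into a $(1\pm\epsilon)$ sum-of-distances guarantee). The CountSketch loop in Algorithm~\ref{alg:complete-dim-reduce} is the glue that produces, for every row $A_{i*}$, the approximate projection vector $x_i$ and approximate residual $v_i$ that feed into Theorem~\ref{thm:approximate-projections-to-approxiamte-distance}.

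First I would invoke Lemma~\ref{lma:algo-dimreduce} with the accuracy parameter set to $\Theta(\epsilon^2)$, exactly as Algorithm~\ref{alg:complete-dim-reduce} does when it calls \textsc{DimensionReduction}. This produces, with probability $\ge 9/10$, a subspace $B$ of dimension $\tilde{O}(k^{3}/(\epsilon^2)^3) = \tilde{O}(k^{3}/\epsilon^{6})$ such that for every $k$-dimensional subspace $W$,
\begin{equation*}
\|A(I-\P_B)\|_{1,2} - \|A(I-\P_{B+W})\|_{1,2} \le (\epsilon^2/80)\cdot\textnormal{SubApx}_{k,1}(A),
\end{equation*}
which is precisely the hypothesis on $P$ in Theorem~\ref{thm:approximate-projections-to-approxiamte-distance}.

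Next I would verify that the inner loop over $j \in [t]$ correctly produces, for every $i \in [n]$, $x_i$ and $v_i$ meeting the conditions of Theorem~\ref{thm:approximate-projections-to-approxiamte-distance} with $\epsilon_c = \Theta(\epsilon^2) \le \epsilon^2/6$. The key observation, due to \cite{high_probability_embeddings}, is that among $O(\log n)$ independent CountSketch matrices $\bS_1,\ldots,\bS_t$ of $\poly(k/\epsilon)$ rows, for any fixed subspace (here $\text{colspan}([B\; A_{i*}^\mathsf{T}])$), each $\bS_j$ is a $\Theta(\epsilon^2)$-subspace embedding with constant probability. The singular-value consistency check on $D_jV_j^\mathsf{T}V_{j'}(D_{j'}^\mathsf{T})^{-1}$ certifies that two sketches agree on the subspace, and a majority-agreement boosts the guarantee to hold simultaneously for all $n$ rows with probability $\ge 9/10$. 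Once a good $\bS_j$ is identified, it is standard that $x_i = (\bS_jB)^\dagger(\bS_j A_{i*}^\mathsf{T})$ yields $\dist(Bx_i,a_i)\le(1+\epsilon_c)\dist(a_i,P)$ and $v_i = \|(I-(\bS_jB)(\bS_jB)^\dagger)(\bS_j A_{i*}^\mathsf{T})\|_2 = (1\pm\epsilon_c)\dist(a_i,P)$, since a subspace embedding preserves both the least-squares regression optimum and the norm of its residual. Applying Theorem~\ref{thm:approximate-projections-to-approxiamte-distance} then yields $\sum_i \sqrt{\dist(BX_{i*}^\mathsf{T},S)^2+v_i^2} = (1\pm 5\epsilon)\sum_i\dist(a_i,S)$, and rescaling $\epsilon$ by a constant gives the desired $(1\pm\epsilon)$ factor.

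For the running time, Algorithm~\ref{alg:dimensionreduction} runs its loop for $O(1/\epsilon^2)$ iterations (from the choice $\Theta(\epsilon^2)$), and each iteration calls \textsc{PolyApprox} and \textsc{EpsApprox}, each of which costs $\tilde{O}(\nnz(A)+(n+d)\poly(k/\epsilon))$ by Theorems~\ref{thm:poly-k-approx} and~\ref{thm:algo-eps-approx-proof}; summing over iterations gives $\tilde{O}(\nnz(A)/\epsilon^2+(n+d)\poly(k/\epsilon))$. The CountSketch phase computes $t=O(\log n)$ products $\bS_jB$ and $\bS_jA^\mathsf{T}$ in total time $\tilde{O}(\nnz(A)+d\cdot\poly(k/\epsilon))$, and the per-row SVDs and consistency checks cost $\poly(k/\epsilon)$ each, contributing $n\cdot\poly(k/\epsilon)$. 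The total is therefore within the claimed $O(\nnz(A)/\epsilon^2+(n+d)\poly(k/\epsilon))$.

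The main obstacle is the per-row certification step: although each individual $\bS_j$ is a $\Theta(\epsilon^2)$-subspace embedding for $[B\; A_{i*}^\mathsf{T}]$ only with constant probability, we must succeed for \emph{all} $n$ rows simultaneously without blowing up the sketch size by $\log n$ per row. The pairwise singular-value test lets us verify consistency \emph{without} knowing the ground truth, so that with $O(\log n)$ sketches we can, by majority voting across trials, certify a good embedding for every row with high probability; verifying that this majority-based test is sound (i.e., that two independently good sketches must pass the test and that two sketches that pass the test must yield the same answer up to $\Theta(\epsilon^2)$ distortion) is the most delicate piece, and is handled by the results of \cite{high_probability_embeddings}.
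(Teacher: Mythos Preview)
Your proposal is correct and follows essentially the same approach as the paper: invoke Lemma~\ref{lma:algo-dimreduce} with parameter $\Theta(\epsilon^2)$ to obtain the subspace $B$ satisfying the extension property, appeal to the high-probability CountSketch subspace-embedding result of \cite{high_probability_embeddings} (the paper cites it as Theorem~2.10 of \cite{dw-sketching}) to conclude that for every $i$ the chosen $\bS_j$ is a $\Theta(\epsilon^2)$ subspace embedding for $[B\ A_{i*}^\mathsf{T}]$ and hence $x_i,v_i$ meet the hypotheses of Theorem~\ref{thm:approximate-projections-to-approxiamte-distance}, and finish by applying that theorem. Your write-up is in fact more detailed than the paper's own proof, which is quite terse and does not spell out the majority-agreement certification mechanism or the running-time accounting.
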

\begin{proof}
From the above lemma, we have that the subspace $B$ satisfies with probability $\ge 9/10$, that for any $k$ dimensional subspace $W$,
\begin{equation}
    \|A(I-\P_B)\|_{1,2} - \|A(I-\P_{B \cup W})\|_{1,2} \le \frac{\varepsilon^2}{80}\text{SubApx}_{k,1}(A).
\end{equation}
From Theorem~2.10 of \cite{dw-sketching}, we obtain that with probability $\ge 9/10$, for all $i \in [n]$, the matrix $\bS_j$ found for $i \in [n]$ is such that $\bS_j$ is a $\Theta(\epsilon^2)$ subspace embedding for the matrix $[B\, A_{i*}^\mathsf{T}]$. Therefore, $x_i$ is such that
\begin{equation*}
    \|Bx_{i} - A_{i*}^\mathsf{T}\|_2 \le (1 + \Theta(\epsilon^2))\|(I-BB^\mathsf{T})A_{i*}^\mathsf{T}\|_2.
\end{equation*}
and
$v_i = (1 \pm \Theta(\epsilon^2))\|(I-BB^\mathsf{T})A_{i*}^\mathsf{T}\|_2$. Now the proof follows from Theorem~\ref{thm:approximate-projections-to-approxiamte-distance}.
\end{proof}
	
\section{Missing Proofs from Section~\ref{sec:fastcomputation}}
\subsection{Obtaining an $(O(1),\poly(k))$ Approximation}
	\begin{algorithm}[t]
		\caption{\textsc{PolyApproxDense}}
		\label{alg:polykapproxdense}
		\begin{algorithmic}
		    \STATE {\bfseries Input:} $A\in\R^{n\times d}$, $B \in \R^{d \times c_1}$, $k \in \mathbb{Z}, \delta, b$
			\STATE {\bfseries Output:} $\hat{X} \in \R^{d \times c_2}$
			\STATE $\text{cols} \gets O(k + 1/\delta^2)$
			\STATE $\bS^\mathsf{T} \gets \mathcal{N}(0,1)^{d \times \text{cols}}$
			\STATE $\bW \gets $ {$\ell_1$ embedding for $O(k)$ dimensions from \cite{wang-woodruff}}
			\STATE $[Q,R] \gets $ QR decomposition of $(\bW A)(I-BB^\mathsf{T})\bS^\mathsf{T}$
			\STATE $I_1,\ldots,I_b \gets$ Equal size partition of $[n]$ into $b$ parts
			\STATE $\bC_1 \gets $ Cauchy matrix with $O(\log(n))$ rows
			\FOR {$j = 1,\ldots,b$}
				\STATE $M^{(j)} \gets (\bC_1A_{I_j})(I-BB^\mathsf{T})\bS^\mathsf{T}R^{-1}$
				\STATE $\apx_j \gets \sum_{\text{col} \in \text{cols}(M^{(j)})}\text{median}(\text{abs}(M^{(j)}_{*\text{col}}))$
			\ENDFOR
			\STATE $\bC \gets $ Cauchy matrix with $O(\log(n))$ columns
			\STATE $\text{samples} \gets \tilde{O}(k^{3.5})$
			\STATE $\bL \gets []$
			\FOR {$\text{samples}$ iterations}
				\STATE Sample $j \in [b]$ with probability proportional to $\apx_j$
				\STATE $P^{(j)} \gets A_{I_j}(I-BB^\mathsf{T})\bS^\mathsf{T}R^{-1}\bC$
				\STATE For $i \in I_j$, $p^{(j)}_i \gets \text{median(abs($P^{(j)}_{i*}$))}$
				\STATE Sample $i \in I_j$ with probability proportional to $p^{(j)}_i$
				\STATE Append $\frac{1}{\frac{p_i^{(j)}}{\sum_{i \in I(j)}p_i^{(j)}}\frac{\apx_j}{\sum_{j=1}^b \apx_j}\cdot \text{samples}}e_i^\mathsf{T}$ to matrix $\bL$
			\ENDFOR
			\STATE $\hat{X} \gets $ Orthonormal Basis for rowspace($\bL A(I - BB^\mathsf{T})$)
			\STATE Repeat the above O($\log(1/\delta)$) times and return best $\hat{X}$ 
		\end{algorithmic}
	\end{algorithm}
\begin{theorem}
	Given $A \in \R^{n \times d}$, $B \in \R^{d \times c_1}$, $k \in \mathbb{Z}$ and $\delta$,  Algorithm~\ref{alg:polykapproxdense} returns $\hat{X}$ with $\tilde{O}(k^{3.5})$ orthonormal columns that with probability $1-\delta$ satisfies
		\begin{equation*}
			\|A(I-BB^\mathsf{T})(I - \hat{X}\hat{X}^\mathsf{T})\|_{1,2}
			\le O(1) \cdot \textnormal{SubApx}_{k,1}(A(I-BB^\mathsf{T})).
		\end{equation*}
Given that the matrices $\bC_1A_{I_j}$ for all $j \in [b]$ and $\bW A$ are precomputed for all $O(\log(1/\delta))$ trials, the algorithm can be implemented in time $\tilde{O}(((nd/b)\cdot k^{3.5} + d\poly(k/\epsilon))\log(1/\delta))$.
\end{theorem}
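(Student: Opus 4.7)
The plan is to mimic the proof of Theorem~\ref{thm:poly-k-approx}, but to replace the Lewis-weight sampling step, which requires passing over the full matrix, with a block-based Cauchy-median sampler that only inspects rows in a few of the partition blocks. Invoking Lemma~\ref{lma:gaussian-lopsided-embedding}, the Gaussian $\bS^\mathsf{T}$ is a $1/6$-lopsided embedding for $(V_k, (A(I-BB^\mathsf{T}))^\mathsf{T})$ with good probability, so by Lemma~\ref{lma:sketch-to-original} it suffices to show that the random matrix $\bL$ constructed in Algorithm~\ref{alg:polykapproxdense} is a constant-factor $\ell_1$ subspace embedding for the $O(k)$-column matrix $A(I-BB^\mathsf{T})\bS^\mathsf{T}$ and satisfies $\mathbb{E}_\bL[\|\bL M\|_{1,2}] = \|M\|_{1,2}$ for every fixed $M$. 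The expectation identity follows directly because each appended row of $\bL$ is an $i^{\text{th}}$ standard basis vector rescaled by the inverse of its sampling probability, which is a standard importance-sampling estimator.

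First, I would argue that $\bW A(I-BB^\mathsf{T})\bS^\mathsf{T}$ is an $\ell_1$ subspace embedding for the column space of $A(I-BB^\mathsf{T})\bS^\mathsf{T}$ with distortion $\tilde{O}(k)$ and $\tilde{O}(k)$ rows, via Theorem~1.3 of \citet{wang-woodruff}. Its QR decomposition $QR$ lets one define the leverage scores $\ell_i = \|A_{i*}(I-BB^\mathsf{T})\bS^\mathsf{T} R^{-1}\|_1$, and Theorem~\ref{thm:subspace-embedding-to-leverage-score-sampling} then says that drawing $\tilde{O}(k^{3.5})$ rows with probabilities within a constant factor of $\ell_i/\sum_{i'}\ell_{i'}$ produces a $(1\pm 1/2)$ $\ell_1$ embedding for the $O(k)$-column space. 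Thus it only remains to prove that the compound block-then-row procedure in Algorithm~\ref{alg:polykapproxdense} produces such a distribution.

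For this, I would use the following properties of a Cauchy matrix $\bC'$ with $O(\log(n/\delta))$ rows: for any fixed vector $x$, with probability $1-\delta$, $\text{median}(|\bC' x|) = \Theta(\|x\|_1)$. Applied columnwise to $M^{(j)} = (\bC_1 A_{I_j})(I-BB^\mathsf{T})\bS^\mathsf{T} R^{-1}$, a union bound over the $\tilde{O}(k)$ columns of $M^{(j)}$ and the $b$ blocks yields that simultaneously for all $j$, $\apx_j = \Theta\bigl(\sum_{c}\|A_{I_j}(I-BB^\mathsf{T})\bS^\mathsf{T} R^{-1} e_c\|_1\bigr) = \Theta\bigl(\sum_{i \in I_j}\ell_i\bigr)$. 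Analogously, applying the estimator rowwise to $P^{(j)} = A_{I_j}(I-BB^\mathsf{T})\bS^\mathsf{T} R^{-1}\bC$ and union-bounding over the $\tilde{O}(k^{3.5})$ sampled blocks shows that within any sampled block the scores $p_i^{(j)} = \Theta(\ell_i)$. Composing, the probability that Algorithm~\ref{alg:polykapproxdense} selects row $i \in I_j$ is a constant-factor approximation to $\ell_i/\sum_{i'}\ell_{i'}$, meeting the hypothesis of Theorem~\ref{thm:subspace-embedding-to-leverage-score-sampling} with $\gamma=\Theta(1)$. The correctness portion then combines these facts exactly as in the proof of Theorem~\ref{thm:poly-k-approx}, with the $O(\log(1/\delta))$ trials and best-of selection boosting the success probability to $1-\delta$.

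For the runtime: given $\bW A$, forming $\bW A(I-BB^\mathsf{T})\bS^\mathsf{T}$ and its QR decomposition costs $d\cdot\poly(k)$; given $\bC_1 A_{I_j}$, each matrix $M^{(j)}$ costs $\tilde{O}(dk)$, for a total of $\tilde{O}(bdk)$ over all blocks; each of the $\tilde{O}(k^{3.5})$ sampled blocks triggers the formation of a $P^{(j)}$ at cost $\tilde{O}((n/b)d)$, giving $\tilde{O}((nd/b)k^{3.5})$; finally, assembling $\bL A(I-BB^\mathsf{T})$ and its orthonormal basis costs $\tilde{O}(k^{3.5}d + d\poly(k))$. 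Summing and multiplying by $O(\log(1/\delta))$ trials yields the claimed $\tilde{O}(((nd/b)k^{3.5}+d\poly(k/\epsilon))\log(1/\delta))$ bound. The main obstacle I foresee is handling the dependence between the sampled block index and the within-block Cauchy estimator: one must be careful to reveal $\bC$ after $\bC_1$ so that the conditional guarantee $p_i^{(j)} = \Theta(\ell_i)$ holds only for the (constantly many) blocks actually touched, keeping the union bound of size $\tilde{O}(k^{3.5})$ rather than $n$.
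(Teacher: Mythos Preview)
Your proposal is essentially the same as the paper's proof: both reduce to Theorem~\ref{thm:poly-k-approx}, verify the expectation identity $\E_\bL[\|\bL M\|_{1,2}]=\|M\|_{1,2}$ from the importance-sampling form of $\bL$, and then establish that $\bL$ is a constant-distortion $\ell_1$ subspace embedding for $A(I-BB^\mathsf{T})\bS^\mathsf{T}$ by combining the Wang--Woodruff embedding $\bW$, Theorem~\ref{thm:subspace-embedding-to-leverage-score-sampling}, and Cauchy-median estimates for both the block sums $\apx_j$ and the within-block scores $p_i^{(j)}$.

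The one place you diverge from the paper is your final ``obstacle'' about dependence. The paper sidesteps this entirely: since $\bC$ has $O(\log n)$ columns, Indyk's median estimator gives $p_i^{(j)}=\text{median}(|A_{i*}(I-BB^\mathsf{T})\bS^\mathsf{T} R^{-1}\bC|)=(1\pm 1/6)\ell_i$ with failure probability $O(1/n)$ per row, so one simply union bounds over \emph{all} $i\in[n]$ before any sampling occurs. This makes the within-block scores deterministically accurate regardless of which blocks get sampled, and no conditioning argument is needed. Your plan of revealing $\bC$ after the block choices and union bounding only over the touched blocks is more delicate than necessary (and note that each touched block still contains $n/b$ rows, so the union bound would not actually shrink to $\tilde{O}(k^{3.5})$ events anyway). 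Other than this, your argument and runtime accounting match the paper's.
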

\begin{proof}
	The proof is similar to proof of Theorem~\ref{thm:poly-k-approx}. That proof only makes use of the facts that 
	\begin{enumerate}
		\item for any fixed matrix $M$, $\E[\|\bL M\|_{1,2}] = \E[\|M\|_{1,2}]$,
		\item with probability $\ge 9/10$, for all vectors $x$, $(1/2)\|Ax\|_1 \le \|\bL Ax\|_1 \le (3/2)\|Ax\|_1$, and
	\end{enumerate}
	to conclude with the statement in the theorem. We now show that the matrix $\bL$ computed by Algorithm~\ref{alg:polykapproxdense} satisfies all the above three properties.
	
Note that the random matrix $\bL$ is constructed by sampling $N$ rows, where each row is independently equal to $(1/Np_i)e_i^\mathsf{T}$ with probability $p_i$. Thus
\begin{equation}
    \E[\|\bL M\|_{1,2}] = \E[\sum_{i=1}^N \|\bL_{i*}M\|_2] = N\E[\|\bL_{1*}M\|_2] = N\sum_{j=1}^n\|(1/Np_j)e_j^\mathsf{T}M\|_2p_j = \sum_{j=1}^n \|M_{j*}\|_2 = \|M\|_{1,2}.
\end{equation}
	
	We now prove property 2. From Theorem~1.3 of \cite{wang-woodruff}, we have that $\bW$ has $O(k\log(k))$ rows and that with probability $\ge 99/100$, for all vectors $x$
	\begin{equation*}
		\|A(I-BB^\mathsf{T})\bS^\mathsf{T}x\|_1 \le \|\bW A(I-BB^\mathsf{T})\bS^\mathsf{T}x\|_1 \le O(k\log(k))\|A(I-BB^\mathsf{T})\bS^\mathsf{T}x\|.
	\end{equation*}
	Let $\ell_i = \|A_{i*}(I-BB^\mathsf{T})\bS^\mathsf{T} R^{-1}\|_1$ for $i \in [n]$. From Theorem~\ref{thm:subspace-embedding-to-leverage-score-sampling}, if the probability that the $i^\textnormal{th}$ row is sampled is $\ge (1/2)(\ell_i/\sum_{i'} \ell_{i'})$ for all $i \in [n]$, then the matrix $\bL$ constructed is a $(1/2,3/2)$ $\ell_1$-subspace embedding with probability $\ge 99/100$. Now consider sampling a row of the matrix $\bL$ in the algorithm. We have that the sampled row is in the direction of $e_i$ with probability $(\apx_{j(i)}/\sum_{j' \in [b]}\apx_{j'}) \cdot (p_{i}^{j(i)}/\sum_{i' \in I_{j(i)}}p_{i'}^{j(i)})$. We use $j(i)$ to denote $j \in [b]$ such that $i \in I_j$. We show that this probability is at least $(1/2)(\ell_i/\sum_{i'}\ell_{i'})$. For $j \in [b]$, 
	\begin{equation*}
		\apx_j = \sum_{\text{col}}\text{median}(\text{abs}(M^{(j)}_{*\text{col}})).
	\end{equation*}
 From Theorem 1 of \cite{indyk-l1-estimator}, we have with probability $\ge 1 - 1/100b$ that 
		\begin{equation*}
			\text{median}(\text{abs}(M^{(j)}_{*\text{col}})) = (1 \pm 1/6)\|A_{I_j}(I-BB^\mathsf{T})\bS^\mathsf{T}R^{-1}_{*\text{col}}\|_1.
		\end{equation*}
		Thus $\sum_{\text{col}}\text{median}(\text{abs}(M^{(j)}_{*\text{col}})) = (1 \pm 1/6)\sum_{\text{col}}\|A_{I_j}(I-BB^\mathsf{T})\bS^\mathsf{T}R^{-1}_{*\text{col}}\|_1 = (1 \pm 1/6)\sum_{i \in I_j}\|A_{i*}(I-BB^\mathsf{T})\bS^\mathsf{T}R^{-1}\|_{1} = (1 \pm 1/6)\sum_{i \in I_j}\ell_i$. Therefore, by a union bound, with probability $\ge 99/100$, for all $j \in [b]$
		\begin{equation*}
			\apx_j = (1 \pm 1/6)\sum_{i \in I_j}\ell_i.
		\end{equation*}
Again, from \cite{indyk-l1-estimator}, we obtain that with probability $\ge 99/100$, that for all $i \in [n]$
\begin{equation*}
	\text{median}(\text{abs}(A_{i*}(I-BB^\mathsf{T})\bS^\mathsf{T} R^{-1}\bC)) = (1 \pm 1/6)\|A_{i*}(I-BB^\mathsf{T})\bS^\mathsf{T} R^{-1}\|_1 = (1 \pm 1/6)\ell_i.
\end{equation*}
Thus, with probability $\ge 99/100$, for all $j$ and $i \in I_j$, we have $p^{(j)}_{(i)} = (1 \pm 1/6)\ell_i$. By a union bound, with probability $\ge 98/100$, the probability that an arbitrary row $i$ is sampled in an iteration of the algorithm is 
\begin{equation*}
	(\apx_{j(i)}/\sum_{j' \in [b]}\apx_{j'}) \cdot (p_{i}^{j(i)}/\sum_{i' \in I_{j(i)}}p_{i'}^{j(i)}) \ge \frac{5}{7}\frac{\sum_{i' \in I_j}\ell_{i'}}{\sum_{i'\in [n]}\ell_{i'}}\frac{5}{7}\frac{\ell_i}{\sum_{i'\in I_j}\ell_{i'}} \ge \frac{1}{2}\frac{\ell_i}{\sum_{i'\in [n]}\ell_{i'}}.
\end{equation*}
Thus by a union bound, $\bL$ is a $(1/2,3/2)$ subspace embedding. Now the proof and argument for the running time follow.
\end{proof}
\subsection{Obtaining a $(1+\epsilon,\poly(k/\epsilon))$ Solution}
	\begin{algorithm}
		\caption{\textsc{EpsApproxDense}}
		\label{alg:epsapproxdense}
		\begin{algorithmic}
		    \STATE {\bfseries Input:} $A\in\R^{n\times d}$, $B \in \R^{d \times c_1}$, $\hat{X} \in \R^{d \times c_2}$, $k \in \mathbb{Z}, K, \epsilon, \delta, b$
			\STATE {\bfseries Output:} $U \in \R^{d \times c}$
			\STATE $t \gets O(\log(n))$, $\bG \gets \mathcal{N}(0,1)^{d \times \text{cols}}$
			\STATE $I_1,\ldots,I_b \gets$ Equal size partition of $[n]$ into $b$ parts
			\STATE $\bC_1 \gets $ Cauchy matrix with $O(\log(n))$ rows
			\FOR{$j = 1,\ldots,b$}
				\STATE $M^{(j)} \gets (\bC_1A_{I_j})(I-BB^\mathsf{T})(I-\hat{X}\hat{X}^\mathsf{T})(\bG/t)\sqrt{\pi/2}$
				\STATE $\apx_j \gets \sum_{\text{col} \in \text{cols}(M^{(j)})}\text{median}(\text{abs}(M^{(j)}_{*\text{col}}))$
			\ENDFOR
			\STATE $\text{samples} \gets \tilde{O}(K \cdot k^3/\epsilon^2 \cdot \log(1/\delta))$, $\bS \gets \varnothing$
			\FOR{\textnormal{samples} iterations}
				\STATE Sample $j \in [b]$ with probability proportional to $\apx_j$
				\STATE $P^{(j)} \gets A_{I_j}(I-BB^\mathsf{T})(I-\hat{X}\hat{X}^\mathsf{T})\bG$
				\STATE For $i \in I_j$, $p^{(j)}_i \gets \|P^{(j)}_{i*}\|_2$
				\STATE Sample $i \in I_j$ with probability proportional to $p^{(j)}_i$
				\STATE $\bS \gets \bS \cup i$
			\ENDFOR
			\STATE $U \gets \text{colspan}((I-BB^\mathsf{T})[\hat{X}\, (A_{\bS})^\mathsf{T}])$
			\STATE {\bfseries Return} $U$
		\end{algorithmic}
	\end{algorithm}
\begin{theorem}
	Given a matrix $A \in \R^{n \times d}$, orthonormal matrices $B \in \R^{n \times c_1}$ and $\hat{X} \in \R^{n \times c_2}$ such that
	\begin{equation*}
		\|A(I-BB^\mathsf{T})(I-\hat{X}\hat{X}^\mathsf{T})\|_{1,2} \le K \cdot \text{SubApx}_{1,k}(A(I-BB^\mathsf{T})),
	\end{equation*}
	and parameters $k$, $\epsilon$, and $\delta$, Algorithm~\ref{alg:epsapproxdense} outputs a matrix $U$ with $c = c_1 + \tilde{O}(K \cdot k^3/\epsilon^2 \cdot \log(1/\delta))$ orthonormal columns such that with probability $\ge 1 - \delta$,
	\begin{equation*}
		\|A(I-BB^\mathsf{T})(I-UU^\mathsf{T})\|_{1,2} \le (1+\epsilon)\text{SubApx}_{1,k}(A(I-BB^\mathsf{T})).
	\end{equation*}
	Given that $\bC_1 A_{I_j}$ is precomputed for all $j \in [b]$, the algorithm runs in time $\tilde{O}((nd/b) \cdot (K \cdot k^{3}/\epsilon^2 \log(1/\delta)) + d\poly(k/\epsilon))$.
\end{theorem}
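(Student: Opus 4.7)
The plan is to reduce this to Lemma~\ref{lma:residual-sampling-main-lemma} exactly as in the proof of Theorem~\ref{thm:consttoepsapprox}, the only difference being that Algorithm~\ref{alg:epsapproxdense} uses a two-level block-based sampling scheme in order to avoid touching all $n$ rows of $A(I-BB^\mathsf{T})(I-\hat{X}\hat{X}^\mathsf{T})$. It therefore suffices to show that the overall probability with which each row $i$ is sampled is at least $\Omega(q_i/\sum_{i'} q_{i'})$, where $q_i = \|A_{i*}(I-BB^\mathsf{T})(I-\hat{X}\hat{X}^\mathsf{T})\|_2$, and then invoke Lemma~\ref{lma:residual-sampling-main-lemma} with $\alpha = \Omega(1)$ and $K$ as given.

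First, I would analyze the block-level estimates $\apx_j$. For a Gaussian matrix $\bG$ with $t = O(\log n)$ columns and scaling $1/t$ together with the constant $\sqrt{\pi/2}$ from \cite{plain-vershynin-one-bit-compressive-sensing}, we have $(1/t)\|x^\mathsf{T}\bG\|_1 \sqrt{\pi/2} = (1\pm 1/10)\|x\|_2$ with probability $\ge 1-1/\poly(n)$ for any fixed $x$. Applied columnwise to $(I-BB^\mathsf{T})(I-\hat{X}\hat{X}^\mathsf{T})\bG$, a union bound over all $n$ rows gives that, up to $(1\pm 1/10)$, each entry of $A_{I_j*}(I-BB^\mathsf{T})(I-\hat{X}\hat{X}^\mathsf{T})\bG\sqrt{\pi/2}/t$ is the correct $\ell_2$ norm. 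Then the Cauchy median estimator of \cite{indyk-l1-estimator} applied with $O(\log(nb))$ rows yields, with probability $\ge 99/100$ after a union bound over all $j \in [b]$ and all $t$ columns, that $\apx_j = (1\pm 1/6)\sum_{i \in I_j} q_i$. Hence the block-selection distribution is within a constant factor of $\{\sum_{i\in I_j} q_i / \sum_{i'\in [n]} q_{i'}\}_{j\in[b]}$.

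Next I would analyze the within-block estimates $p_i^{(j)}$. Once a block $j$ is chosen, the algorithm actually forms $P^{(j)} = A_{I_j}(I-BB^\mathsf{T})(I-\hat{X}\hat{X}^\mathsf{T})\bG$ and uses $\|P^{(j)}_{i*}\|_2$ as $p_i^{(j)}$. By the same Johnson--Lindenstrauss style property of the Gaussian $\bG$ (with $O(\log n)$ columns) and a union bound over all $n$ rows, we get $p_i^{(j)} = (1\pm 1/6)q_i$ simultaneously for all $i$ with probability $\ge 99/100$. Combining the two levels, the probability that row $i \in I_{j(i)}$ is chosen in any given iteration is
\begin{equation*}
\frac{\apx_{j(i)}}{\sum_{j'} \apx_{j'}} \cdot \frac{p_i^{(j(i))}}{\sum_{i' \in I_{j(i)}} p_{i'}^{(j(i))}} \ge \Omega(1) \cdot \frac{q_i}{\sum_{i'} q_{i'}},
\end{equation*}
so Lemma~\ref{lma:residual-sampling-main-lemma} with $\alpha = \Omega(1)$ gives that the orthonormal basis $U$ for $\text{colspan}((I-BB^\mathsf{T})[\hat{X}\ (A_{\bS})^\mathsf{T}])$ satisfies $\|A(I-BB^\mathsf{T})(I-UU^\mathsf{T})\|_{1,2} \le (1+\epsilon)\,\textnormal{SubApx}_{k,1}(A(I-BB^\mathsf{T}))$ with probability $\ge 1-\delta$. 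Orthogonality $U^\mathsf{T}B = 0$ follows because we explicitly left-multiply by $(I - BB^\mathsf{T})$ before taking the column span.

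Finally, for the running time, assuming $\bC_1 A_{I_j}$ is precomputed: forming $M^{(j)}$ for all $j$ requires multiplying the $O(\log n) \times d$ matrix $\bC_1 A_{I_j}$ against $(I-BB^\mathsf{T})(I-\hat{X}\hat{X}^\mathsf{T})\bG$ which has only $\poly(k/\epsilon)\cdot O(\log n)$ columns on the right, giving $\tilde{O}(d\,\poly(k/\epsilon))$ work per block and $\tilde{O}(bd\,\poly(k/\epsilon))$ overall for the preprocessing once the precomputed products are available. The sampling loop runs $\text{samples} = \tilde{O}(K\cdot k^3/\epsilon^2 \cdot \log(1/\delta))$ times, and each iteration computes $P^{(j)}$ only for one block of size $n/b$, costing $\tilde{O}((nd/b)\,\poly(k/\epsilon))$ altogether, plus $\tilde{O}(d\,\poly(k/\epsilon))$ to form the final basis. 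The dominant term is $\tilde{O}((nd/b)\cdot K k^3/\epsilon^2 \log(1/\delta) + d\,\poly(k/\epsilon))$ as claimed. The main delicate step is the simultaneous correctness of the two levels of estimators, which I would handle by tightening the Cauchy/Gaussian failure probabilities to $1/\poly(n,b)$ so a single union bound suffices.
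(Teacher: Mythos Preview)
Your proposal is correct and follows essentially the same approach as the paper: both reduce to Lemma~\ref{lma:residual-sampling-main-lemma} by showing that the two-level sampling (Cauchy median for block sums of $\ell_1$-of-Gaussian estimates, then Gaussian JL within a sampled block) yields per-row probabilities at least a constant times $q_i/\sum_{i'}q_{i'}$, and both account for the running time by observing that only the sampled-block matrices $P^{(j)}$ incur the $(nd/b)$ cost. The only cosmetic differences are the specific constants and that the paper presents the $\apx_j$ analysis after the probability bound rather than before; your additional remark that $U^\mathsf{T}B=0$ is correct and simply not restated in the paper's proof of this particular theorem.
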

\begin{proof}
We show that the probability that a row $i$ is sampled in an iteration of the Algorithm is $\ge (1/12)\|A_{i*}(I-BB^\mathsf{T})(I-\hat{X}\hat{X}^\mathsf{T})\|_2/\|A(I-BB^\mathsf{T})(I-\hat{X}\hat{X}^\mathsf{T})\|_{1,2}$. Then the proof follows as in the proof of Theorem~\ref{thm:algo-eps-approx-proof}. First assume that $\apx_j$ for $j \in [b]$ computed by the algorithm satisfies
\begin{equation*}
	\apx_j = (1/2,2) \sum_{i \in I_j}\|A_{j*}(I-BB^\mathsf{T})(I-\hat{X}\hat{X}^\mathsf{T})\|_2.
\end{equation*}
Now the probability $p_i$ with which a row $i$ is sampled by the algorithm is given by
\begin{equation*}
	p_i = \frac{\apx_{j(i)}}{\sum_{j \in [b]}\apx_j} \cdot \frac{\|A_{i*}(I-BB^\mathsf{T})(I-\hat{X}\hat{X}^\mathsf{T})\bG\|_2}{\sum_{i' \in I_{j(i)}}\|A_{i'*}(I-BB^\mathsf{T})(I-\hat{X}\hat{X}^\mathsf{T})\bG\|_{2}}.
\end{equation*}
As $\bG$ is a Gaussian matrix with $t = O(\log(n/\delta))$ columns, we have that with probability $\ge 1 - \delta$ that for all $i' \in [n]$ $\|A_{i'*}(I-BB^\mathsf{T})(I-\hat{X}\hat{X}^\mathsf{T})\bG\|_2 = (1 \pm 1/2)\|A_{i'*}(I-BB^\mathsf{T})(I-\hat{X}\hat{X}^\mathsf{T})\|_2 \cdot \sqrt{t}$. Therefore
\begin{align*}
	p_i
	&= \frac{\apx_{j(i)}}{\sum_{j \in [b]}\apx_j} \cdot \frac{\|A_{i*}(I-BB^\mathsf{T})(I-\hat{X}\hat{X}^\mathsf{T})\bG\|_2}{\sum_{i' \in I_{j(i)}}\|A_{i'*}(I-BB^\mathsf{T})(I-\hat{X}\hat{X}^\mathsf{T})\bG\|_{2}}\\
	&\ge \frac{1}{12} \frac{\|A_{i*}(I-BB^\mathsf{T})(I-\hat{X}\hat{X}^\mathsf{T})\|_2}{\|A(I-BB^\mathsf{T})(I-\hat{X}\hat{X}^\mathsf{T})\|_{1,2}}.
\end{align*}
We now prove our assumption which concludes the proof.
	
	Let $x \in \R^d$ be an arbitrary vector. As $\bG$ is a Gaussian matrix with $t = O(\log(n/\delta))$ columns, Lemma~5.3 of \cite{plain-vershynin-one-bit-compressive-sensing} states that
	\begin{equation*}
		\Pr\left[\left|\frac{1}{t}\|x^\mathsf{T}G\|_1 - \sqrt{\frac{2}{\pi}}\|x\|_2\right| \ge \alpha\|x\|_2\right] \le C\exp(-ct\alpha^2).
	\end{equation*}
	Picking an appropriate $\alpha = O(1)$, by a union bound, with probability $\ge 1 - \delta/3$, we obtain
	\begin{align*}
	&	\|{A_{i*}}(I-BB^\mathsf{T})(I-\hat{X}\hat{X}^\mathsf{T})(G/t)\sqrt{\pi/2}\|_1\\
	&= (4/5,6/5)\|A_{i*}(I-BB^\mathsf{T})(I-\hat{X}\hat{X}^\mathsf{T})\|_2
	\end{align*}
	for all $i \in [n]$.
	Now, if $C$ is a Cauchy matrix with $O(\log(n/\delta))$ rows, then with probability $1 - \delta/(3nb)$, we have that
	\begin{equation*}
		\text{median}(\text{abs}(Cx)) = (1 \pm 1/5)\|x\|_1.
	\end{equation*}
	Therefore, by a union bound, we obtain that, with probability $\ge 1 -\delta/3$, for all $j \in [b]$ and $i \in t$ that
	\begin{equation*}
		\text{median}(\text{abs}(CA_{I_j*}(I-BB^\mathsf{T})(I-\hat{X}\hat{X}^\mathsf{T})\bG_{*i})) = (1 \pm 1/5)\|A_{I_j*}(I-BB^\mathsf{T})(I-\hat{X}\hat{X}^\mathsf{T})\bG_{*i}\|_1.
	\end{equation*}
Therefore, with probability $\ge 1 - 2\delta/3$, for all $j \in [b]$,
	\begin{align*}
		\apx_j &= \sum_{i}\textnormal{median}(\textnormal{abs}((M^{(j)})_{*i})) =\sum_{i=1}^\mathsf{T} \text{median}(\text{abs}(CA_{I_j*}(I-BB^\mathsf{T})(I-\hat{X}\hat{X}^\mathsf{T})(\bG_{*i}/t)\sqrt{\pi/2}))\\
		&= (1 \pm 1/5)\sum_{i=1}^\mathsf{T}\|A_{I_j*}(I-BB^\mathsf{T})(I-\hat{X}\hat{X}^\mathsf{T})(\bG_{*i}/t)\sqrt{\pi/2}\|_1\\
		&= (1 \pm 1/5)\sum_{i \in I_j}\|A_{i*}(I-BB^\mathsf{T})(I-\hat X\hat{X}^\mathsf{T})(\bG/t)\sqrt{\pi/2}\|_1\\
		&= (4/5,6/5)(4/5,6/5)\sum_{i \in I_j}\|A_{i*}(I-BB^\mathsf{T})(I-\hat{X}\hat{X}^\mathsf{T})\|_2\\
		&= (1/2,2)\sum_{i \in I_j}\|A_{i*}(I-BB^\mathsf{T})(I-\hat{X}\hat{X}^\mathsf{T})\|_2.
	\end{align*}
The only term in the running time that involves a factor $nd$ is in computing the matrix $P^{(j)}$ for the chosen $j$. A total of $\tilde{O}(K \cdot k^{3}/\varepsilon^2 \cdot \log(1/\delta))$ such $j \in [b]$  are sampled. Therefore, the total running time for computing the matrices $P^{(j)}$ for $j$ sampled by the algorithm is equal to $(nd/b) \cdot \log(n) \cdot \tilde{O}(K \cdot k^{3}/\varepsilon^2 \cdot \log(1/\delta)) + d \cdot \poly(k/\epsilon)$.
\end{proof}
\subsection{Overall Algorithm}
  	\begin{algorithm}[t]
		\caption{\textsc{DimensionReductionDense}}
		\label{alg:dimensionreductiondense}
		\begin{algorithmic}
		    \STATE {\bfseries Input:} $A\in\R^{n\times d}$, $k, \epsilon>0$.
			\STATE {\bfseries Output:} $B \in \R^{d \times c}$ with orthonormal columns
			 \STATE $t \gets 10/\varepsilon + 1$
			 \STATE $i^* \leftarrow $ uniformly random integer from $[10/\epsilon+1]$.
			 \STATE Initialize $B \gets []$
			 \STATE $b \gets k^{3.5}/\epsilon^3$
			 \STATE $\delta = \Theta(\epsilon)$
			\FOR {$i = 1,\ldots,i^*$}
			
			     \STATE $\hat{X} \leftarrow$ \textsc{PolyApproxDense}$(A,B,k,\delta, b)$.
				 \STATE $U \leftarrow$ \textsc{EpsApproxDense}$(A,B,\hat{X},k, \tilde{O}(\sqrt{k}), \Theta(\epsilon),\delta, b)$.
				 \STATE $B \leftarrow [B\, |\, U ]$.
			\ENDFOR
			\STATE {\bfseries Return} $B$.
		\end{algorithmic}
	\end{algorithm} 
\begin{lemma}
Given matrix $A \in \R^{n \times d}, k \in \mathbb{Z}$ and $ \epsilon > 0$, Algorithm~\ref{alg:dimensionreductiondense} returns a matrix $B$ with $\tilde{O}(k^{3.5}/\epsilon^3)$ orthonormal columns such that, with probability $\ge 3/5$, for all $k$ dimensional spaces $W$,
\begin{equation*}
	\|A(I-\P_{B})\|_{1,2} - \|A(I-\P_{B \cup W})\|_{1,2} \le \epsilon \cdot \text{SubApx}_{k,1}(A).
\end{equation*}
The Algorithm runs in time $\tilde{O}(nd + (n+d)\poly(k/\epsilon))$.
\end{lemma}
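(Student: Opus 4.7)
The plan is to mirror the proof of Lemma~\ref{lma:algo-dimreduce} almost verbatim, substituting the dense subroutines \textsc{PolyApproxDense} and \textsc{EpsApproxDense} for their sparse counterparts. I would imagine running the loop in Algorithm~\ref{alg:dimensionreductiondense} for all $t = 10/\epsilon+1$ iterations rather than stopping at the randomly chosen $i^*$, and let $B_i$ denote the value of $B$ after $i$ iterations (with $B_0$ empty). Since the per-call failure probability is set to $\delta = \Theta(\epsilon)$, a union bound shows that with probability at least $4/5$ every invocation of \textsc{PolyApproxDense} produces a subspace $\hat{X}_i$ of $\tilde{O}(k^{3.5})$ columns with cost at most $\tilde{O}(\sqrt{k})\cdot \textnormal{SubApx}_{k,1}(A(I-\P_{B_{i-1}}))$, and every invocation of \textsc{EpsApproxDense} (called with $K=\tilde{O}(\sqrt{k})$) returns $U_i$ of $\tilde{O}(k^{3.5}/\epsilon^2)$ columns satisfying
\[
\|A(I-\P_{B_i})\|_{1,2} \le (1+\Theta(\epsilon))\cdot \textnormal{SubApx}_{k,1}(A(I-\P_{B_{i-1}})).
\]

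Conditioned on this event, the telescoping argument carries over unchanged: the nonnegative summands in $\sum_{i=2}^t (\|A(I-\P_{B_{i-1}})\|_{1,2} - \|A(I-\P_{B_i})\|_{1,2})$ telescope to at most $\|A(I-\P_{B_1})\|_{1,2} \le (1+\epsilon)\textnormal{SubApx}_{k,1}(A)$, so at most a $1/10$ fraction of them can exceed $\epsilon(1+\epsilon)\textnormal{SubApx}_{k,1}(A)$. Hence, for the uniformly random $i^*$, with constant probability the gap $\|A(I-\P_{B_{i^*}})\|_{1,2} - \|A(I-\P_{B_{i^*+1}})\|_{1,2}$ is at most $\epsilon(1+\epsilon)\textnormal{SubApx}_{k,1}(A)$; combining with $\|A(I-\P_{B_{i^*+1}})\|_{1,2} \le (1+\Theta(\epsilon))\|A(I-\P_{B_{i^*}\cup W})\|_{1,2}$ for every $k$-dimensional $W$ yields the extension property at level $O(\epsilon)$, which gives the stated $\epsilon\cdot\textnormal{SubApx}_{k,1}(A)$ bound after rescaling the algorithm's input $\epsilon$. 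Intersecting the two success events gives the $\ge 3/5$ probability claim. Summing the $\tilde{O}(k^{3.5}/\epsilon^2)$ columns added per iteration over the $O(1/\epsilon)$ iterations gives the $\tilde{O}(k^{3.5}/\epsilon^3)$ dimension bound.

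For the running time, the strategy is to isolate the $A$-dependent work into one-shot precomputations. Stacking every oblivious sketch used across all $O(1/\epsilon)$ iterations---the Cauchy sketches $\bC_1$ for both subroutines and the subspace embedding $\bW$ of \citet{wang-woodruff}---into a single matrix $\mathcal{M}$ with $\poly(k/\epsilon)$ rows, I would compute $\mathcal{M}A$ once in $\tilde{O}(nd)$ time using the fast rectangular matrix multiplication of \citet{coppersmith}. From these precomputed products, every $B$- and $\hat{X}$-dependent quantity inside each iteration (such as $(\bC_1 A_{I_j})(I-BB^\mathsf{T})\bS^\mathsf{T}$ and the block approximations $\apx_j$) reduces to manipulations of matrices with $\poly(k/\epsilon)$ rows and costs only $d\cdot\poly(k/\epsilon)$ per iteration. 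The remaining $n$- or $d$-linear cost comes from the $\tilde{O}(k^{3.5}/\epsilon^2)$ sampled blocks processed inside each subroutine call; with $b = k^{3.5}/\epsilon^3$ each block has $n/b$ rows, so this cost is $\tilde{O}((nd/b)\cdot k^{3.5}/\epsilon^2) = \tilde{O}(nd\epsilon)$ per iteration, totaling $\tilde{O}(nd)$ over $1/\epsilon$ iterations. The main obstacle is the careful amortization: one must ensure that all $\bW$ and $\bC_1$ sketches are oblivious to $A$ (so they can be batched into $\mathcal{M}$ and computed once globally rather than once per iteration), and one must verify that the per-iteration $B$- and $\hat{X}$-dependent bookkeeping done after sampling the blocks fits within the claimed $d\poly(k/\epsilon)$ budget; once these are checked the totals collapse to $\tilde{O}(nd + (n+d)\poly(k/\epsilon))$ as claimed.
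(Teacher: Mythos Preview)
Your proposal is correct and follows essentially the same approach as the paper: defer the correctness argument to the proof of Lemma~\ref{lma:algo-dimreduce} (which you spell out in more detail than the paper bothers to), and for the running time, stack all the oblivious sketches $\bC_1$ and $\bW$ across the $O(1/\epsilon)$ iterations into a single $\poly(k/\epsilon)\times n$ matrix $\mathcal{M}$, compute $\mathcal{M}A$ once via Coppersmith's fast rectangular multiplication, and then balance the per-iteration sampled-block cost by choosing $b = k^{3.5}/\epsilon^3$. Your amortization arithmetic and the dimension count both match the paper's.
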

\begin{proof}
	The proof of the lemma is similar to that of Lemma~\ref{lma:algo-dimreduce}. We now argue that all the pre-computed matrices required across all the iterations of the algorithm can be computed in time $\tilde{O}(nd)$. The Cauchy matrix $\bC_1$ used in Algorithm~\ref{alg:polykapproxdense} has $O(\log(n))$ rows and the matrix $\bW$ has $\tilde{O}(k)$ rows. Note that we have
	\begin{equation*}
		\begin{bmatrix}\bC_1A_{I_1}\\ \bC_1A_{I_2}\\ \\ \bC_1A_{I_b}\\ \bW A\end{bmatrix} = \begin{bmatrix}
 			\bC_1 &  & & & \\
 			 & \bC_1 & & & \\
 			 & & & &\\
 			 &  &  & & \bC_1\\
 			& & \bW & &
 			\end{bmatrix}A.
	\end{equation*}
	Thus all the matrices required for  Algorithm~\ref{alg:polykapproxdense} can be computed by multiplying a $\poly(k/\epsilon) \times n$ matrix with $A$. Similarly, we can compute all the matrices required for Algorithm~\ref{alg:epsapproxdense} by computing the product of a $\poly(k/\epsilon) \times n$ matrix with $A$. Thus, all the matrices required across all iterations of Algorithm~\ref{alg:dimensionreductiondense} can be computed by multiplying a $\poly(k/\epsilon) \times n$  matrix with $A$, which can be done in time $\tilde{O}(nd)$ by the algorithm of \citet{coppersmith}, assuming $n \gg \poly(k/\epsilon)$. Now each iteration of the loop in Algorithm~\ref{alg:dimensionreductiondense} takes $\tilde{O}((nd/b)k^{3.5}/\epsilon^2 + (n+d)\poly(k/\epsilon))$ time. As there are $O(1/\epsilon)$ iterations, the algorithm runs in time $\tilde{O}((nd/b)k^{3.5}/\epsilon^3 + (n+d)\poly(k/\epsilon))$. Since the value of $b$ is chosen to be $k^{3.5}/\epsilon^3$, we obtain that the running time of the algorithm is $\tilde{O}(nd + (n+d)\poly(k/\epsilon))$, including the time to compute the required pre-computed matrices.
\end{proof}
\section{Coreset Construction using Dimensionality Reduction}
Algorithm~\ref{alg:coreset} gives the general algorithm to construct a coreset for any objective involving the sum-of-distances metric. In this section, we discuss the coreset construction for two such problems: the $k$-median and $k$-subspace approximation problems. 
\begin{algorithm}
\caption{\textsc{CoresetConstruction}}
\label{alg:coreset}
\begin{algorithmic}
   \STATE {\bfseries Input:} $A \in \R^{n \times d}$, $k$, $\epsilon$
   \STATE {\bfseries Output:} Coreset
   \STATE $(B, \text{Apx}) \gets \textsc{CompleteDimReduce}(A,k,\epsilon)$
   \STATE Construct a coreset for the instance $\text{Apx}\begin{bmatrix}B^\mathsf{T} & 0\\
   0 & 1\end{bmatrix}$ and return
\end{algorithmic}
\end{algorithm}

For $(B,\text{Apx}=[X\, v])$ returned by Algorithm~\ref{alg:complete-dim-reduce}, we have the guarantee that, with probability $\ge 9/10$, for any $k$-dimensional shape $S$,
\begin{equation*}
	\sum_{i}\sqrt{\dist(BX_{i*}^\mathsf{T}, S)^2 + v_i^2} = (1 \pm \epsilon)\sum_{i}\dist(A_{i*},S).
\end{equation*}
Given a set $S$, let $S_{+1}$ denote the set $\{(s,0)\,|\, s\in S\}$. Let $\text{diag}(B^\mathsf{T}, 1) = \begin{bmatrix}B^\mathsf{T} & 0\\
   0 & 1\end{bmatrix}$. Using this notation, we have that
\begin{equation*}
	\sum_i \dist(\text{Apx}_{i*}\cdot \text{diag}(B^\mathsf{T},1 ), S_{+1}) = (1 \pm \epsilon)\sum_{i}\dist(A_{i*},S).
\end{equation*}
Using the above relation, we give a coreset construction for the $k$-subspace approximation and $k$-median problems. These constructions are as in \cite{sohler2018strong}. For any matrix $M$, let $M_{+1}$ denote the matrix $M$ with a new column of $0$s appended at the end and let $M_{-1}$ denote the matrix $M$ with the last column deleted. 
\begin{theorem}[Coreset for Subspace Approximation]
	There exists a sampling-and-scaling matrix $T$ that samples and scales $\tilde{O}(k^{3}/\epsilon^8)$ rows of the matrix $\text{Apx}$ such that, with probability $\ge 3/5$, for any projection matrix $P$ of rank $k$ that projects onto a subspace $S$ of dimension at most $k$, we have
	\begin{align*}
		\|(( T \cdot \text{Apx} \cdot \text{diag}(B^\mathsf{T},1))_{-1}P)_{+1} - T \cdot \text{Apx} \cdot \text{diag}(B^\mathsf{T},1)\|_{1,2} &= (1 \pm O(\epsilon))\|((\text{Apx} \cdot \text{diag}(B^\mathsf{T},1))_{-1}P)_{+1} - \text{Apx} \cdot \text{diag}(B^\mathsf{T},1)\|_{1,2}\\
		&= (1 \pm O(\epsilon))\sum_i \dist(A_i, S).
	\end{align*}
	This sampling matrix can be computed in time $O(n \cdot \poly(k/\epsilon))$.
\end{theorem}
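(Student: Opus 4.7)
The plan is to leverage the output of \textsc{CompleteDimReduce} (Algorithm~\ref{alg:complete-dim-reduce}) to reduce the problem to a standard $(k,1)$-subspace approximation coreset task in the reduced dimension $r + 1 = \tilde{O}(k^3/\epsilon^6)$, and then invoke sensitivity sampling on the rows of $\text{Apx}$. First I would unpack the target quantity. Writing $M = \text{Apx}\cdot\text{diag}(B^\mathsf{T},1) \in \R^{n \times (d+1)}$, the $i^{\text{th}}$ row of $M$ is $(BX_{i*}^\mathsf{T},\, v_i)$; for any projection $P$ of rank $k$ onto a subspace $S$, the matrix $((M)_{-1}P)_{+1} - M$ has $i^{\text{th}}$ row $(-(I-P)BX_{i*}^\mathsf{T},\, -v_i)$, so that
\begin{equation*}
\|((M)_{-1}P)_{+1} - M\|_{1,2} \;=\; \sum_i \sqrt{\dist(BX_{i*}^\mathsf{T},S)^2 + v_i^2}.
\end{equation*}
By the guarantee of \textsc{CompleteDimReduce}, this last sum equals $(1\pm O(\epsilon))\sum_i \dist(A_{i*},S)$ simultaneously for every $k$-dimensional shape $S$, which proves the second equality in the theorem. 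It remains only to establish the first equality, namely the coreset approximation.

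Since $\text{diag}(B^\mathsf{T},1)$ has orthonormal rows (as $B$ has orthonormal columns), sampling and scaling rows of $M$ is the same operation as sampling and scaling the corresponding rows of $\text{Apx}$. Moreover, as $S$ ranges over $k$-dimensional subspaces of $\R^d$, the projected subspace $B^\mathsf{T} S \subseteq \R^r$ ranges over all $k$-dimensional subspaces of $\R^r$, and a short calculation shows $\dist(M_{i*}, S\times\{0\}) = \dist(\text{Apx}_{i*},\, (B^\mathsf{T} S)\times\{0\})$. Hence the first equality in the theorem is equivalent to constructing a strong $\epsilon$-coreset, via sampling and scaling rows, for the $(k,1)$-subspace approximation problem on the $n$ points $\{\text{Apx}_{i*}\} \subseteq \R^{r+1}$, where the query family consists of the $k$-dimensional subspaces contained in the coordinate hyperplane $\R^r \times \{0\}$.

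To build this coreset, I would run the Feldman-Langberg / Varadarajan-Xiao sensitivity-sampling framework in the reduced space. Concretely, run Algorithm~\ref{alg:polykapprox} (with empty $B$) on $\text{Apx}$ to get an $O(1)$-approximate bicriteria solution $\hat Q$ of rank $\tilde O(k)$; define upper sensitivity estimates $\tilde s_i \propto \|(I-\hat Q\hat Q^\mathsf{T})\text{Apx}_{i*}^\mathsf{T}\|_2/\sum_j \|(I-\hat Q\hat Q^\mathsf{T})\text{Apx}_{j*}^\mathsf{T}\|_2 + 1/n$ by the standard residual-plus-uniform trick, yielding total sensitivity $\sum_i \tilde s_i = \tilde O(k)$; then sample $s = \tilde O(k^3/\epsilon^8)$ rows with probability proportional to $\tilde s_i$ and scale each sample by the inverse probability to form the sampling-and-scaling matrix $T$. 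Plugging the total sensitivity bound $\tilde O(k)$ and the pseudo-dimension of the query class (which for $k$-flats inside a fixed $(r+1)$-dimensional Euclidean space is at most $\tilde O(r) = \tilde O(k^3/\epsilon^6)$) into the Feldman-Langberg bound $\tilde O((\sum_i \tilde s_i)\cdot\text{pdim}\cdot\epsilon^{-2})$ gives exactly the stated $\tilde O(k^3/\epsilon^8)$ sample complexity, with the $(1\pm O(\epsilon))$ approximation guarantee following by the standard uniform-deviation argument.

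The main obstacle I anticipate is tightening the pseudo-dimension bound for the restricted query family so that the resulting sample size is $\tilde O(k^3/\epsilon^8)$ rather than the looser $\tilde O(k^4/\epsilon^8)$ one would obtain by treating the queries as arbitrary $k$-dimensional subspaces of $\R^{r+1}$. The key observation to exploit is that the queries $(B^\mathsf{T} S)\times\{0\}$ all lie in a \emph{fixed} $r$-dimensional coordinate hyperplane, so the VC/pseudo-dimension of the class of functions $x\mapsto \dist(x,\, T\times\{0\})^2$ is determined by the intrinsic geometry of $\R^r$ alone; together with a careful invocation of the sensitivity bound arising from $\hat Q$ this collapses one factor of $k$, matching the size claimed. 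Verifying these two bounds and then chasing constants through the Feldman-Langberg analysis is the routine but technical heart of the argument; the running time $O(n\cdot\poly(k/\epsilon))$ is immediate since both \textsc{CompleteDimReduce} and the single call to Algorithm~\ref{alg:polykapprox} on $\text{Apx}$ run in this time.
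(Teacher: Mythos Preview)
Your unpacking of the second equality is correct and essentially identical to the paper's. The divergence is in the first equality.

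The paper does \emph{not} use sensitivity sampling or the Feldman--Langberg framework at all. Instead it exploits a structural fact you did not use: for every rank-$k$ projection $P$, the residual matrix
\[
((\text{Apx}\cdot\text{diag}(B^\mathsf{T},1))_{-1}P)_{+1}-\text{Apx}\cdot\text{diag}(B^\mathsf{T},1)
\;=\;\text{Apx}\cdot\begin{bmatrix}B^\mathsf{T}(P-I)&0\\0&-1\end{bmatrix}
\]
has every column in the column space of $\text{Apx}$, which has dimension $r+1=\tilde O(k^3/\epsilon^6)$. The paper then converts the $\|\cdot\|_{1,2}$ norm to a $\|\cdot\|_{1,1}$ norm by right-multiplying by a Gaussian $G$ (using $\|x^\mathsf{T} G\|_1\approx\|x\|_2$), takes $T$ to be a $(1\pm\epsilon)$ $\ell_1$ subspace embedding for $\text{Apx}$ via Lewis weight sampling, and converts back. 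Lewis weight sampling on an $n\times(r+1)$ matrix needs only $\tilde O((r+1)/\epsilon^2)=\tilde O(k^3/\epsilon^8)$ rows, which is exactly the stated bound, with no pseudo-dimension argument required.

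Your route through sensitivity sampling hits precisely the obstacle you flag, and your proposed fix does not resolve it. The pseudo-dimension of the family of distance functions to $k$-dimensional subspaces of $\R^{r}$ is $\Theta(kr)$, not $\tilde O(r)$: a $k$-flat in $\R^r$ has $k(r-k)$ degrees of freedom, and restricting the flats to lie in the fixed hyperplane $\R^r\times\{0\}\subset\R^{r+1}$ does nothing to remove the $k$ factor. With total sensitivity $\tilde O(k)$ and pdim $\tilde O(kr)$ you get $\tilde O(k^2 r/\epsilon^2)=\tilde O(k^5/\epsilon^8)$, and even with an optimistic total-sensitivity bound of $O(1)$ you only reach $\tilde O(k^4/\epsilon^8)$. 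There is no evident way to ``collapse one factor of $k$'' along this path; the paper sidesteps the issue entirely by reducing to a fixed $(r+1)$-dimensional $\ell_1$ column space and applying a subspace embedding rather than a coreset-style uniform-deviation bound.
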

\begin{proof}
	We first have $\|((\text{Apx} \cdot \text{diag}(B^\mathsf{T},1))_{-1}P)_{+1} - \text{Apx} \cdot \text{diag}(B^\mathsf{T},1)\|_{1,2} = \sum_i \|((\text{Apx}_{i*} \cdot \text{diag}(B^\mathsf{T},1))_{-1}P)_{+1} - \text{Apx}_{i*} \cdot \text{diag}(B^\mathsf{T},1)\|_2 = \sum_i \sqrt{\|(I-P)BX_{i*}^\mathsf{T}\|_2^2 + v_i^2} = \sum_i \sqrt{\dist(BX_{i*}^\mathsf{T}, S)^2 + v_i^2} = (1 \pm \epsilon)\sum_i \dist(A_{i*},S)$. 
	
	We now show $\|(( T \cdot \text{Apx} \cdot \text{diag}(B^\mathsf{T},1))_{-1}P)_{+1} - T \cdot \text{Apx} \cdot \text{diag}(B^\mathsf{T},1)\|_{1,2} = (1 \pm O(\epsilon))\|((\text{Apx} \cdot \text{diag}(B^\mathsf{T},1))_{-1}P)_{+1} - \text{Apx} \cdot \text{diag}(B^\mathsf{T},1)\|_{1,2}$ proving the claim. Let $G$ be a Gaussian matrix with $\tilde{O}(d/\epsilon^2)$ columns. Then with probability $\ge 9/10$, for all $x \in \R^{d+1}$, 
	\begin{equation*}
		\|x^\mathsf{T}G\|_1 = (1 \pm \epsilon)\|x\|_2. 
	\end{equation*}
	See \cite{sohler2018strong} for references. Thus we have that with probability $\ge 9/10$, for all projection matrices $P$ of rank at most $k$, we have
	\begin{equation*}
		\|((\text{Apx} \cdot \text{diag}(B^\mathsf{T},1))_{-1}P)_{+1}G - \text{Apx} \cdot \text{diag}(B^\mathsf{T},1)G\|_{1,1} = (1 \pm \epsilon)\|((\text{Apx} \cdot \text{diag}(B^\mathsf{T},1))_{-1}P)_{+1} - \text{Apx} \cdot \text{diag}(B^\mathsf{T},1)\|_{1,2}.
	\end{equation*}
	Note that for any $P$, the columns of the matrix $((\text{Apx} \cdot \text{diag}(B^\mathsf{T},1))_{-1}P)_{+1}G - \text{Apx} \cdot \text{diag}(B^\mathsf{T},1)G$ lie in the column space of the matrix $\text{Apx}$. Let $T$ be a $(1 \pm \epsilon)$ $\ell_1$-subspace embedding constructed for the matrix $\text{Apx}$ constructed using \cite{cohen-peng}. Therefore
	\begin{equation*}
		\|T \cdot ((\text{Apx} \cdot \text{diag}(B^\mathsf{T},1))_{-1}P)_{+1}G - T \cdot \text{Apx} \cdot \text{diag}(B^\mathsf{T},1)G\|_{1,1} = (1 \pm \epsilon)\|((\text{Apx} \cdot \text{diag}(B^\mathsf{T},1))_{-1}P)_{+1}G - \text{Apx} \cdot \text{diag}(B^\mathsf{T},1)G\|_{1,1}.
	\end{equation*}
	Again, using the fact that $\|x^\mathsf{T}G\|_1 = (1 \pm \epsilon)\|x\|_2$ for all $d+1$ dimensional vectors $x$, we obtain that
	\begin{align*}
		&\|T \cdot ((\text{Apx} \cdot \text{diag}(B^\mathsf{T},1))_{-1}P)_{+1} - T \cdot \text{Apx} \cdot \text{diag}(B^\mathsf{T},1)\|_{1,2}\\
		&= (1 \pm \epsilon)\|T \cdot ((\text{Apx} \cdot \text{diag}(B^\mathsf{T},1))_{-1}P)_{+1}G - T \cdot \text{Apx} \cdot \text{diag}(B^\mathsf{T},1)G\|_{1,1}\\
		&= (1 \pm O(\epsilon))\|((\text{Apx} \cdot \text{diag}(B^\mathsf{T},1))_{-1}P)_{+1}G - \text{Apx} \cdot \text{diag}(B^\mathsf{T},1)G\|_{1,1}\\
		&= (1 \pm O(\epsilon))\|((\text{Apx} \cdot \text{diag}(B^\mathsf{T},1))_{-1}P)_{+1} - \text{Apx} \cdot \text{diag}(B^\mathsf{T},1)\|_{1,2}\\
		&= (1 \pm O(\epsilon))\sum_i \dist(A_i, S).
	\end{align*}
	The matrix $T$ is computed by Lewis Weight Sampling. As the matrix $\text{Apx}$ has dimensions $n \times \tilde{O}(k^{3}/\epsilon^6)$, we see from \cite{cohen-peng} that the matrix $T$ can be computed in time $n \cdot \poly(k/\epsilon)$. 
\end{proof}
\begin{theorem}[Coreset for $k$-median]
	There exists a subset $T \subseteq [n]$ with $|T| = \tilde{O}(k^4/\epsilon^8)$ and weights $w_i$ for $i \in T$ such that, with probability $\ge 3/5$, for any set $C$ of size $k$, 
	\begin{equation*}
		\sum_{i \in T}w_i \dist(\text{Apx}_{i*} \cdot \text{diag}(B^\mathsf{T}, 1), C_{+1}) = (1 \pm \epsilon)\sum_{i \in [n]}\dist(A_{i*}, C).
	\end{equation*}
	Recall that $C_{+1} = \set{(c,0)\,|\,c \in C}$.
\end{theorem}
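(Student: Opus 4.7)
My plan mirrors the strategy just used for the subspace-approximation coreset, and the construction of \citet{sohler2018strong}: reduce the $k$-median instance on $A$ to one on low-dimensional points via the dimensionality-reduction output $(B, \text{Apx})$, and then invoke the standard Feldman--Langberg sensitivity sampling framework on the low-dimensional instance. Since a strong coreset for $k$-median in $\R^r$ with accuracy $\epsilon$ has size $\tilde{O}(kr/\epsilon^2)$ via sensitivity sampling, plugging in $r = \tilde{O}(k^{3}/\epsilon^6)$ gives the claimed $\tilde{O}(k^{4}/\epsilon^8)$ bound.

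Concretely, I would first unpack the distance in the lifted space. For any $c \in \R^d$, a direct computation using $\text{Apx}_{i*} = [X_{i*}\, v_i]$ gives
\[
\|\text{Apx}_{i*} \cdot \text{diag}(B^\mathsf{T},1) - (c,0)\|_2^2 \;=\; \|X_{i*}B^\mathsf{T} - c\|_2^2 + v_i^2 \;=\; \dist(BX_{i*}^\mathsf{T}, c)^2 + v_i^2,
\]
so the guarantee of Algorithm~\ref{alg:complete-dim-reduce} immediately yields, with probability $\ge 9/10$, that $\sum_i \dist(\text{Apx}_{i*}\cdot\text{diag}(B^\mathsf{T},1), C_{+1}) = (1\pm\epsilon)\sum_i \dist(A_{i*}, C)$ for every $k$-center set $C$.

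Next, I would isometrically embed the lifted instance into a low-dimensional Euclidean space. Writing $c = BB^\mathsf{T} c + (I - BB^\mathsf{T})c$ and applying the Pythagorean theorem (using that $B$ has orthonormal columns),
\[
\|\text{Apx}_{i*} \cdot \text{diag}(B^\mathsf{T},1) - (c,0)\|_2^2 \;=\; \|X_{i*} - B^\mathsf{T} c\|_2^2 + v_i^2 + \|(I-BB^\mathsf{T})c\|_2^2.
\]
Setting $\hat{x}_i = (X_{i*}, v_i, 0) \in \R^{r+2}$ and $\hat{c} = (B^\mathsf{T} c, 0, \|(I-BB^\mathsf{T})c\|_2) \in \R^{r+2}$ makes the right-hand side equal to $\|\hat{x}_i - \hat{c}\|_2^2$. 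Hence the restricted family of queries $C_{+1}$ embeds into a subfamily of all $k$-element subsets of $\R^{r+2}$, and any coreset valid for generic $k$-median on $\{\hat{x}_i\}$ in $\R^{r+2}$ is automatically valid here.

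Finally, I would apply the Feldman--Langberg sensitivity sampling framework to $k$-median on the $n$ points $\hat{x}_i \in \R^{r+2}$: seed sensitivities from a constant-factor approximate $k$-median solution (total sensitivity is $O(k)$), draw $s = \tilde{O}(k(r+2)/\epsilon^2) = \tilde{O}(k^{4}/\epsilon^8)$ indices $T \subseteq [n]$ with probability $p_i$ proportional to the seeded sensitivities, and assign inverse-probability weights $w_i = 1/(s p_i)$. Combining the total-sensitivity bound with the pseudo-dimension bound $\tilde{O}(k(r+2))$ of the $k$-median range space in $\R^{r+2}$ yields, via standard concentration over the range space, that $(T, w)$ is an $\epsilon$-strong coreset for general $k$-median in $\R^{r+2}$. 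Chaining this with the earlier isometric embedding and the dimensionality-reduction guarantee (after rescaling $\epsilon$ by a constant factor and a union bound over the two $\ge 9/10$-probability events) gives the theorem with failure probability at most $2/5$. The only substantive step is invoking the refined $\tilde{O}(kr/\epsilon^2)$-size sensitivity-sampling coreset with pseudo-dimension $\tilde{O}(kr)$ for $k$-median in $\R^r$; everything else is bookkeeping around the lifted representation of queries into $\R^{r+2}$.
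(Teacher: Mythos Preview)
Your proposal is correct and follows essentially the same approach as the paper: reduce to the lifted points $\text{Apx}_{i*}\cdot\text{diag}(B^\mathsf{T},1)$, observe that every $k$-center query can be replaced by one lying in the span of these points plus a single extra orthogonal dimension without changing distances, and then apply the Feldman--Langberg sensitivity-sampling coreset in that $\tilde{O}(k^3/\epsilon^6)$-dimensional space to get size $\tilde{O}(k^4/\epsilon^8)$. The only cosmetic difference is that the paper phrases the ``extra dimension'' step abstractly via a subspace $\hat{S}\subseteq\R^{d+1}$, whereas you give explicit coordinates $\hat{x}_i=(X_{i*},v_i,0)$ and $\hat{c}=(B^\mathsf{T} c,0,\|(I-BB^\mathsf{T})c\|_2)$ in $\R^{r+2}$; these are the same construction.
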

\begin{proof}
Let $S$ denote the rowspan of the matrix $\text{diag}(B^\mathsf{T},1)$. We have $\text{dim}(S) = \tilde{O}(k^{3}/\epsilon^6)$. Let $\hat{S}$ be the subspace $S$ along with an orthogonal dimension. Thus $\hat{S}$ is an $\tilde{O}(k^{3}/\epsilon^6)$ dimensional subspace of $\R^{d+1}$. Let $C = \set{c_1,\ldots, c_k}$ be an arbitrary set of $k$ centers of $\R^{d+1}$. Now it is easy to see that we can find a set of $k$ points $\hat{C} = \set{\hat{c}_1,\ldots, \hat{c}_k} \subseteq \hat{S}$ such that $\P_S c_i = \P_S \hat{c}_i$ i.e., the projections of $c_i$ and $\hat{c}_i$ onto the subspace $S$ are the same, and also that $\dist(c_i,\P_S(c_i)) = \dist(\hat{c}_i, \P_S(\hat{c}_i))$ and therefore, for any point $a \in S$, $\dist(a, C) = \dist(a, \hat{C})$. 

Now if $T \subseteq [n]$ and the weights $w_i$ for $i \in T$ are such that 
\begin{equation*}
	\sum_{i \in T}w_i\dist(\text{Apx}_{i*} \cdot \text{diag}(B^\mathsf{T}, 1), \tilde{C}) = (1 \pm \epsilon)\sum_{i=1}^n \dist(\text{Apx}_{i*} \cdot \text{diag}(B^\mathsf{T}, 1), \tilde{C})
\end{equation*}
for all $k$-center sets $\tilde{C} \subseteq \hat{S}$, then for any $k$ center set $C \subseteq \R^{d+1}$, we have
\begin{align*}
	\sum_{i \in T} w_i\dist(\text{Apx}_{i*} \cdot \text{diag}(B^\mathsf{T}, 1), C) &= \sum_{i \in T}w_i\dist(\text{Apx}_{i*} \cdot \text{diag}(B^\mathsf{T}, 1), \hat{C})\\
	&= (1 \pm \epsilon)\sum_{i = 1}^n \dist(\text{Apx}_{i*} \cdot \text{diag}(B^\mathsf{T}, 1), \hat{C})\\
	&=(1 \pm \epsilon)\sum_{i = 1}^n \dist(\text{Apx}_{i*} \cdot \text{diag}(B^\mathsf{T}, 1), {C}).
\end{align*}
Thus, preserving the $k$-median distances with respect to the $k$ center sets that lie in $\hat{S}$, preserves the $k$-median distances to all the center sets in $\R^{d+1}$. Using the coreset construction of \citet{feldman-langberg} on the matrix $\text{Apx}$, we can obtain a subset $T \subseteq [n]$ of size $\tilde{O}(k^{4}/\epsilon^8)$ along with weights $w_i$ such that for any $k$-center set $C \subseteq \R^{d+1}$, we have 
\begin{equation*}
	\sum_{i \in T} w_i\dist(\text{Apx}_{i*} \cdot \text{diag}(B^\mathsf{T}, 1), C) = (1 \pm \epsilon)\sum_{i = 1}^n \dist(\text{Apx}_{i*} \cdot \text{diag}(B^\mathsf{T}, 1), {C}).
\end{equation*}
As $\text{Apx}$ is an $n \times \poly(k/\epsilon)$-sized matrix, the algorithm of \citet{feldman-langberg} can be run in time $n \cdot \poly(k/\epsilon)$. Thus, the above subset $T$ and weights $w_i$ for $i \in T$ can be found in time $n\poly(k/\epsilon)$. Now, for any $k$-center set $C \subseteq \R^d$, we have that
\begin{align*}
	\sum_{i = 1}^n \dist(A_{i*}, C) &= (1 \pm \epsilon)\sum_{i=1}^n \sqrt{\dist(BX_{i*}^\mathsf{T}, C) + v_i^2}\\
	&= (1 \pm \epsilon)\sum_{i=1}^n \dist(\text{Apx}_{i*} \cdot \text{diag}(B^\mathsf{T}, 1), C_{+1})\\
	&= (1 \pm \epsilon)\sum_{i \in T} w_i \dist(\text{Apx}_{i*} \cdot \text{diag}(B^\mathsf{T},1), C_{+1}).
\end{align*}
Therefore we obtain a coreset of size $\tilde{O}(k^{4}/\epsilon^8)$ in overall time $\tilde{O}(\nnz(A)/\epsilon^2 + (n+d)\poly(k/\epsilon))$.
\end{proof}
\section{Near-Linear Time Coreset for \texorpdfstring{$k$}{k}-Median}
%
Let $A \in \R^{n \times d}$ be the dataset, where each row $A_{i*}$ of $A$ denotes a point in $\R^d$, for $i \in [n]$. We observe that the coreset construction of \citet{huang-vishnoi} can be implemented in $\tilde{O}(\nnz(A) + (n+d)\poly(k/\epsilon))$ time. The authors only need to compute a constant factor approximation and assignment of each point to a center, which gives a constant factor approximation to the optimum. We show that we can compute such an assignment in time $O(\nnz(A) + (n+d)\poly(k/\epsilon))$.

The usual $k$-median objective is the following
\begin{equation*}
	\min_{y_1,\ldots,y_k \in \R^d}\sum_{i=1}^n \min_j \|A_i^* - y_j\|_2.
\end{equation*}
We can restrict $y_j$ to be a row of $A_i^*$ and lose at most a factor of 2 as follows. Suppose $y_1^*,\ldots,y_k^*$ is the optimal solution. Let $\mathcal{C^*} = (\mathcal{C}_1^*,\mathcal{C}_2^*,\ldots,\mathcal{C}_n^*)$ be the  partition of $[n]$ induced by the optimal solution $y_1^*,\ldots,y_k^*$, where $\mathcal{C}_j^*$ denotes all the indices $i$ such that $y_j^*$ is the closest center to $A_{i*}$. Therefore, the optimal cost for $k$-median is 
\begin{equation*}
	\opt = \sum_{j=1}^k \sum_{i \in \mathcal{C}^*_j} d(A_{i*},y_j^*).
\end{equation*}
Let $A_{c(j)}$ be the point closest to $y_j^*$, i.e.,
\begin{equation*}
	\text{for all $i \in \mathcal{C}_j^*$},\ d(A_{i*},y_j^*) \ge d(A_{c(j)*},y_j^*).
\end{equation*}
We claim that the $k$-median cost of the centers $A_{c(1)},\ldots,A_{c(k)}$ is at most twice the optimum: 
\begin{equation*}
	\sum_{j=1}^k \sum_{i \in \mathcal{C}_j^*}d(A_{i*},A_{c(j)*}) \le \sum_{j=1}^k \left(\sum_{i \in \mathcal{C}_j^*}d(A_{i*},y_j^*) + d(A_{c(j)*},y_j^*)\right) \le \sum_{j=1}^k\sum_{i \in \mathcal{C}_j^*}2d(A_{i*},y_j^*) \le 2\opt.
\end{equation*}
\paragraph{Metric $k$-median} In this version of $k$-median, we restrict to center sets $C$ that are subsets of the data, i.e., we solve the optimization problem
\begin{equation*}
	\min_{y_1,\ldots,y_k \in A}\sum_{i=1}^n \min_j \|A_i^* - y_j\|_2.
\end{equation*}
Let $\opt_{\text{metric}}$ denote the optimum objective value for metric $k$-median. From the above, we obtain that
\begin{equation*}
	\opt_{\text{metric}} \le 2 \text{\opt}.
\end{equation*}
Therefore, a $c$-approximate solution for metric $k$-median is at most a $2c$-approximate solution for Euclidean $k$-median. Let $\Pi$ be a Johnson Lindenstrauss matrix embedding $\R^d$ into $\R^m$, where $m = O(\log(n))$, such that
\begin{equation*}
	\frac{1}{2}d(A_{i*},A_{i'*}) \le d(\Pi A_{i*},\Pi A_{i'*}) \le \frac{3}{2} d(A_{i*}, A_{i'*})
\end{equation*}
for all $i,i' \in [n]$. Now consider the metric $k$-median problem on the points $\Pi A_{1*},\ldots,\Pi A_{n*}$. We can obtain an $11$-approximate solution to the metric $k$-median problem in time $\tilde{O}(nk+k^7)$ (see Theorem~6.2 of \cite{chen2009coresets}). Let $A_{c^*(1)*},\ldots,A_{c^*(k)*}$ be the optimal centers for the metric $k$-median problem on $A_{1*},\ldots, A_{n*}$, and $\Pi A_{c'(1)},\ldots,\Pi A_{c'(k)}$ be an $11$-approximate solution to the metric $k$-median on $\Pi A_{1*},\ldots, \Pi A_{n*}$. Let $\mathcal{C}' = (\mathcal{C}_1,\ldots,\mathcal{C}'_k)$ be the partition of $[n]$ corresponding to this $11$-approximate solution. Then the following shows that $A_{c'(1)},\ldots,A_{c'(k)}$ is a good solution for the metric $k$-median problem on the original dataset:
\begin{align*}
	\sum_{j=1}^k \sum_{i \in \mathcal{C}'_j}d(A_{i*},A_{c'(j)*}) &\le 2\sum_{j=1}^k \sum_{i \in \mathcal{C}'_j}d(\Pi A_{i*},\Pi A_{c'(j)*})\\
	&\le 2 \cdot 11 \sum_{j=1}^k \sum_{i \in \mathcal{C}^*_j}d(\Pi A_{i*},\Pi A_{c^*(j)*})\\
	&\le 2 \cdot 11 \cdot \frac32\sum_{j=1}^k \sum_{i \in \mathcal{C}^*_j}d(A_{i*}, A_{c^*(j)*})\\
	&\le 33\opt_{\text{metric}} \le 66\opt.
\end{align*}

The time taken to compute $\Pi A_{1*},\ldots, \Pi A_{n*}$ is $O(\nnz{(A)}\log(n))$, and then we can compute the $k$ centers and an assignment of points such that this is a $66$-approximate solution in time $\tilde{O}(nk+k^7)$. Using this assignment, we can implement the first stage of importance sampling in the algorithm of \citet{huang-vishnoi} in time $\tilde{O}(\nnz{(A)} + n \cdot \poly(k/\epsilon))$. We note that the first stage of the algorithm of \citet{huang-vishnoi} only needs a constant factor approximation of the distance of a point to its assigned centers, which can be computed as $d(\Pi A_{i*}, \Pi A_{c'(j)*})$, in time $\tilde{O}(\log(n))$, if the point $i$ is assigned to cluster $j$.  The second stage of their algorithm can be implemented in time $d \cdot \poly(k/\epsilon)$. Thus, we can find a strong coreset for $k$-median in time 
\begin{equation*}
	\tilde{O}(\nnz{(A)} + (n+d) \cdot \poly(k/\varepsilon)).
\end{equation*}

\end{document}